\providecommand{\algorithmname}{Algorithm}
\theoremstyle{plain}
\newtheorem{lem}{\protect\lemmaname}
  \newenvironment{proof}[1][\proofname]{\par
    \normalfont\topsep6\p@\@plus6\p@\relax
    \trivlist
    \itemindent\parindent
    \item[\hskip\labelsep
          \scshape
      #1]\ignorespaces
  }{%
    \endtrivlist\@endpefalse
  }
  \providecommand{\proofname}{Proof}
\theoremstyle{remark}
\newtheorem{rem}{\protect\remarkname}
\theoremstyle{definition}
\newtheorem{defn}{\protect\definitionname}
\theoremstyle{plain}
\newtheorem{thm}{\protect\theoremname}
\journal{}
\date{}
\tikzset{
  every overlay node/.style={
    draw=white,anchor=north west,
  },
}
\providecommand{\definitionname}{Definition}
\providecommand{\lemmaname}{Lemma}
\providecommand{\remarkname}{Remark}
\providecommand{\theoremname}{Theorem}
\begin{document}

\begin{frontmatter}{}

\title{Structural Adaptivity of Directed Networks}

\author[ra]{Lulu~Pan }

\author[ra]{Haibin~Shao \corref{cor1}}

\author[focal]{Mehran~Mesbahi}

\author[ra]{Dewei~Li}

\author[ra]{Yugeng~Xi}

\cortext[cor1]{Corresponding author (shore@sjtu.edu.cn).}

\address[ra]{Department of Automation, Shanghai Jiao Tong University, Shanghai,
200240, China}

\address[focal]{William E. Boeing Department of Aeronautics and Astronautics, University
of Washington, Seattle, WA 98195-2400, USA}
\begin{abstract}
Network structure plays a critical role in functionality and performance
of network systems. This paper examines structural adaptivity of diffusively
coupled, directed multi-agent networks that are subject to diffusion
performance. Inspired by the observation that the link redundancy
in a network may degrade its diffusion performance, a distributed
data-driven neighbor selection framework is proposed to adaptively
adjust the network structure for improving the diffusion performance
of exogenous influence over the network. Specifically, each agent
is allowed to interact with only a specific subset of neighbors while
global reachability from exogenous influence to all agents of the
network is maintained. Both continuous-time and discrete-time directed
networks are examined. For each of the two cases, we first examine
the reachability properties encoded in the eigenvectors of perturbed
variants of graph Laplacian or SIA matrix associated with directed
networks, respectively. Then, an eigenvector-based rule for neighbor
selection is proposed to derive a reduced network, on which the diffusion
performance is enhanced. Finally, motivated by the necessity of distributed
and data-driven implementation of the neighbor selection rule, quantitative
connections between eigenvectors of the perturbed graph Laplacian
and SIA matrix and relative rate of change in agent state are established,
respectively. These connections immediately enable a data-driven inference
of the reduced neighbor set for each agent using only locally accessible
data. As an immediate extension, we further discuss the distributed
data-driven construction of directed spanning trees of directed networks
using the proposed neighbor selection framework. Numerical simulations
are provided to demonstrate the theoretical results.

\, 
\end{abstract}
\begin{keyword}
Distributed data-driven neighbor selection \sep perturbed Laplacian
eigenvector \sep leader-follower reachability \sep directed spanning
trees \sep diffusion performance. 
\end{keyword}

\end{frontmatter}{}

\section{Introduction}

Reaching consensus via diffusive inter-agent interactions is an indispensable
protocol for distributed estimation, control, optimization and learning
on large-scale multi-agent networks \citet{chung2018survey,mesbahi2010graph,nedic2020distributed,proskurnikov2017tutorial,barooah2007estimation,dorfler2013synchronization}.
The consensus problem finds its origin in examining collective animal
motions as exemplified by the Vicsek model \citet{vicsek1995novel,vicsek2012collective}.
In this venue, network connectivity, realized via each agent's interactions
with its nearest neighbors, is a fundamental graph-theoretic construct
for the functionality and performance of networked systems \citet{mesbahi2010graph}\citet{Jadbabaie2003,ren2005consensus,olfati2004consensus,olfati2007consensus}.

In \citet{Jadbabaie2003}, a comprehensive analysis of the Vicsek
model from the perspective of discrete-time dynamical systems was
presented, where network connectivity was shown to play a critical
role in reaching a consensus via local interactions. The continuous-time
counterpart has also been examined in \citet{olfati2004consensus}
for both static and dynamic networks. We note that network connectivity
conditions for consensus can be further relaxed to that the underlying
interaction network has a directed spanning tree \citet{lin2005necessary,ren2005consensus,cao2008reaching}.

Although network connectivity is critical in reaching consensus, the
convergence performance (which can be characterized by the spectrum
of the network matrices, e.g., Laplacian matrix for the continuous-time
case and SIA (stochastic, indecomposable, and aperiodic) matrix for
discrete-time case \citet{ren2005consensus,cao2008reaching,olfati2004consensus,kim2006maximizing,clark2018maximizing,mesbahi2010graph})
may vary considerably among a variety of connected networks. As such,
a notable observation is that network connectivity often exhibits
redundancies in interactions, and can hinder the diffusion performance
of networks \citet{anderson2010convergence,blondel2005convergence,duchi2012dual,nedich2015convergence,olshevsky2009convergence,shao2019relative}.
Moreover, a networked system may also suffer from performance degradation
if it cannot adapt its underlying network structure to varying task
objectives \citet{song2020network,kia2019tutorial}. Here, a relevant
question is \emph{means of designing task-oriented mechanisms that
adapt the network structure to its performance demands.}

This paper examines the\emph{ structural adaptivity} of networks,
namely, how a network can adaptively alter its structure to the variation
of task objectives, preferably in a distributed manner, to maintain
or even enhance its performance (or functionality). Specifically,
we examine structural adaptivity of networks to variations of exogenous
influence (which is employed to steer the network to the desired state
and can vary in both quantity and location); the performance metric
we are interested in here is the convergence rate of the network state
towards the exogenous influence \citet{leonard2001virtual,cao2012distributed,clark2018maximizing,xia2017analysis,pirani2016smallest}.

The notion of \emph{adaptation} has of course been extensively examined
in disciplines such as adaptive control~\citep{aastrom2013adaptive},
adaptive structures in material science \citep{wada1990adaptive,wagg2008adaptive},
and adaptive neighbor selection in wireless communication networks
\citep{ahmadian2018social,bliss2013adaptive}. However, very limited
works focus on how a networked system adapts its structure to the
variation of task objectives. Notably, adaptive protocols were introduced
to distributively manipulate the mean tracking and variance damping
measures of consensus-type networks in \citep{Airlie2013TAC}, where
performance measures were analyzed using an effective resistance analogy.

In order to examine the response of a diffusively coupled network
to exogenous influence, the leader-following paradigm turns out to
be suitable and has been widely examined in literature. In this direction,
leader-follower reachability has played a critical connectivity measure,
encoding information transmission from exogenous influence to agents
in the network \citet{cao2012distributed,ji2008containment,liu2012necessary,leonard2001virtual,jadbabaie2018minimal,alanwar2021data}.
Unfortunately, the structure of networked systems often does not exhibit
native task-oriented, efficient leader-follower reachability structures;
as such the structure can degrade the network performance. Remarkably,
the organization of animal groups in nature often exhibits dynamic
hierarchical structures for group performance \citet{zafeiris2017we}.
For instance, it has been observed that each bird in a avian swarm
interacts only with six to seven of their nearest neighbors, rather
than with all birds within its sensing radius \citep{ballerini2008interaction}.
Such observations motivate exploring suitable mechanisms to improve
the performance of networked systems using structural adaptivity.

The neighbor selection operation is a convenient means for adapting
the network structure to the exogenous influence potentially in a
distributed manner. Notice that the network structure of large-scale
network systems may not be locally accessible due to the limits of
authority or privacy-preservation concerns \citet{kia2019tutorial,lu2019control,dibaji2019systems}.
Intuitively, inter-agent dynamics can encode graph-theoretic information
on the network, and hence, may render utilizing locally accessible
data collected from the network process in order to adaptively adjust
the network structure feasible \citet{shao2017inferring,gardner2003inferring,chiuso2012bayesian,shahrampour2014topology,nabi2012network}.
Recently, a distributed neighbor selection approach has been proposed
for undirected networks for performance enhancement \citet{shao2021distributed,shao2019relative}.
However, directed networks are ubiquitous in real-world systems, where
the asymmetry in the agents' interaction makes their technical treatment
more intricate than their undirected counterpart.

\emph{Contributions.} This work proposes a distributed data-driven
neighbor selection framework to enhance the structural adaptivity
of diffusively coupled, directed multi-agent networks that are subject
to diffusion performance of exogenous influence. The main contributions
of this paper are summarized as follows.

First, distributed data-driven neighbor selection protocols for both
continuous-time and discrete-time directed networks are examined.
For each of the two cases, we first examine the leader-follower reachability
properties encoded in eigenvectors of perturbed variants of graph
Laplacian (for continuous-time case) and SIA matrix (for discrete-time
case) of the underlying directed networks, respectively. Secondly,
an eigenvector-based rule for neighbor selection is developed to construct
a reduced network from the original one, on which the convergence
rate is enhanced. Finally, as the eigenvectors are global network
variables, for the purpose of distributed implementation of the neighbor
selection rule, quantitative connections between eigenvectors of the
perturbed graph Laplacian and SIA matrix and relative rate of change
in agent state are established, respectively. Based on these connections,
data-driven inference of the reduced neighbor set for each agent,
using only local observations, becomes feasible. Moreover, for the
case of single exogenous influence anchor, we further extend the neighbor
selection framework to distributed data-driven construction of directed
spanning trees contained in directed networks. To our best knowledge,
distributed data-driven construction of spanning trees using network
data has not previously appeared in the literature.

The contribution of this work has several immediate implications.
In applications, large-scale network systems may suffer from poor
responsiveness to external control. This paper provides a novel neighbor
selection rule for cooperative tasks built on leader-following paradigm,
making it possible to enhance the responsiveness of a network to exogenous
influence by online adjustment of network structure. The graph-theoretic
results in this paper also unravels the elegant leader-follower reachability
property embedded in two categories of eigenvectors associated with
directed networks, essentially a novel monotonicity property that
was previously only investigated for Fielder vectors of undirected
networks \citet{biyikoglu2007laplacian,Fiedler1975,merris1998laplacian}.
Meanwhile, this paper further extends the application scope of the
distributed neighbor selection protocol from distributed averaging
on undirected networks to continuous-time/discrete-time directed networks.
In fact, the main results in this paper can also be extended signed
directed networks \citet{altafini2013consensus}.

The remainder of the paper is organized as follows. Notation and preliminaries
are presented in $\mathsection$\ref{sec:preliminaries}. We motivate
this work in $\mathsection$\ref{sec:A-Motivational-Example}, followed
by the data-driven distributed neighbor selection algorithm for continuous-time
and discrete-time leader-follower networks in $\mathsection$\ref{sec:continuous-time}
and $\mathsection$\ref{sec:discrete-time}, respectively. An algorithm
for distributed construction of spanning tree of directed networks
is presented in $\mathsection$\ref{sec:spanning-tree}. Simulation
results are provided in $\mathsection$\ref{sec:Simulations}, followed
by concluding remarks in $\mathsection$\ref{sec:Conclusion-Remarks}.

\section{Preliminaries \label{sec:preliminaries}}

In this section we provide an overview on the notation and preliminary
constructs used subsequently in the paper.

\subsection{Notation}

We let $\mathbb{R}$ and $\mathbb{Z}_{+}$ denote the set of real
numbers and positive integers, respectively. Denote the set $\left\{ 1,2,\ldots,n\right\} $
as $\underline{n}$, where $n\in\mathbb{Z}_{+}$; $\mathds{1}_{n}$
and $0_{n\times m}$ denote $n\times1$ vector and $n\times m$ matrix
of all ones and all zeros, respectively. Let $I_{n}$ denote the $n\times n$
identity matrix and $\boldsymbol{e}_{j}$ denote the $j$th column
of $I_{n}$ where $j\in\underline{n}$. Let $\text{{\bf Re}}(\cdot)$
denote the real part of a complex number. The $i$th smallest eigenvalue\footnote{The complex eigenvalues are ordered in a sense of their real parts.}
and the corresponding normalized eigenvector of a matrix $M\in\mathbb{R}^{n\times n}$
are signified by $\lambda_{i}(M)$ and $\boldsymbol{v}_{i}(M)$, respectively.
We write $M\ge0$ when $M$ is a non-negative matrix. The entry located
at the $i$th row and $j$th column in a matrix $M\in\mathbb{R}^{n\times m}$
is denoted by $[M]_{ij}$ and the $i$th entry of a vector $\boldsymbol{x}\in\mathbb{R}^{n}$
by $[\boldsymbol{x}]_{i}$. Let $\boldsymbol{x}_{ij}$ denote $\frac{[\boldsymbol{x}]_{i}}{[\boldsymbol{x}]_{j}}$
for a vector $\boldsymbol{x}\in\mathbb{R}^{n}$. The Euclidean norm
of a vector $\boldsymbol{x}\in\mathbb{R}^{n}$ is designated by $\|\boldsymbol{x}\|=(\boldsymbol{x}^{\top}\boldsymbol{x})^{\frac{1}{2}}$
. A vector $\boldsymbol{x}\in\mathbb{R}^{n}$ is positive if $[\boldsymbol{x}]_{i}>0$
for all $i\in\underline{n}$. The spectral radius of a matrix $M$
is denoted by $\text{\ensuremath{\rho}}(M)$. Let $\phi(\eta)=(\boldsymbol{e}_{i_{1}},\cdots,\boldsymbol{e}_{i_{s}})^{\top}\in\mathbb{R}^{s\times n}$
denote selection matrix of a set $\eta=\left\{ i_{1},\ldots,i_{s}\right\} \subset\underline{n}$.
The number of $k$-combinations of $\left\{ 1,\cdots,n\right\} $
($n$ choose $k$) is denoted by $\left(\begin{array}{c}
n\\
k
\end{array}\right)$.

\subsection{Graph Theory}

Let $\mathcal{G}=(\mathcal{V},\mathcal{E},W)$ denote a graph with
the node set $\mathcal{V}=\left\{ 1,2,\ldots,n\right\} $ and edge
set $\mathcal{E\subset V\times V}$. The adjacency matrix $W=(w_{ij})\in\mathbb{R}^{n\times n}$
is such that the edge weight between agents $i$ and $j$ satisfies
$w_{ij}>0$ if and only if $(i,j)\in\mathcal{E}$ and $w_{ij}=0$
otherwise. A graph $\mathcal{G}$ is undirected if $(i,j)\in\mathcal{E}$
if and only if $(j,i)\in\mathcal{E}$; otherwise $\mathcal{G}$ is
directed. Denote the neighbor set of an agent $i$ as $\mathcal{N}_{i}=\left\{ j\in\mathcal{V}|(i,j)\in\mathcal{E}\right\} $
and the in-degree of agent $i$ as $d_{i}=\sum_{j\in\mathcal{N}_{i}}w_{ij}$.
We further denote the in-degree matrix of $\mathcal{G}$ as $D=\text{diag}\left\{ d_{1},\ldots,d_{n}\right\} $.
A directed path from $j\in\mathcal{V}$ to $i\in\mathcal{V}$ in a
graph $\mathcal{G}$ is a concatenation of edges $\mathcal{P}_{i,j}=\{(i,i_{1}),(i_{1},i_{2}),\cdots,(i_{p-1},j)\}$,
where all nodes $i,i_{1},\ldots,i_{p-1},j\in\mathcal{V}$ are distinct.
A node $i\in\mathcal{V}$ is reachable from a node $j\in\mathcal{V}$
in $\mathcal{G}$ if there exists a directed path $\mathcal{P}_{i,j}$
in $\mathcal{G}$. A directed graph is strongly connected if each
pair of nodes in $\mathcal{G}$ are reachable from each other. A spanning
tree of a directed graph is a tree that contains all nodes in $\mathcal{G}$,
denoted by $\mathcal{ST}(\mathcal{G})$. A subgraph $\mathcal{\tilde{G}}=(\tilde{\mathcal{V}},\tilde{\mathcal{E}})$
of a graph $\mathcal{G}=(\mathcal{V},\mathcal{E})$ is a graph such
that $\tilde{\mathcal{V}}\subset\mathcal{V}$ and $\tilde{\mathcal{E}}\subset\mathcal{E}$.
The subgraph obtained by removing a node set $\mathcal{V}^{\prime}\subset\mathcal{V}$
and all incident edges from a graph $\mathcal{G}=(\mathcal{V},\mathcal{E})$
is denoted by $\mathcal{G}-\mathcal{V}^{\prime}.$ Let $\mathcal{S}\subset\mathcal{V}$
be any subset of nodes in $\mathcal{G}=(\mathcal{V},\mathcal{E})$.
Then the induced subgraph $\mathcal{G}(\mathcal{S})$ is the graph
whose node set is $\mathcal{S}$ and whose edge set consists of all
of the edges incident to nodes in $\mathcal{S}$.

\subsection{Leader-follower Networks}

We shall employ the leader-follower paradigm for characterization
of how the exogenous influence been exerted on a network. Consider
a leader-follower network consisting of $n\in\mathbb{Z}_{+}$ agents
whose interaction structure is characterized by a directed graph $\mathcal{G}=(\mathcal{V},\mathcal{E},W)$.
The state of each agent $i\in\mathcal{V}$ is denoted by $\boldsymbol{x}_{i}(t)\in\mathbb{R}$
or$\boldsymbol{x}_{i}(k)\in\mathbb{R}$ where $t$ and $k$ are time
indices\footnote{The main results in this paper can be immediately extended to multi-agent
networks with vector-valued agent states.}. In a leader-follower network, the leaders, denoted by $\mathcal{V}_{\text{L}}\subset\mathcal{V}$,
can be directly influenced by the external input signals, and the
remaining agents are referred to as followers, denoted by $\mathcal{V}_{\text{F}}=\mathcal{V}\setminus\mathcal{V}_{\text{L}}$\footnote{Here, leader agents act as anchors that external inputs influence
the network.}. In this paper, the set of external inputs is denoted by $\mathcal{U}=\left\{ u_{1},\dots,u_{m}\right\} $,
where $u_{l}\in\mathbb{R}$, $l\in\underline{m}$ $(m\le n)$. The
external input $\boldsymbol{u}$ is homogeneous if $u_{i}=u_{j}$
for all $i\ne j\in\underline{m}$ and heterogeneous otherwise. In
this setup, it is assumed that each leader is at most influenced by
one external input and the external input $\boldsymbol{u}$ is homogeneous.
The influence magnitude of the external inputs on an agent $i$ is
characterized by a scalar-valued weight $\delta_{i}\ge0$ such that
$\delta_{i}>0$ if agent $i$ is a leader and $\delta_{i}=0$ if agent
$i$ is a follower. Let $\boldsymbol{\delta}=\left(\delta_{1},\ldots,\delta_{n}\right)^{\top}$
denote the structure of external inputs; the set of leaders can subsequently
be defined as $\mathcal{V}_{\text{L}}=\left\{ i\in\mathcal{V}\thinspace|\thinspace\delta_{i}>0\right\} $.

\section{A Motivational Example \label{sec:A-Motivational-Example}}

Here, we provide a motivational example for this work. Consider a
diffusively coupled multi-agent system (to be defined subsequently)
on a strongly connected\textcolor{blue}{{} }$3$-node path graph $\mathcal{G}$
shown in Figure \ref{fig:motivation}. Consider two cases of leader
selection in Figure \ref{fig:motivation}, say $\mathcal{V}_{\mathrm{L}}=\left\{ 1\right\} $
(Figure \ref{fig:motivation}a) and $\mathcal{V}_{\mathrm{L}}=\left\{ 2\right\} $
(Figure \ref{fig:motivation}b), and the respective diffusion-efficient
subgraphs shown in each right panel. In this context, it is relevant
to ask if the network can adaptively vary according to the variations
of leader set, in order to improve information diffusion. The answer
to this inquiry, as shown in the paper, is affirmative. In particular,
we will show that each agent can use locally observable information
to adjust its in-degree neighbors, referred to as neighbor selection
to improve the network performance; in this manner, the entire network
exhibits adaptation to assume a diffusion-efficient structure. Moreover,
we also show that the neighbor selection algorithm can be used to
construct directed spanning trees in directed networks in a data-driven
manner.

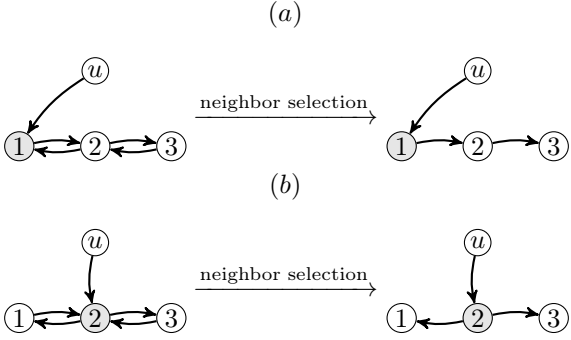
\begin{figure}
\begin{centering}
\begin{tikzpicture}[scale=0.5, >=stealth',   pos=.8,  photon/.style={decorate,decoration={snake,post length=1mm}} ]
	\node (n1) at (0,0) [circle,inner sep= 1pt,fill=black!10,draw] {1};
	\node (n2) at (2,0) [circle,inner sep= 1pt,draw] {2};
    \node (n3) at (4,0) [circle,inner sep= 1pt,draw] {3};

	 \node (u1) at (2,2) [circle,inner sep= 1pt,draw] {$u$};
	 \node (a) at (7,3.5) {$(a)$};
     \node (arrow) at (7,1) {$\xrightarrow{\text{neighbor selection}}$}; 

	\path[]
	(n1) [->,thick, bend left=12] edge node[midway, above] {} (n2); 

	\path[]
	(n2) [->,thick, bend left=12] edge node[midway, above] {} (n1); 

	\path[]
	(n3) [->,thick, bend left=12] edge node[midway, above] {} (n2); 

	\path[]
	(n2) [->,thick, bend left=12] edge node[midway, above] {} (n3); 

	\path[]
	(u1) [->,thick, bend right=12] edge node[midway, above] {} (n1); 

\end{tikzpicture}\begin{tikzpicture}[scale=0.5, >=stealth',   pos=.8,  photon/.style={decorate,decoration={snake,post length=1mm}} ]
	\node (n1) at (0,0) [circle,inner sep= 1pt,fill=black!10,draw] {1};
	\node (n2) at (2,0) [circle,inner sep= 1pt,draw] {2};
    \node (n3) at (4,0) [circle,inner sep= 1pt,draw] {3};

	 \node (u1) at (2,2) [circle,inner sep= 1pt,draw] {$u$};

	\path[]
	(n1) [->,thick, bend left=12] edge node[midway, above] {} (n2); 

	\path[]
	(n2) [->,thick, bend left=12] edge node[midway, above] {} (n3); 

	\path[]
	(u1) [->,thick, bend right=12] edge node[midway, above] {} (n1); 

\end{tikzpicture}\\
\begin{tikzpicture}[scale=0.5, >=stealth',   pos=.8,  photon/.style={decorate,decoration={snake,post length=1mm}} ]
	\node (n1) at (0,0) [circle,inner sep= 1pt,draw] {1};
	\node (n2) at (2,0) [circle,inner sep= 1pt,fill=black!10,draw] {2};
    \node (n3) at (4,0) [circle,inner sep= 1pt,draw] {3};

	 \node (u1) at (2,2) [circle,inner sep= 1pt,draw] {$u$};
	 \node (b) at (7,3.5) {$(b)$};
     \node (arrow) at (7,1) {$\xrightarrow{\text{neighbor selection}}$}; 

	\path[]
	(n1) [->,thick, bend left=12] edge node[midway, above] {} (n2); 

	\path[]
	(n2) [->,thick, bend left=12] edge node[midway, above] {} (n1); 

	\path[]
	(n3) [->,thick, bend left=12] edge node[midway, above] {} (n2); 

	\path[]
	(n2) [->,thick, bend left=12] edge node[midway, above] {} (n3); 

	\path[]
	(u1) [->,thick, bend right=12] edge node[midway, above] {} (n2); 

\end{tikzpicture}\begin{tikzpicture}[scale=0.5, >=stealth',   pos=.8,  photon/.style={decorate,decoration={snake,post length=1mm}} ]
	\node (n1) at (0,0) [circle,inner sep= 1pt,draw] {1};
	\node (n2) at (2,0) [circle,inner sep= 1pt,fill=black!10,draw] {2};
    \node (n3) at (4,0) [circle,inner sep= 1pt,draw] {3};

	 \node (u1) at (2,2) [circle,inner sep= 1pt,draw] {$u$};

	\path[]
	(n2) [->,thick, bend left=12] edge node[midway, above] {} (n1); 

	\path[]
	(n2) [->,thick, bend left=12] edge node[midway, above] {} (n3); 

	\path[]
	(u1) [->,thick, bend right=12] edge node[midway, above] {} (n2); 

\end{tikzpicture}
\par\end{centering}
\caption{A strongly connected $3$-node path network influenced by external
input $u$ through agent 1 and agent 2, respectively.}

\label{fig:motivation}
\end{figure}

In the following discussion, we shall make the following assumptions.

\textbf{Assumption 1.} The underlying networks of multi-agent systems
are strongly connected directed graphs.

\textbf{Assumption 2.} Each agent can only be directly influenced
by at most one external input.

\section{Continuous-time Leader-follower Networks \label{sec:continuous-time}}

In this section, a distributed data-driven neighbor selection algorithm,
based on the monotonicity of the eigenvector entries in the perturbed
Laplacian, is proposed. Subsequently, the convergence rate of the
multi-agent system on the derived reduced network, after neighbor
selection, will be examined. Furthermore, the distributed implementation
of the neighbor selection process will be discussed.

Consider the continuous-time leader-follower network (CLFN) whose
interaction protocol for each agent $i\in\mathcal{V}$ admits the
form, 
\begin{align}
\dot{x}_{i}(t) & =-\sum_{i=1}^{n}w_{ij}(x_{i}(t)-x_{j}(t))-\sum_{l=1}^{m}b_{il}(x_{i}(t)-u_{l}),\label{eq:LF-unsigned-protocol}
\end{align}
where $b_{il}=\delta_{i}\neq0$ if and only if $i$ is directly influenced
by $u_{l}$ and $b_{il}=0$ otherwise. Subsequently, the overall dynamics
of the CLFN \eqref{eq:LF-unsigned-protocol} can be characterized
as, 
\begin{equation}
\dot{\boldsymbol{x}}=-L_{B}\boldsymbol{x}+B\boldsymbol{u},\label{eq:unsigned-LF-overall}
\end{equation}
where $\boldsymbol{x}=(x_{1}(t),\dots,x_{n}(t))^{\top}\in\mathbb{R}^{n}$,
$B=(b_{il})\in\mathbb{R}^{n\times m}$, $\boldsymbol{u}=(u_{1},\dots,u_{m})^{\top}\in\mathbb{R}^{m}$
and 
\begin{equation}
L_{B}=L+\text{{\bf diag}}\left\{ \boldsymbol{\delta}\right\} ;\label{eq:LB-matrix}
\end{equation}
the latter matrix is referred to as perturbed Laplacian since it is
obtained from a perturbation on the Laplacian matrix $L=D-W$ by a
diagonal matrix. The CLFN \eqref{eq:LF-unsigned-protocol} is also
known as Taylor's model in social network analysis \citep{proskurnikov2018tutorial}. 

We shall first examine the leader-follower reachability (\emph{LF-reachability})
property of the CLFN \eqref{eq:LF-unsigned-protocol} after the neighbor
selection process. The pair $(\lambda_{\text{1}}(L_{B}),\boldsymbol{v}_{\text{1}}(L_{B}))$
associated with the perturbed Laplacian $L_{B}$ of the CLFN \eqref{eq:LF-unsigned-protocol}
turns out to be an important algebraic construct revealing graph-theoretic
properties of CLFNs. As such, we shall unravel the network reachability,
encoded in the eigenvector $\boldsymbol{v}_{\text{1}}(L_{B})$, providing
useful insights for designing the neighbor selection algorithm for
\eqref{eq:LF-unsigned-protocol}. First, we present the following
critical properties of $\lambda_{\text{1}}(L_{B})$ and $\boldsymbol{v}_{\text{1}}(L_{B})$,
respectively. 
\begin{lem}
\label{lem:lambda1-v1-CLFN} Let $\lambda_{1}(L_{B})$ and $\boldsymbol{v}_{1}(L_{B})$
denote the smallest eigenvalue and the corresponding normalized eigenvector
of $L_{B}$ in \eqref{eq:LB-matrix}, respectively. Then, $\lambda_{1}(L_{B})>0$
is a simple real\textcolor{red}{{} }eigenvalue of $L_{B}$ and $\boldsymbol{v}_{1}(L_{B})$
can be chosen to be positive. 
\end{lem}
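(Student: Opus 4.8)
The plan is to reduce the claim to the Perron--Frobenius theorem by converting $L_{B}$ into a non-negative irreducible matrix, and then to extract the strict positivity of $\lambda_{1}(L_{B})$ from the presence of at least one leader. First I would fix a scalar $c>\max_{i}(d_{i}+\delta_{i})$ and set $M=cI_{n}-L_{B}$. Since the off-diagonal entries of $L_{B}$ are $-w_{ij}\le 0$ and its diagonal entries are $d_{i}+\delta_{i}$, the matrix $M$ is entrywise non-negative, and its off-diagonal sparsity pattern coincides with that of $W$; hence the directed graph associated with $M$ is exactly $\mathcal{G}$, which is strongly connected by Assumption 1. Therefore $M\ge 0$ is irreducible.

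Second, I would invoke the Perron--Frobenius theorem for irreducible non-negative matrices. It yields that the spectral radius $\rho(M)$ is a simple, real, positive eigenvalue of $M$, that every other eigenvalue $\mu$ satisfies $\mathrm{Re}(\mu)\le\rho(M)$, and that both a right eigenvector $\boldsymbol{v}>0$ and a left eigenvector $\boldsymbol{w}>0$ are associated with $\rho(M)$. Since the eigenvalues of $L_{B}$ are exactly $c-\mu$ as $\mu$ ranges over the spectrum of $M$, the eigenvalue of $L_{B}$ of smallest real part is $c-\rho(M)$; thus $\lambda_{1}(L_{B})=c-\rho(M)$ is real and algebraically simple, and its eigenvector is the Perron vector $\boldsymbol{v}$. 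Normalizing $\boldsymbol{v}$ to unit length scales it by a positive constant, so $\boldsymbol{v}_{1}(L_{B})$ can be chosen positive. This settles every assertion except the strict inequality $\lambda_{1}(L_{B})>0$.

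The key step, and the only place the leader hypothesis enters, is establishing $\lambda_{1}(L_{B})>0$. Here I would use the left Perron vector: from $\boldsymbol{w}^{\top}M=\rho(M)\boldsymbol{w}^{\top}$ one obtains $\boldsymbol{w}^{\top}L_{B}=\lambda_{1}(L_{B})\boldsymbol{w}^{\top}$. Evaluating $\boldsymbol{w}^{\top}L_{B}\mathds{1}_{n}$ in two ways and using $L\mathds{1}_{n}=0$ (the zero row sums of the Laplacian), so that $L_{B}\mathds{1}_{n}=\boldsymbol{\delta}$, yields $\lambda_{1}(L_{B})\,(\boldsymbol{w}^{\top}\mathds{1}_{n})=\boldsymbol{w}^{\top}\boldsymbol{\delta}$, whence
\[
\lambda_{1}(L_{B})=\frac{\boldsymbol{w}^{\top}\boldsymbol{\delta}}{\boldsymbol{w}^{\top}\mathds{1}_{n}}.
\]
Because $\boldsymbol{w}>0$ and $\boldsymbol{\delta}\ge 0$ with $\boldsymbol{\delta}\neq 0$ (at least one leader exists), the numerator is strictly positive and the denominator is positive, so $\lambda_{1}(L_{B})>0$.

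I expect the main obstacle to be this strict positivity rather than the Perron--Frobenius bookkeeping: diagonal dominance and Gershgorin discs alone only confine the spectrum to the closed right half-plane and cannot by themselves rule out a zero eigenvalue when followers are present. The left-eigenvector identity above resolves this cleanly and simultaneously exposes the mechanism, since it is the weighted leader influence $\boldsymbol{w}^{\top}\boldsymbol{\delta}$ that pushes $\lambda_{1}$ off the imaginary axis. As an alternative to this last step, one could observe that $L_{B}$ is an irreducible, weakly diagonally dominant Z-matrix that is strictly dominant in every leader row, so Taussky's theorem makes it a nonsingular M-matrix; combined with the realness of $\lambda_{1}(L_{B})$ this again forces $\lambda_{1}(L_{B})>0$.
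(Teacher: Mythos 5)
Your proof is correct, and its Perron--Frobenius core is essentially the paper's own reduction: the paper likewise writes $L_{B}=\eta I-M$ with $M\ge0$ irreducible, except that it applies Perron--Frobenius to the matrix exponential $e^{-L_{B}}=e^{-\eta}e^{M}$ rather than to the linear shift $cI_{n}-L_{B}$ itself (the exponential makes the link between the smallest-real-part eigenvalue of $L_{B}$ and a spectral radius immediate, since $|e^{-\lambda}|=e^{-\text{{\bf Re}}(\lambda)}$; with your linear shift you should state the one-line observation that $\text{{\bf Re}}(\mu)=\rho(M)$ together with $|\mu|\le\rho(M)$ forces $\mu=\rho(M)$, so that $c-\rho(M)$ really is the unique, simple eigenvalue of smallest real part). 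Where you genuinely depart from the paper is the strict positivity $\lambda_{1}(L_{B})>0$. The paper proves it graph-theoretically: it augments $\mathcal{G}$ with an extra node $u$ representing the input, observes that the augmented Laplacian $\widehat{L}=\left(\begin{smallmatrix}L_{B} & -B\mathds{1}_{m}\\ 0_{1\times n} & 0\end{smallmatrix}\right)$ is block triangular, so its spectrum is that of $L_{B}$ together with $0$, and invokes the standard fact that the Laplacian of a digraph possessing a directed spanning tree (here rooted at $u$) has a simple zero eigenvalue with all other eigenvalues in the open right half-plane; hence $L_{B}$ cannot have a zero eigenvalue. Your argument instead stays entirely inside matrix analysis: the left Perron vector $\boldsymbol{w}>0$ combined with $L_{B}\mathds{1}_{n}=\boldsymbol{\delta}$ gives the exact formula $\lambda_{1}(L_{B})=\boldsymbol{w}^{\top}\boldsymbol{\delta}/(\boldsymbol{w}^{\top}\mathds{1}_{n})$. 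Both routes are valid. The paper's buys conceptual alignment with the leader-as-root viewpoint used throughout (and reuses a known spanning-tree spectral result), at the cost of needing two separate arguments; yours is more self-contained, lets a single Perron--Frobenius application do all the work, and the closed-form expression is more informative, exhibiting $\lambda_{1}$ as a positive weighted average of the leader influence magnitudes $\delta_{i}$ and thereby making the mechanism (and, e.g., monotone dependence on $\boldsymbol{\delta}$) explicit. Your Taussky/M-matrix alternative for the last step is also sound.
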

\begin{proof}
Let us regard the external inputs as \textcolor{black}{an extra agent}\textcolor{red}{{}
}introduced into the network $\mathcal{G}$ and denote it as $u$.
Moreover, let us denote the edge set between external inputs and the
leaders as $\mathcal{E}^{'}$. We consider the augmented directed
network $\widehat{\mathcal{G}}=(\widehat{\mathcal{V}},\widehat{\mathcal{E}},\widehat{W})$
such that $\widehat{\mathcal{V}}=\mathcal{V}\cup\left\{ u\right\} $,
$\widehat{\mathcal{E}}=\mathcal{E}\cup\mathcal{E}^{'}$ and $\widehat{W}=\left(\begin{array}{cc}
W & \boldsymbol{\delta}\\
0_{1\times n} & 0
\end{array}\right)$. Then the Laplacian matrix of $\widehat{\mathcal{G}}$ is $\widehat{L}=\left(\begin{array}{cc}
L_{B} & -B\mathds{1}_{m}\\
0_{1\times n} & 0
\end{array}\right)$. Therefore, all eigenvalues of $L_{B}$ are eigenvalues of $\widehat{L}$.
Since $\widehat{\mathcal{G}}$ has a directed spanning tree with the
root $u$, then $\widehat{L}$ has only one zero eigenvalue with all
other eigenvalues having positive real parts, i.e., $\text{{\bf Re}}(\lambda_{\text{1}}(L_{B}))>0$.

We now proceed to show that $\lambda_{\text{1}}(L_{B})$ is a simple
real eigenvalue and $\boldsymbol{v}_{\text{1}}(L_{B})$ can be chosen
to be positive. Let $L_{B}=\eta I-M$, where $M\in\mathbb{R}^{n\times n}$
is a non-negative matrix and\textcolor{blue}{{} }$\eta={\displaystyle \max_{i\in\underline{n}}}\left\{ [L_{B}]_{ii}\right\} $.
Then $e^{-L_{B}}=e^{M-\eta I}=e^{-\eta I}e^{M}$. Note that the matrix
$M$ is non-negative, therefore, $e^{-L_{B}}$ is a non-negative matrix.
In addition, since the network $\mathcal{G}$ is strongly connected,
$M$ is irreducible, implying that $e^{-L_{B}}$ is a non-negative
irreducible matrix. Thus, according to Perron--Frobenius theorem
\citet{horn2012matrix}, the eigenvalue $e^{-\lambda_{1}(L_{B})}$
is simple and real and the corresponding eigenvector $\boldsymbol{v}_{1}(L_{B})$
can be chosen to be positive. 
\end{proof}
\begin{rem}
Note that the influence magnitude $\delta_{i}$ associated with the
external input on an agent $i\in\mathcal{V}$ is arbitrary. Therefore
Lemma \ref{lem:lambda1-v1-CLFN} still holds for those graph Laplacians
with arbitrary positive diagonal perturbation. 
\end{rem}
For a CLFN $\mathcal{G}$ with the input matrix $B$, we proceed to
construct a \emph{reduced network }of $\mathcal{G}$ by eliminating
a subset of edges between specific agents and their neighboring agents,
namely, \emph{neighbor selection}, using information encoded in $\boldsymbol{v}_{\text{1}}(L_{B})$.
We will refer to this class of reduced networks as the \emph{following
the slower neighbor} (FSN) networks of $\mathcal{G}$ \citet{shao2019relative},
since it is implied that each agent follows (or chooses to be influenced
by) those neighbors whose rate of change in states are relatively
slower. This terminology will be made more rigorous subsequently. 
\begin{defn}[FSN network for CLFN]
\label{def:fsn-network-CLFN} Consider the CLFN \textbf{\eqref{eq:unsigned-LF-overall}}
on a strongly connected directed network $\mathcal{G}=(\mathcal{V},\mathcal{E},W)$.
The FSN network of $\mathcal{G}$, denoted by $\bar{\mathcal{G}}=(\mathcal{\bar{V}},\bar{\mathcal{E}},\bar{W})$,
is such that $\mathcal{\bar{V}}=\mathcal{V}$, $\mathcal{\bar{E}}\subset\mathcal{E}$
and $\bar{W}=(\bar{w}_{ij})\in\mathbb{R}^{n\times n}$, where $\bar{w}_{ij}=w_{ij}$
if $\boldsymbol{v}_{\text{1}}(L_{B})_{ij}>1$ and $\bar{w}_{ij}=0$
if $\boldsymbol{v}_{\text{1}}(L_{B})_{ij}\leq1$. 
\end{defn}
\begin{rem}
Equivalently, Definition \ref{thm:convergence-rate-CLFN} implies
a neighbor selection process using the information $\boldsymbol{v}_{\text{1}}(L_{B})_{ij}$,
where $i$ and $j$ are neighboring agents. One notes that $\boldsymbol{v}_{\text{1}}(L_{B})$
is network-level information that is usually not accessible to local
agents. However, as we shall show in \S\ref{subsec:NS-CLFN}, $\boldsymbol{v}_{\text{1}}(L_{B})_{ij}$
is locally computable using the relative rate of change in state between
neighboring agents. 
\end{rem}

\subsection{Reachability Analysis of FSN Networks}

The following result reveals\textcolor{blue}{{} }the LF-reachability
encoded in the $\boldsymbol{v}_{\text{1}}(L_{B})$ of a CLFN. 
\begin{thm}[LF-reachability]
\label{thm:reachability-FSN-SAN} Let $\bar{\mathcal{G}}$ be the
FSN network of the CLFN \eqref{eq:unsigned-LF-overall} on a strongly
connected directed network $\mathcal{G}$. Then all agents are reachable
from external inputs in $\bar{\mathcal{G}}$. 
\end{thm}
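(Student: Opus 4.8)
The plan is to read off the needed structural information directly from the eigenvector equation $L_B\boldsymbol{v}_1(L_B)=\lambda_1(L_B)\boldsymbol{v}_1(L_B)$, combined with the two facts supplied by Lemma~\ref{lem:lambda1-v1-CLFN}: that $\lambda_1(L_B)>0$ and that $\boldsymbol{v}_1(L_B)$ may be chosen entrywise positive. Writing $v_i:=[\boldsymbol{v}_1(L_B)]_i>0$ and substituting $[L_B]_{ii}=d_i+\delta_i$, $[L_B]_{ij}=-w_{ij}$ for $j\neq i$, and $d_i=\sum_{j\in\mathcal{N}_i}w_{ij}$, the $i$-th row of the eigenvector equation rearranges into the balance identity
\begin{equation}
\sum_{j\in\mathcal{N}_i} w_{ij}\,(v_i-v_j) = \bigl(\lambda_1(L_B)-\delta_i\bigr)\,v_i. \label{eq:ns-balance}
\end{equation}
The whole argument hinges on extracting a retained in-edge with a strictly smaller endpoint value from \eqref{eq:ns-balance} whenever its right-hand side is positive.

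First I would treat the followers. For a follower $i$ we have $\delta_i=0$, so the right-hand side of \eqref{eq:ns-balance} equals $\lambda_1(L_B)\,v_i>0$. Since every $w_{ij}>0$ on $\mathcal{N}_i$ and the weighted sum is strictly positive, at least one summand must be positive, i.e.\ there exists $j\in\mathcal{N}_i$ with $v_i>v_j$, equivalently $\boldsymbol{v}_1(L_B)_{ij}=v_i/v_j>1$. By Definition~\ref{def:fsn-network-CLFN} this edge is preserved under neighbor selection (namely $\bar w_{ij}=w_{ij}>0$). Hence every follower retains at least one in-neighbor in $\bar{\mathcal{G}}$, and along every such retained in-edge the eigenvector entry strictly decreases.

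Next I would run a descent along retained edges. Starting from an arbitrary follower $i_0$, I repeatedly pick a retained in-neighbor with a strictly smaller eigenvector entry, producing a chain $v_{i_0}>v_{i_1}>v_{i_2}>\cdots$. Strict monotonicity forces the nodes $i_0,i_1,i_2,\dots$ to be pairwise distinct, so by finiteness of $\mathcal{V}$ the chain must terminate; by the previous paragraph it can terminate only at a node that is not a follower, i.e.\ at a leader $i_k$. In the augmented network introduced in the proof of Lemma~\ref{lem:lambda1-v1-CLFN}, the extra node $u$ has a direct edge to every leader (the $\boldsymbol{\delta}$-block of $\widehat{W}$ is untouched by neighbor selection, which only prunes the $W$-block). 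Consequently the concatenation of the retained in-edges $(i_0,i_1),(i_1,i_2),\dots,(i_{k-1},i_k)$ together with the input edge $(i_k,u)$ is a directed path $\mathcal{P}_{i_0,u}$ in $\bar{\mathcal{G}}$, so $i_0$ is reachable from the external input; leaders are reachable from $u$ directly, settling all nodes.

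The main obstacle is pinning down the local step in \eqref{eq:ns-balance} correctly: one must verify that its right-hand side is genuinely positive for followers (which uses both $\lambda_1(L_B)>0$ and $v_i>0$ from Lemma~\ref{lem:lambda1-v1-CLFN}) and that a positive weighted sum certifies a \emph{preserved} edge under the threshold rule $\boldsymbol{v}_1(L_B)_{ij}>1$, not merely an existing one. Note that the analogous sign conclusion fails at leaders, since $\lambda_1(L_B)-\delta_i$ is sign-indefinite; this is precisely why the descent must be routed into a leader and then closed off through the untouched input edges, rather than attempting to continue the decreasing chain past a leader.
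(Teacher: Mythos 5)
Your proof is correct, and it reaches the conclusion by a genuinely different route than the paper's. The paper argues by contradiction: it assumes a nonempty set of followers unreachable from the leaders in $\bar{\mathcal{G}}$, picks the node $i^{0}$ minimizing $[\boldsymbol{v}_{1}(L_{B})]_{i}$ over that set, asserts that every in-neighbor $j\in\mathcal{N}_{i^{0}}$ must then satisfy $[\boldsymbol{v}_{1}(L_{B})]_{j}\ge[\boldsymbol{v}_{1}(L_{B})]_{i^{0}}$ (any neighbor with a strictly smaller entry would give a retained edge, contradicting either unreachability or minimality), and substitutes this into the $i^{0}$-th row of the eigenvector equation to force $\lambda_{1}(L_{B})\le 0$, contradicting Lemma~\ref{lem:lambda1-v1-CLFN}. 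You instead prove the positive local statement directly: your balance identity shows that every follower retains at least one in-edge pointing to a strictly smaller eigenvector entry, and you then chain such edges --- strict monotonicity makes the walk repetition-free, finiteness of $\mathcal{V}$ terminates it, and termination is possible only at a leader --- producing an explicit directed path in $\bar{\mathcal{G}}$ from a leader (hence from the external input) to the original follower. The two arguments share the same algebraic core (the follower row of $L_{B}\boldsymbol{v}_{1}=\lambda_{1}\boldsymbol{v}_{1}$ combined with $\lambda_{1}(L_{B})>0$ and positivity of $\boldsymbol{v}_{1}$ from Lemma~\ref{lem:lambda1-v1-CLFN}, filtered through the threshold rule of Definition~\ref{def:fsn-network-CLFN}), but they extract the graph-theoretic conclusion differently: the paper's extremal argument is shorter yet non-constructive, and it implicitly requires the unreachable set to be taken closed under retained in-neighbors for the minimality claim to be justified; your descent argument is constructive, exhibits the actual leader-to-follower path, and makes that closure issue moot. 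Your closing observation --- that the descent cannot be pushed past a leader because $\lambda_{1}(L_{B})-\delta_{i}$ is sign-indefinite there --- is also accurate, and correctly identifies why the chain must be closed off through the untouched input edges rather than continued.
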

\begin{proof}
It is sufficient to show that for an arbitrary $i\in\mathcal{V}_{\text{F}}$,
there exists a leader agent $l\in\mathcal{V}_{\text{L}}$ such that
$i$ is reachable from $l$.

By contradiction, assume that there exists a subset of agents $\left\{ i_{1},i_{2},\cdots,i_{s}\right\} \subset\mathcal{V}_{\text{F}}$
in the FSN network $\bar{\mathcal{G}}$ such that $i_{k}$ is not
reachable from any $l\in\mathcal{V}_{\text{L}}$, where $k\in\underline{s}$
and $s\in\mathbb{Z}_{+}$. Let $\lambda_{1}$ be the smallest eigenvalue
of the perturbed Laplacian matrix $L_{B}$ with the corresponding
eigenvector $\boldsymbol{v}_{1}$, i.e., 
\begin{equation}
L_{B}\boldsymbol{v}_{1}=\lambda_{1}\boldsymbol{v}_{1}.\label{eq:eigen-equation}
\end{equation}
According to Lemma \ref{lem:lambda1-v1-CLFN}, one has $\lambda_{1}>0$
and the corresponding eigenvector $\boldsymbol{v}_{1}$ is positive.
Now let 
\begin{equation}
[\boldsymbol{v}_{1}]_{i^{0}}=\underset{k\in\left\{ i_{1},i_{2},\cdots,i_{s}\right\} }{\mathbf{\mathbf{\min}}}\left\{ [\boldsymbol{v}_{1}]_{k}\right\} .\label{eq:eigenvector-smallest}
\end{equation}
Then, according to Definition \ref{def:fsn-network-CLFN}, one has
$[\boldsymbol{v}_{1}]_{j}\ge[\boldsymbol{v}_{1}]_{i^{0}}$ for all
$j\in\mathcal{N}_{i^{0}}$. Examining the $i^{0}$th row in \eqref{eq:eigen-equation}
yields, 
\begin{equation}
\left({\displaystyle \sum_{j\in\mathcal{N}_{i^{0}}}}w_{i^{0}j}\right)[\boldsymbol{v}_{1}]_{i^{0}}-{\displaystyle \sum_{j\in\mathcal{N}_{i^{0}}}}w_{i^{0}j}[\boldsymbol{v}_{1}]_{j}=\lambda_{1}[\boldsymbol{v}_{1}]_{i^{0}}.\label{eq:eigenvector equality}
\end{equation}
Combining \eqref{eq:eigenvector-smallest} and \eqref{eq:eigenvector equality},
now yields the following inequality, 
\begin{equation}
\left({\displaystyle \sum_{j\in\mathcal{N}_{i^{0}}}}w_{i^{0}j}\right)[\boldsymbol{v}_{1}]_{i^{0}}-{\displaystyle \sum_{j\in\mathcal{N}_{i^{0}}}}w_{i^{0}j}[\boldsymbol{v}_{1}]_{i^{0}}\geq\lambda_{1}[\boldsymbol{v}_{1}]_{i^{0}}.
\end{equation}
By canceling $[\boldsymbol{v}_{1}]_{i^{0}}>0$ from both sides of
the above inequality, one can obtain $\lambda_{1}\leq0$, establishing
a contradiction. 
\end{proof}
It turns out that the entries of the eigenvector\textbf{ $\boldsymbol{v}_{1}(L_{B})$}
are influenced by the selection of leader agents; we shall provide
examples in simulation results to illustrate this point-- the reader
is referred to \S\ref{subsec:Adaptivity-to-Leader}.

\subsection{Convergence Rate Analysis of FSN Networks}

In order to evaluate the performance of the neighbor selection process
based on $\boldsymbol{v}_{\text{1}}(L_{B})$, we now proceed to examine
the convergence rate of the CLFN \eqref{eq:unsigned-LF-overall} on
its FSN networks. Note that the smallest non-zero eigenvalue of the
perturbed Laplacian $L_{B}$ characterizes the convergence rate of
the CLFN \eqref{eq:unsigned-LF-overall} towards its steady-state,
either consensus or clustering \citep{clark2018maximizing,xia2017analysis,pirani2016smallest}.
We provide the following result regarding the relation between the
convergence rate of the CLFN \eqref{eq:unsigned-LF-overall} on strongly
connected directed networks and that on their FSN networks. Without
loss of generality, we assume that $\delta_{i}=\delta_{j}$ for all
$i,j\in\mathcal{V}_{\text{L}}$.

We first need the following results for our subsequent analysis. 
\begin{lem}
\label{lem:eigenvalue-for-non-negative-matrix-1}\citet{horn2012matrix}
Let $A=\left(a_{ij}\right)\in\mathbb{R}^{n\times n}$ be a nonnegative
matrix. Then $\rho(A)\leq\underset{i\in\underline{n}}{\mathbf{max}}\sum_{j=1}^{n}a_{ij}$;
when all the row sums of $A$ are equal, then equality holds. 
\end{lem}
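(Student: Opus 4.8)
The plan is to establish the inequality through the classical eigenvector-maximal-entry argument (which is really the statement that the spectral radius is dominated by the maximum absolute row sum, i.e. the $\ell_\infty$-induced operator norm), and then to dispatch the equality case by producing an explicit eigenvector. First I would take an arbitrary eigenvalue $\lambda$ of $A$ with eigenvector $\boldsymbol{x}\neq 0$, and choose an index $k\in\underline{n}$ attaining $|[\boldsymbol{x}]_{k}|=\max_{i\in\underline{n}}|[\boldsymbol{x}]_{i}|>0$. Reading off the $k$th row of $A\boldsymbol{x}=\lambda\boldsymbol{x}$ gives $\lambda[\boldsymbol{x}]_{k}=\sum_{j=1}^{n}a_{kj}[\boldsymbol{x}]_{j}$. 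Taking absolute values and invoking the triangle inequality, nonnegativity $a_{kj}\geq 0$, and maximality $|[\boldsymbol{x}]_{j}|\leq|[\boldsymbol{x}]_{k}|$ yields
\begin{equation}
|\lambda|\,|[\boldsymbol{x}]_{k}|\leq\sum_{j=1}^{n}a_{kj}|[\boldsymbol{x}]_{j}|\leq|[\boldsymbol{x}]_{k}|\sum_{j=1}^{n}a_{kj}.
\end{equation}
Dividing by $|[\boldsymbol{x}]_{k}|>0$ gives $|\lambda|\leq\sum_{j=1}^{n}a_{kj}\leq\max_{i\in\underline{n}}\sum_{j=1}^{n}a_{ij}$. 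Since $\lambda$ was an arbitrary eigenvalue, maximizing over the spectrum delivers $\rho(A)\leq\max_{i\in\underline{n}}\sum_{j=1}^{n}a_{ij}$.

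For the equality claim, I would suppose every row sum equals a common value $r:=\sum_{j=1}^{n}a_{ij}$ independent of $i$. Then $A\mathds{1}_{n}=r\mathds{1}_{n}$, so $r$ is an eigenvalue of $A$ with eigenvector $\mathds{1}_{n}$. Because $A\geq 0$ we have $r\geq 0$, hence $\rho(A)\geq|r|=r=\max_{i\in\underline{n}}\sum_{j=1}^{n}a_{ij}$. Combined with the upper bound from the first part, this sandwiches the spectral radius and forces $\rho(A)=\max_{i\in\underline{n}}\sum_{j=1}^{n}a_{ij}$.

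The result is classical, so there is no genuine obstacle. The only point demanding a moment's care is the equality direction: exhibiting $\mathds{1}_{n}$ as an eigenvector only shows that $r$ lies in the spectrum, giving $\rho(A)\geq r$ via the sign observation $r\geq 0$; it is the confrontation of this lower bound with the general upper bound that pins down the spectral radius exactly, so both halves of the argument are genuinely needed.
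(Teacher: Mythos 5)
Your proof is correct, and there is nothing to compare it against within the paper: Lemma \ref{lem:eigenvalue-for-non-negative-matrix-1} is quoted directly from \citet{horn2012matrix} and the paper supplies no proof of its own. Your argument is precisely the classical one --- the maximal-entry eigenvector argument giving $\rho(A)\leq\max_{i}\sum_{j}a_{ij}$, combined with the observation that $\mathds{1}_{n}$ is an eigenvector with nonnegative eigenvalue $r$ when all row sums are equal --- and both halves, including your remark that the lower bound $\rho(A)\geq r$ must be confronted with the general upper bound to force equality, are complete and sound.
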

\begin{lem}
\label{lem:edge inequality}Consider the CLFN \eqref{eq:unsigned-LF-overall}
on a strongly connected directed network $\mathcal{G}=(\mathcal{V},\mathcal{E},W)$.
If $\mathcal{V}_{\text{\text{F}}}\not=\emptyset$, then there exists
at least one edge $(i,j)\in\mathcal{E}$ such that $[\boldsymbol{v}_{1}(L_{B})]_{j}>[\boldsymbol{v}_{1}(L_{B})]_{i}$. 
\end{lem}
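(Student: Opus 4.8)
The plan is to argue by contradiction: I assume that \emph{no} edge witnesses the claimed inequality, extract from the eigen-relation $L_{B}\boldsymbol{v}_{1}=\lambda_{1}\boldsymbol{v}_{1}$ a monotonicity of $\boldsymbol{v}_{1}:=\boldsymbol{v}_{1}(L_{B})$ along directed edges, propagate it along directed paths, and then use strong connectivity to collapse $\boldsymbol{v}_{1}$ to a constant vector -- a configuration the diagonal perturbation $\mathrm{diag}\{\boldsymbol{\delta}\}$ forbids as soon as a follower is present. Throughout I write $\lambda_{1}:=\lambda_{1}(L_{B})$ and invoke Lemma~\ref{lem:lambda1-v1-CLFN}, which supplies the two facts I will need: $\lambda_{1}>0$, and $\boldsymbol{v}_{1}$ may be taken strictly positive.

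The first step is to record the scalar form of the eigen-relation. Expanding the $i$th row using $[L_{B}]_{ii}=d_{i}+\delta_{i}$, $[L_{B}]_{ij}=-w_{ij}$ for $j\neq i$, and $d_{i}=\sum_{j\in\mathcal{N}_{i}}w_{ij}$ yields
\begin{equation}
\sum_{j\in\mathcal{N}_{i}} w_{ij}\bigl([\boldsymbol{v}_{1}]_{i}-[\boldsymbol{v}_{1}]_{j}\bigr)=(\lambda_{1}-\delta_{i})[\boldsymbol{v}_{1}]_{i},\qquad i\in\mathcal{V}.
\label{eq:plan-row}
\end{equation}
Now suppose, for contradiction, that $[\boldsymbol{v}_{1}]_{j}\leq[\boldsymbol{v}_{1}]_{i}$ for \emph{every} edge $(i,j)\in\mathcal{E}$. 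Since $(i,j)\in\mathcal{E}$ means $j\in\mathcal{N}_{i}$, chaining this inequality along any directed path $\mathcal{P}_{a,b}$ (a path from $b$ to $a$) gives $[\boldsymbol{v}_{1}]_{b}\leq[\boldsymbol{v}_{1}]_{a}$. By Assumption~1 the network is strongly connected, so for every ordered pair $a,b$ such paths exist in both directions, forcing $[\boldsymbol{v}_{1}]_{a}=[\boldsymbol{v}_{1}]_{b}$; hence $\boldsymbol{v}_{1}=c\,\mathds{1}_{n}$ for some $c>0$.

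It remains to close the contradiction. Substituting $\boldsymbol{v}_{1}=c\,\mathds{1}_{n}$ into the eigen-relation and using $L\mathds{1}_{n}=0$ gives $L_{B}\mathds{1}_{n}=\boldsymbol{\delta}=\lambda_{1}\mathds{1}_{n}$, i.e.\ $\delta_{i}=\lambda_{1}$ for every $i\in\mathcal{V}$. But $\mathcal{V}_{\text{F}}\neq\emptyset$ furnishes a follower $i$ with $\delta_{i}=0$, which forces $\lambda_{1}=0$ and contradicts $\lambda_{1}>0$ from Lemma~\ref{lem:lambda1-v1-CLFN}. Therefore some edge $(i,j)$ must satisfy $[\boldsymbol{v}_{1}]_{j}>[\boldsymbol{v}_{1}]_{i}$. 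The step requiring the most care is the propagation in the second paragraph: I must respect the edge orientation -- an edge $(i,j)$ points to the in-neighbor $j$, so the contradiction hypothesis controls the value at the head relative to the tail -- and verify that strong connectivity supplies directed paths in \emph{both} directions, which is what upgrades a one-sided monotonicity into genuine equality. The remaining pieces, namely the row expansion \eqref{eq:plan-row} and the observation that $\mathrm{diag}\{\boldsymbol{\delta}\}$ destroys $\mathds{1}_{n}$ as an eigenvector precisely at follower nodes, are routine.
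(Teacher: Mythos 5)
Your proof is correct and takes essentially the same route as the paper's: both argue by contradiction from the hypothesis that every edge $(i,j)$ satisfies $[\boldsymbol{v}_{1}]_{j}\leq[\boldsymbol{v}_{1}]_{i}$, chain that inequality along directed paths supplied by strong connectivity, and close the contradiction via the incompatibility of a constant eigenvector with the diagonal perturbation at a follower node. The only difference is organizational --- the paper first extracts a strict edge from non-constancy of $\boldsymbol{v}_{1}$ and then contradicts it with a return path, whereas you force full constancy first and then contradict $\lambda_{1}(L_{B})>0$ --- but the ingredients are identical.
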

\begin{proof}
In the proof, we shall use $\boldsymbol{v}_{1}$ instead of $\boldsymbol{v}_{1}(L_{B})$
for brevity. Assume that all the edges $(i,j)\in\mathcal{E}$ satisfy
$[\boldsymbol{v}_{1}]_{j}\leq[\boldsymbol{v}_{1}]_{i}$. $\mathcal{V}_{\text{\text{F}}}\not=\emptyset$
implies that $\alpha\mathds{1}_{n}$ is not an eigenvector of $L_{B}$
for any non-zero scalar $\alpha\in\mathbb{R}$, therefore there exists
at least one edge $(i^{\prime},j^{\prime})\in\mathcal{E}$ satisfying
$[\boldsymbol{v}_{1}]_{j^{\prime}}<[\boldsymbol{v}_{1}]_{i^{\prime}}$.
Since $\mathcal{G}$ is strongly connected, there exists a directed
path that starts from $i^{\prime}$ and ends at $j^{\prime}$, as
such, one has $[\boldsymbol{v}_{1}]_{i^{\prime}}\leq[\boldsymbol{v}_{1}]_{j^{\prime}}$,
thus establishing a contradiction. Hence, there exists at least one
edge $(i,j)\in\mathcal{E}$ such that $[\boldsymbol{v}_{1}]_{j}>[\boldsymbol{v}_{1}]_{i}$. 
\end{proof}
We proceed to present the result regarding the performance enhancement
of CLFN \eqref{eq:unsigned-LF-overall} after the neighbor selection
process. 
\begin{thm}[Convergence rate]
\label{thm:convergence-rate-CLFN} Let $\bar{\mathcal{G}}=(\mathcal{V},\mathcal{\bar{E}},\bar{W})$
denote the FSN network of CLFN \eqref{eq:unsigned-LF-overall}. Then
\begin{equation}
\lambda_{1}(L_{B}(\bar{\mathcal{G}}))\ge\lambda_{1}(L_{B}(\mathcal{G})),\label{eq:CLFN-coveragent-rate}
\end{equation}
where equality holds if and only if all agents are leaders. 
\end{thm}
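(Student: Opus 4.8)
The plan is to avoid comparing the spectra of $\mathcal{G}$ and $\bar{\mathcal{G}}$ directly and instead exploit a structural peculiarity of the FSN rule. First I would note that, by Definition~\ref{def:fsn-network-CLFN}, an edge $(i,j)$ is retained only when $[\boldsymbol{v}_1]_i>[\boldsymbol{v}_1]_j$, so the entries of $\boldsymbol{v}_1=\boldsymbol{v}_1(L_B(\mathcal{G}))$ strictly increase along every surviving edge; hence $\bar{\mathcal{G}}$ has no directed cycle. Ordering the agents by increasing $[\boldsymbol{v}_1]_i$ therefore makes $L_B(\bar{\mathcal{G}})$ triangular, since its off-diagonal entries $-\bar{w}_{ij}$ are nonzero only when $[\boldsymbol{v}_1]_j<[\boldsymbol{v}_1]_i$. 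Its eigenvalues are thus exactly the diagonal entries, all real, giving the clean formula $\lambda_1(L_B(\bar{\mathcal{G}}))=\min_{i}(\bar{d}_i+\delta_i)$ with $\bar{d}_i:=\sum_j\bar{w}_{ij}$. This observation is the crux: it sidesteps the fact that $\bar{\mathcal{G}}$ is generally \emph{not} strongly connected, so that a naive Collatz--Wielandt or left-eigenvector argument on $\bar{\mathcal{G}}$ would stall (the left Perron vector of its perturbed Laplacian need not be positive).

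For \eqref{eq:CLFN-coveragent-rate} I would then bound each diagonal entry below by $\lambda_1:=\lambda_1(L_B(\mathcal{G}))$ using the $i$th row of $L_B(\mathcal{G})\boldsymbol{v}_1=\lambda_1\boldsymbol{v}_1$. Solving that row for $d_i+\delta_i$ and subtracting the discarded in-weight $d_i-\bar{d}_i=\sum_{[\boldsymbol{v}_1]_j\ge[\boldsymbol{v}_1]_i}w_{ij}$ yields, after rearrangement,
\[
\bar{d}_i+\delta_i-\lambda_1=\tfrac{1}{[\boldsymbol{v}_1]_i}\Big[\textstyle\sum_{[\boldsymbol{v}_1]_j<[\boldsymbol{v}_1]_i}w_{ij}[\boldsymbol{v}_1]_j+\sum_{[\boldsymbol{v}_1]_j\ge[\boldsymbol{v}_1]_i}w_{ij}([\boldsymbol{v}_1]_j-[\boldsymbol{v}_1]_i)\Big]\ge0,
\]
because $\boldsymbol{v}_1>0$ (Lemma~\ref{lem:lambda1-v1-CLFN}) and both sums are nonnegative. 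Minimizing over $i$ and invoking the triangular formula gives $\lambda_1(L_B(\bar{\mathcal{G}}))\ge\lambda_1$.

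The easy equality direction is $\mathcal{V}_{\text{F}}=\emptyset$: homogeneity makes $\boldsymbol{\delta}=\delta\mathds{1}_n$, so $\boldsymbol{v}_1\propto\mathds{1}_n$, every ratio $[\boldsymbol{v}_1]_i/[\boldsymbol{v}_1]_j$ equals $1$, all edges are discarded, $L_B(\bar{\mathcal{G}})=\delta I$, and $\lambda_1(L_B(\bar{\mathcal{G}}))=\delta=\lambda_1$. For the converse, the triangular formula reduces strictness to excluding $\bar{d}_{i^\star}+\delta_{i^\star}=\lambda_1$ at a minimizer $i^\star$. By the decomposition above, such equality forces $[\boldsymbol{v}_1]_j=[\boldsymbol{v}_1]_{i^\star}$ for every in-neighbor $j$ of $i^\star$; feeding this back into the $i^\star$th eigen-equation collapses it to $\delta_{i^\star}[\boldsymbol{v}_1]_{i^\star}=\lambda_1[\boldsymbol{v}_1]_{i^\star}$, i.e. $\delta_{i^\star}=\lambda_1$, whence (by homogeneity) the common leader weight satisfies $\delta=\lambda_1$.

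The step I expect to be the main obstacle is ruling this out, i.e. proving $\delta>\lambda_1$ strictly whenever a follower exists. I would settle it by a strict diagonal-perturbation monotonicity of $\lambda_1$: since $\lambda_1(L)=0$, one has $\delta=\lambda_1(L+\delta I)$, and $L+\delta I=L_B(\mathcal{G})+\mathrm{diag}\{\boldsymbol{\gamma}\}$ with $\gamma_i=\delta-\delta_i\ge0$ and $\gamma_i=\delta>0$ precisely on $\mathcal{V}_{\text{F}}$. Pairing the eigen-equation of $L+\delta I$ with the strictly positive left eigenvector of $L_B(\mathcal{G})$ — which exists by applying Lemma~\ref{lem:lambda1-v1-CLFN} to $L_B(\mathcal{G})^{\top}$ — and using $\boldsymbol{v}_1>0$, the nonzero perturbation $\boldsymbol{\gamma}$ contributes a strictly positive term, forcing $\lambda_1(L+\delta I)>\lambda_1(L_B(\mathcal{G}))$, i.e. $\delta>\lambda_1$. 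This contradicts $\delta=\lambda_1$, so the minimum is strictly above $\lambda_1$ and the inequality is strict. (Lemma~\ref{lem:edge inequality} is the qualitative reason this works: with a follower present $\boldsymbol{v}_1$ is non-constant, so at least one strict edge is genuinely discarded.)
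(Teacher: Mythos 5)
Your proof is correct, and it takes a genuinely different route from the paper's. The paper shifts both matrices into nonnegative ones, $H=\beta I-L_{B}(\mathcal{G})$ and $\bar{H}=\beta I-L_{B}(\bar{\mathcal{G}})$, conjugates by $S=\mathbf{diag}\{\boldsymbol{v}_{1}(L_{B})\}$, and compares spectral radii via the row-sum bound of Lemma \ref{lem:eigenvalue-for-non-negative-matrix-1}, with strictness in the follower case supplied by Lemma \ref{lem:edge inequality}. Your pivotal observation --- that Definition \ref{def:fsn-network-CLFN} retains an edge $(i,j)$ only when $[\boldsymbol{v}_{1}]_{i}>[\boldsymbol{v}_{1}]_{j}$, hence $\bar{\mathcal{G}}$ is acyclic, $L_{B}(\bar{\mathcal{G}})$ is triangular after a permutation, and $\lambda_{1}(L_{B}(\bar{\mathcal{G}}))=\min_{i}(\bar{d}_{i}+\delta_{i})$ --- appears nowhere in the paper; it replaces the spectral-radius machinery with an explicit formula and reduces the theorem to a row-wise decomposition of the original eigen-equation. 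This buys real rigor: the paper asserts that $\bar{H}$ is irreducible, which is false precisely because $\bar{\mathcal{G}}$ is a directed acyclic graph, and the strict form of the row-sum bound invoked in its Case 1 (``not all row sums equal'' implies strict inequality) holds only for irreducible nonnegative matrices; your route needs no Perron theory on the reduced network at all, exactly as you anticipated. Your equality analysis is also more quantitative: where the paper argues strictness qualitatively from Lemma \ref{lem:edge inequality}, you extract $\delta_{i^{\star}}=\lambda_{1}$ at a minimizer and refute it by the left-eigenvector pairing giving $\delta>\lambda_{1}$ whenever $\mathcal{V}_{\text{F}}\neq\emptyset$. Two small gaps to close in the write-up: (i) the step ``whence by homogeneity $\delta=\lambda_{1}$'' silently uses $\lambda_{1}>0$ from Lemma \ref{lem:lambda1-v1-CLFN} to exclude the possibility that the minimizer $i^{\star}$ is a follower (if it were, $\delta_{i^{\star}}=0=\lambda_{1}$ already contradicts that lemma); (ii) $L_{B}(\mathcal{G})^{\top}$ is not itself a perturbed Laplacian (its off-diagonal row sums need not match its diagonal), so you cannot cite the statement of Lemma \ref{lem:lambda1-v1-CLFN} for it --- instead note that the lemma's proof carries over verbatim, since $e^{-L_{B}^{\top}}=(e^{-L_{B}})^{\top}$ is nonnegative and irreducible, yielding the strictly positive left eigenvector you need.
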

\begin{proof}
Let $H=\left(H_{ij}\right)\in\mathbb{R}^{n\times n}=\beta I-L_{B}(\mathcal{G})$
and $\bar{H}=\left(\bar{H}_{ij}\right)\in\mathbb{R}^{n\times n}=\beta I-L_{B}(\bar{\mathcal{G}})$,
where $\beta$ is a sufficiently large positive constant. Then $H$
and $\bar{H}$ are irreducible nonnegative matrices. Notice that $\rho(H)=\beta-\lambda_{1}(L_{B}(\mathcal{G}))$
and $\rho(\bar{H})=\beta-\lambda_{1}(L_{B}(\bar{\mathcal{G}}))$,
it sufficient to show that $\rho(H)\ge\rho(\bar{H})$. Denote by $\varDelta=\left(\varDelta_{ij}\right)\in\mathbb{R}^{n\times n}=L_{B}(\mathcal{G})-L_{B}(\bar{\mathcal{G}})$;
then ${\displaystyle \varDelta_{ij}}\leq0$ for $i\neq j$ and ${\displaystyle \varDelta_{ii}}=-{\displaystyle \sum_{j\in\mathcal{N}_{i}}\varDelta_{ij}}$.
Therefore, for all $i,j\in\underline{n},$
\[
\bar{H}_{ii}=H_{ii}-{\displaystyle \sum_{j\in\mathcal{N}_{i}}\varDelta_{ij}},
\]
and
\[
\bar{H}_{ij}=H_{ij}+{\displaystyle \varDelta_{ij}}.
\]
Let $S=\mathbf{diag}\left\{ \boldsymbol{v}_{1}(L_{B})\right\} $ and
thereby, 
\[
S^{-1}\bar{H}S=\left(\bar{H}_{ij}\frac{[\boldsymbol{v}_{1}(L_{B})]_{j}}{[\boldsymbol{v}_{1}(L_{B})]_{i}}\right)\in\mathbb{R}^{n\times n}
\]
and $\rho\left(S^{-1}\bar{H}S\right)=\rho\left(\bar{H}\right)$. According
to Lemma \ref{lem:eigenvalue-for-non-negative-matrix-1}, 
\[
\rho\left(S^{-1}\bar{H}S\right)\leq\underset{i\in\underline{n}}{\mathbf{max}}\frac{1}{[\boldsymbol{v}_{1}(L_{B})]_{i}}\sum_{j=1}^{n}\bar{H}_{ij}[\boldsymbol{v}_{1}(L_{B})]_{j}.
\]
Due to the fact that, for all $i\in\underline{n}$,
\begin{eqnarray*}
 &  & \frac{1}{[\boldsymbol{v}_{1}(L_{B})]_{i}}\sum_{j=1}^{n}\bar{H}_{ij}[\boldsymbol{v}_{1}(L_{B})]_{j}\\
 & = & \left(H_{ii}-{\displaystyle \sum_{j\in\mathcal{N}_{i}}\varDelta_{ij}}\right)\\
 & + & {\displaystyle \sum_{j=1,j\neq i}^{n}\frac{\left(H_{ij}+{\displaystyle \varDelta_{ij}}\right)[\boldsymbol{v}_{1}(L_{B})]_{j}}{[\boldsymbol{v}_{1}(L_{B})]_{i}}},\\
 & = & H_{ii}+\sum_{j=1,j\neq i}^{n}\frac{H_{ij}[\boldsymbol{v}_{1}(L_{B})]_{j}}{[\boldsymbol{v}_{1}(L_{B})]_{i}}\\
 & + & \sum_{j=1,j\neq i}^{n}\frac{{\displaystyle \varDelta_{ij}}[\boldsymbol{v}_{1}(L_{B})]_{j}}{[\boldsymbol{v}_{1}(L_{B})]_{i}}-{\displaystyle \sum_{j=1,j\neq i}^{n}\varDelta_{ij}}.
\end{eqnarray*}
Since $\varDelta_{ij}<0$ implies $\frac{[\boldsymbol{v}_{1}(L_{B})]_{j}}{[\boldsymbol{v}_{1}(L_{B})]_{i}}\geq1$,
then one has, 
\[
\sum_{j=1,j\neq i}^{n}\frac{{\displaystyle \varDelta_{ij}}[\boldsymbol{v}_{1}(L_{B})]_{j}}{[\boldsymbol{v}_{1}(L_{B})]_{i}}-{\displaystyle \sum_{j=1,j\neq i}^{n}\varDelta_{ij}}\leq0.
\]
Thus,\textcolor{red}{{} }for all $i\in\underline{n}$,
\begin{align*}
\frac{1}{[\boldsymbol{v}_{1}(L_{B})]_{i}}\sum_{j=1}^{n}\bar{H}_{ij}[\boldsymbol{v}_{1}(L_{B})]_{j} & \leq\frac{1}{[\boldsymbol{v}_{1}(L_{B})]_{i}}\sum_{j=1}^{n}H_{ij}[\boldsymbol{v}_{1}(L_{B})]_{j}\\
 & =\rho(H),
\end{align*}
which implies that $\rho(\bar{H})\leq\rho(H)$. 

In the following, we will further analyze the condition under which
$\rho(\bar{H})=\rho(H)$. There are two cases to consider. 

Case 1: the follower set is not empty. According to Lemma \ref{lem:edge inequality},
there exists at least one edge $(i^{\prime},j^{\prime})\in\mathcal{E}$
such that $[\boldsymbol{v}_{1}(L_{B})]_{j^{\prime}}>[\boldsymbol{v}_{1}(L_{B})]_{i^{\prime}}$.
Therefore, 
\[
\sum_{j=1,j\neq i^{\prime}}^{n}\frac{{\displaystyle \varDelta_{i^{\prime}j}}[\boldsymbol{v}_{1}(L_{B})]_{j}}{[\boldsymbol{v}_{1}(L_{B})]_{i^{\prime}}}-{\displaystyle \sum_{j=1,j\neq i^{\prime}}^{n}\varDelta_{i^{\prime}j}}<0,
\]
that is,
\[
\frac{1}{[\boldsymbol{v}_{1}(L_{B})]_{i^{\prime}}}\sum_{j=1}^{n}\bar{H}_{i^{\prime}j}[\boldsymbol{v}_{1}(L_{B})]_{j}<\frac{1}{[\boldsymbol{v}_{1}(L_{B})]_{i^{\prime}}}\sum_{j=1}^{n}H_{i^{\prime}j}[\boldsymbol{v}_{1}(L_{B})]_{j}.
\]
Hence, if all row sums of $S^{-1}\bar{H}S$ are equal, according to
Lemma \ref{lem:eigenvalue-for-non-negative-matrix-1}, one has, 
\begin{eqnarray*}
\rho(\bar{H}) & = & \frac{1}{\left[\boldsymbol{v}_{1}(L_{B})\right]_{i}}\sum_{j=1}^{n}\bar{H}_{ij}\left[\boldsymbol{v}_{1}(L_{B})\right]_{j}\\
 & < & \frac{1}{\left[\boldsymbol{v}_{1}(L_{B})\right]_{i}}\sum_{j=1}^{n}H_{ij}\left[\boldsymbol{v}_{1}(L_{B})\right]_{j}=\rho(H)
\end{eqnarray*}
for any $i\in\underline{n}$. Otherwise, if not all row sums of $S^{-1}\bar{H}S$
are equal, by Lemma \ref{lem:eigenvalue-for-non-negative-matrix-1}again,
we have,
\[
\rho(\bar{H})<\underset{i\in\underline{n}}{\mathbf{max}}\frac{1}{\left[\boldsymbol{v}_{1}(L_{B})\right]_{i}}\sum_{j=1}^{n}\bar{H}_{ij}\left[\boldsymbol{v}_{1}(L_{B})\right]_{j}\leq\rho(H).
\]
Therefore, one can conclude that,
\[
\rho(\bar{H})<\rho(H).
\]

Case 2: the follower set is empty. Then $\boldsymbol{v}_{1}(L_{B})=\alpha\boldsymbol{1}_{n}$
($\alpha\in\mathbb{R}$ and $\alpha\not=0$). Moreover, for any $i\in\underline{n}$,
one has, 
\[
\sum_{j=1,j\neq i}^{n}\frac{{\displaystyle \varDelta_{ij}}[\boldsymbol{v}_{1}(L_{B})]_{j}}{[\boldsymbol{v}_{1}(L_{B})]_{i}}-{\displaystyle \sum_{j=1,j\neq i}^{n}\varDelta_{ij}}=0.
\]
Hence, 
\begin{align*}
\frac{1}{[\boldsymbol{v}_{1}(L_{B})]_{i}}\sum_{j=1}^{n}\bar{H}_{ij}[\boldsymbol{v}_{1}(L_{B})]_{j} & =\frac{1}{[\boldsymbol{v}_{1}(L_{B})]_{i}}\sum_{j=1}^{n}H_{ij}[\boldsymbol{v}_{1}(L_{B})]_{j}\\
 & =\rho(H),
\end{align*}
for any $i\in\underline{n}$. Then, one can immediately conclude that
$\rho(\bar{H})=\rho(H)$. 

Therefore, $\lambda_{1}(L_{B}(\bar{\mathcal{G}}))\ge\lambda_{1}(L_{B}(\mathcal{G}))$
and the equality holds if only if all agents are leaders. The proof
is finished.
\end{proof}
\begin{rem}
Theorem \ref{thm:convergence-rate-CLFN} indicates that the convergence
rate of CLFN \eqref{eq:unsigned-LF-overall} on the FSN networks outperforms
the original strongly connected directed networks. In other words,
the CLFN \eqref{eq:unsigned-LF-overall} enhances the network performance
by adapting the network structure to the exogenous influence via neighbor
selection. 
\end{rem}

\subsection{Distributed Implementation of Neighbor Selection \label{subsec:NS-CLFN}}

Thus far, we have presented a neighbor selection framework with guaranteed
performance using the eigenvector of the perturbed Laplacian. However,
this eigenvector is a network-level quantity, hindering the direct
applicability of the proposed method for large-scale networks. For
such networks, it is often desired (if not necessary) that decision-making
rely on local observations.

In this section, we establish a quantitative link between the eigenvector
of the perturbed Laplacian and the relative rate of change in the
state of neighboring agents, referred to as the relative tempo. Then,
we connect the global property of the network to a locally accessible
quantity, leading to a fully distributed data-driven neighbor selection
algorithm.

We now proceed to introduce the notion of relative tempo, characterizing
the steady-state of the relative rate of change in state between two
subsets of agents. 
\begin{defn}
\label{def:relative-tempo} Let $\mathcal{V}_{1}\subset\mathcal{V}$
and $\mathcal{V}_{2}\subset\mathcal{V}$ be two subsets of agents
in the CLFN \eqref{eq:unsigned-LF-overall}. Then the relative tempo
between agents in $\mathcal{V}_{1}$ and $\mathcal{V}_{2}$ is defined
as, 
\begin{equation}
\text{\ensuremath{\mathbb{L}}}^{c}(\mathcal{V}_{1},\mathcal{V}_{2})=\lim_{t\rightarrow\infty}\frac{\|\phi(\mathcal{V}_{1})\dot{\boldsymbol{x}}(t)\|}{\|\phi(\mathcal{V}_{2})\dot{\boldsymbol{x}}(t)\|},\label{eq:relative-tempo-definition}
\end{equation}
where $\phi(\mathcal{V}_{1})$ and $\phi(\mathcal{V}_{2})$ are selection
matrices associated with $\mathcal{V}_{1}$ and $\mathcal{V}_{2}$,
respectively. 
\end{defn}
The relative tempo in Definition \ref{def:relative-tempo} was initially
examined in \citep{HaibinAcc14}, characterizing the relative influence
of agents in consensus-type networks, and subsequently employed to
construct a centrality measure that can be inferred from network data
\citep{shao2017inferring}. This paper provides a systematic treatment
of the application of relative tempo in the distributed neighbor selection
problem on directed networks. First, we proceed to formally provide
a quantitative connection between relative tempo and the Laplacian
eigenvector. 
\begin{thm}
\label{thm:relative-tempo-CLFN} Let $\mathcal{V}_{1}\subset\mathcal{V}$
and $\mathcal{V}_{2}\subset\mathcal{V}$ be two subsets of agents
in the CLFN \eqref{eq:unsigned-LF-overall}. Then 
\[
\text{\ensuremath{\mathbb{L}}}^{c}(\mathcal{V}_{1},\mathcal{V}_{2})=\frac{\|\phi(\mathcal{V}_{1})\boldsymbol{v_{1}}(L_{B})\|}{\|\phi(\mathcal{V}_{2})\boldsymbol{v_{1}}(L_{B})\|}.
\]
\end{thm}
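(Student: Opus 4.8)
The plan is to solve the linear dynamics \eqref{eq:unsigned-LF-overall} explicitly and extract the slowest-decaying mode, which by Lemma \ref{lem:lambda1-v1-CLFN} is governed by the simple eigenvalue $\lambda_{1}(L_{B})$ and its positive eigenvector $\boldsymbol{v}_{1}(L_{B})$. Since Lemma \ref{lem:lambda1-v1-CLFN} guarantees $\text{{\bf Re}}(\lambda_{i}(L_{B}))>0$ for every eigenvalue, the perturbed Laplacian $L_{B}$ is invertible, so the system has a unique equilibrium $\boldsymbol{x}^{*}=L_{B}^{-1}B\boldsymbol{u}$. Writing the error $\boldsymbol{e}(t)=\boldsymbol{x}(t)-\boldsymbol{x}^{*}$, one has $\dot{\boldsymbol{e}}=-L_{B}\boldsymbol{e}$ and hence $\dot{\boldsymbol{x}}(t)=\dot{\boldsymbol{e}}(t)=-L_{B}e^{-L_{B}t}\boldsymbol{e}(0)$, so the whole argument reduces to understanding the asymptotics of $e^{-L_{B}t}\boldsymbol{e}(0)$.

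Next I would pass to the Jordan form of $L_{B}$ to isolate the dominant mode. Because $\lambda_{1}=\lambda_{1}(L_{B})$ is a simple eigenvalue with the strictly smallest real part, every other eigenvalue $\lambda_{i}$ satisfies $\text{{\bf Re}}(\lambda_{i})>\lambda_{1}$, and the contribution of its (possibly nontrivial) Jordan block to $e^{-L_{B}t}\boldsymbol{e}(0)$ is of order $t^{k}e^{-\text{{\bf Re}}(\lambda_{i})t}$ for some integer $k\ge0$. Separating the simple $\lambda_{1}$-term, we obtain
\[
e^{-L_{B}t}\boldsymbol{e}(0)=c\,e^{-\lambda_{1}t}\boldsymbol{v}_{1}(L_{B})+\boldsymbol{r}(t),
\]
where $c$ is the coefficient of $\boldsymbol{e}(0)$ along $\boldsymbol{v}_{1}(L_{B})$ (computable via the corresponding left eigenvector) and $e^{\lambda_{1}t}\boldsymbol{r}(t)\to0$ as $t\to\infty$, since each summand in $\boldsymbol{r}(t)$ decays strictly faster than $e^{-\lambda_{1}t}$. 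Applying $-L_{B}$ and using $L_{B}\boldsymbol{v}_{1}(L_{B})=\lambda_{1}\boldsymbol{v}_{1}(L_{B})$ together with the continuity of $L_{B}$ then gives $e^{\lambda_{1}t}\dot{\boldsymbol{x}}(t)\to-c\lambda_{1}\boldsymbol{v}_{1}(L_{B})$.

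Finally I would insert this limit into Definition \ref{def:relative-tempo}. Multiplying numerator and denominator of the ratio by $e^{\lambda_{1}t}$ and passing to the limit, the common scalar factor $-c\lambda_{1}$ cancels, leaving
\[
\mathbb{L}^{c}(\mathcal{V}_{1},\mathcal{V}_{2})=\frac{\|\phi(\mathcal{V}_{1})\boldsymbol{v}_{1}(L_{B})\|}{\|\phi(\mathcal{V}_{2})\boldsymbol{v}_{1}(L_{B})\|};
\]
here the denominator is nonzero because $\boldsymbol{v}_{1}(L_{B})$ is positive by Lemma \ref{lem:lambda1-v1-CLFN}, so no entry selected by $\phi(\mathcal{V}_{2})$ vanishes. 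The step I expect to be the main obstacle is the rigorous dominance argument: I must justify that multiplying by $e^{\lambda_{1}t}$ annihilates all non-dominant Jordan contributions (handling the non-diagonalizable, complex-spectrum directed case), and I must invoke the mild genericity condition $c\neq0$ -- equivalently that $\boldsymbol{e}(0)$ has nonzero projection onto the $\lambda_{1}$-eigenspace -- without which the limit would instead be dictated by a higher mode.
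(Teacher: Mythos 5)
Your proposal is correct, and at its core it is the same argument as the paper's: expand the solution in Jordan form, isolate the slowest-decaying mode, note that by Lemma \ref{lem:lambda1-v1-CLFN} this mode is governed by the simple eigenvalue $\lambda_{1}(L_{B})$ with positive eigenvector $\boldsymbol{v}_{1}(L_{B})$, and cancel the common scalar factor in the ratio of norms. The technical execution differs, in your favor. The paper handles the constant input by augmenting the state to $\boldsymbol{y}=(\boldsymbol{x}^{\top},u)^{\top}$ and Jordan-decomposing the $(n+1)\times(n+1)$ augmented matrix, so it must also dispose of the spurious zero eigenvalue (which is annihilated by the extra factor of $J$ in $\dot{\boldsymbol{y}}=SJe^{Jt}S^{-1}\boldsymbol{y}(0)$); you instead shift by the equilibrium $\boldsymbol{x}^{*}=L_{B}^{-1}B\boldsymbol{u}$ --- legitimate since every eigenvalue of $L_{B}$ has positive real part --- and decompose $L_{B}$ itself, which avoids the zero mode entirely and is the cleaner reduction. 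Two remarks on the obstacles you flagged. First, the strict dominance you need, $\text{{\bf Re}}(\lambda_{i})>\lambda_{1}(L_{B})$ for every other eigenvalue $\lambda_{i}$ of $L_{B}$, does not follow from simplicity alone (a distinct eigenvalue could share the same real part), but it does follow from the Perron--Frobenius argument already used in Lemma \ref{lem:lambda1-v1-CLFN}: $e^{-L_{B}}$ is a positive, hence primitive, matrix, so its Perron root $e^{-\lambda_{1}(L_{B})}$ is the \emph{unique} eigenvalue of maximal modulus, which is exactly the required spectral gap. Second, your genericity hypothesis $c\neq0$ is genuinely needed, and it is needed by the paper as well: its cancellation of the factor $\lambda_{2}^{2}\beta_{2}^{2}e^{2\tilde{\lambda}_{2}t}$ tacitly assumes $\beta_{2}\neq0$, and the statement indeed fails for exceptional initial data (e.g., $\boldsymbol{x}(0)=\boldsymbol{x}^{*}$ makes both norms in the ratio identically zero, and an initial error lying entirely in the faster modes makes the limit depend on those modes instead). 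Making that hypothesis explicit is an improvement over the paper's proof, not a gap in yours.
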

\begin{proof}
Denote by $\phi(\mathcal{V}_{1})=\phi_{1}$ and $\phi(\mathcal{V}_{2})=\phi_{2}$
for simplicity. Without loss of generality, let us regard the external
inputs as one extra input $u$. Denote $\boldsymbol{y}=\left(\begin{array}{cc}
\boldsymbol{x}^{\top} & u\end{array}\right)^{\top}$, then $\boldsymbol{x}\left(t\right)=\left(\begin{array}{cc}
I_{n} & 0_{n\times1}\end{array}\right)\boldsymbol{y}\left(t\right)$ and the CLFN \eqref{eq:unsigned-LF-overall} can be characterized
as $\dot{\boldsymbol{y}}(t)=A\boldsymbol{y}(t)$, where $A=\left(\begin{array}{cc}
-L_{B} & B\mathds{1}_{m}\\
0_{1\times n} & 0
\end{array}\right)$. Let $\tilde{\lambda}_{p}\le\tilde{\lambda}_{p-1}\le\cdots\le\tilde{\lambda}_{1}=0$
denote the distinct ordered eigenvalues of $A$, whose algebraic multiplicity
is denoted by $n_{i}$ where $p\le n+1$ and $i\in\underline{p}$.
Let $J$ be the Jordan canonical form associate with $A$, i.e., $A=SJS^{-1}$,
where $S=\left(\boldsymbol{\varphi}_{1},\boldsymbol{\varphi}_{2},\ldots,\boldsymbol{\varphi}_{n+1}\right)\in\mathbb{R}^{\left(n+1\right)\times\left(n+1\right)}$
and $J=\text{{\bf diag}}\left\{ J(\tilde{\lambda}_{1}),J(\tilde{\lambda}_{2}),\ldots,J(\tilde{\lambda}_{p})\right\} \in\mathbb{R}^{\left(n+1\right)\times\left(n+1\right)}$.
According to the solution to the matrix ordinary differential equation
$\dot{\boldsymbol{y}}(t)=A\boldsymbol{y}(t)$, the time derivative
of $\boldsymbol{y}(t)$ is, 
\begin{align}
\dot{\boldsymbol{y}}(t) & =Ae^{At}\boldsymbol{y}(0)=SJe^{Jt}S^{-1}\boldsymbol{y}(0);
\end{align}
therefore, one has, 
\begin{align}
\|\phi_{q}\dot{\boldsymbol{x}}(t)\|^{2}= & \left(\dot{\boldsymbol{x}}(t)\right)^{\top}\phi_{q}^{\top}\phi_{q}\dot{\boldsymbol{x}}(t)\nonumber \\
= & \left(\dot{\boldsymbol{y}}(t)\right)^{\top}\left(\begin{array}{c}
I_{n}\\
0_{1\times n}
\end{array}\right)\phi_{q}^{\top}\phi_{q}\left(\begin{array}{cc}
I_{n} & 0_{n\times1}\end{array}\right)\dot{\boldsymbol{y}}(t)\\
= & \boldsymbol{y}(0)^{\top}(S^{-1})^{\top}e^{J^{\top}t}J^{\top}S^{\top}\tilde{\phi}_{q}^{\top}\tilde{\phi}_{q}SJe^{Jt}S^{-1}\boldsymbol{y}(0),\label{eq:14}
\end{align}
where $q\in\underline{2}$ and $\tilde{\phi}_{q}=\phi_{q}\left(\begin{array}{cc}
I_{n} & 0_{n\times1}\end{array}\right)$.

Denote $\boldsymbol{\alpha}_{qi}=\tilde{\phi}_{q}\boldsymbol{\varphi}_{i}$
and $\boldsymbol{\beta}=\left(\beta_{1},\beta_{2},\ldots,\beta_{n+1}\right)^{\top}=S^{-1}\boldsymbol{y}(0)\in\mathbb{R}^{n+1}$
for $q\in\underline{2}$ and $i\in\underline{n+1}$. By analyzing
\eqref{eq:14}, we have,
\begin{eqnarray*}
 &  & \tilde{\phi}_{q}SJe^{Jt}S^{-1}\boldsymbol{y}(0)\\
 & = & \left(\begin{array}{c}
\boldsymbol{\alpha}_{q1},\ldots,\boldsymbol{\alpha}_{q(n+1)}\end{array}\right)\\
 &  & \text{{\bf diag}}\left\{ J(\tilde{\lambda}_{1}),\ldots,J(\tilde{\lambda}_{p})\right\} \\
 &  & \text{{\bf diag}}\left\{ e^{J(\tilde{\lambda}_{1})t},\ldots,e^{J(\tilde{\lambda}_{p})t}\right\} \left(\begin{array}{c}
\beta_{1}\\
\vdots\\
\beta_{n+1}
\end{array}\right),\\
 & = & {\displaystyle \sum_{k=1}^{p}}\left(\sum_{i=q_{k-1}+1}^{q_{k}}\left(\sum_{j=0}^{q_{k}-i}\frac{t^{j}}{j!}\beta_{i+j}\tilde{\lambda}_{k}+\sum_{j=1}^{q_{k}-i}\beta_{i+j}\right)\boldsymbol{\alpha}_{qi}\right)e^{\tilde{\lambda}_{k}t},
\end{eqnarray*}
where $q_{k}=\sum_{h=0}^{k}n_{h}$ and $n_{0}=0$. 

Now, let us denote 
\begin{equation}
f_{1,p}(q)={\displaystyle \sum_{k=1}^{p}}\left(\sum_{i=q_{k-1}+1}^{q_{k}}\left(\sum_{j=0}^{q_{k}-i}\frac{t^{j}}{j!}\beta_{i+j}\tilde{\lambda}_{k}+\sum_{j=1}^{q_{k}-i}\beta_{i+j}\right)\boldsymbol{\alpha}_{qi}\right)e^{\tilde{\lambda}_{k}t}.
\end{equation}
Then one has, 
\begin{align}
 & \|\phi_{q}\dot{\boldsymbol{x}}(t)\|^{2}\nonumber \\
= & f_{1,p}^{\top}(q)f_{1,p}(q)\nonumber \\
= & \lambda_{2}^{2}\beta_{2}^{2}e^{2\tilde{\lambda}_{2}t}\boldsymbol{\alpha}_{q2}^{\top}\boldsymbol{\alpha}_{q2}+2\lambda_{2}\beta_{2}e^{\tilde{\lambda}_{2}t}\boldsymbol{\alpha}_{q2}^{\top}f_{3,p}(q)+f_{3,p}^{\top}(q)f_{3,p}(q),
\end{align}
where,

\[
f_{3,p}(q)={\displaystyle \sum_{k=3}^{p}}\left(\sum_{i=q_{k-1}+1}^{q_{k}}\left(\sum_{j=0}^{q_{k}-i}\frac{t^{j}}{j!}\beta_{i+j}\tilde{\lambda}_{k}+\sum_{j=1}^{q_{k}-i}\beta_{i+j}\right)\boldsymbol{\alpha}_{qi}\right)e^{\tilde{\lambda}_{k}t}.
\]
Hence, by straightforward computation, we have,

\begin{equation}
\lim_{t\rightarrow\infty}\frac{\|\phi_{1}\dot{\boldsymbol{x}}(t)\|}{\|\phi_{2}\dot{\boldsymbol{x}}(t)\|}=\left(\frac{\boldsymbol{\alpha}_{12}^{\top}\boldsymbol{\alpha}_{12}}{\boldsymbol{\alpha}_{22}^{\top}\boldsymbol{\alpha}_{22}}\right)^{\frac{1}{2}}=\frac{\|\phi_{1}\boldsymbol{v_{1}}(L_{B})\|}{\|\phi_{2}\boldsymbol{v_{1}}(L_{B})\|}.
\end{equation}
The statement of the lemma now follows by following a few straightforward
steps, that have been omitted for brevity. 
\end{proof}
\begin{rem}
Theorem \ref{thm:relative-tempo-CLFN} provides a quantitative connection
between the relative tempo (constructed from local observations of
each agent) and the Laplacian eigenvector of the underlying network.
According to Theorem \ref{thm:reachability-FSN-SAN} and Theorem \ref{thm:convergence-rate-CLFN},
such a connection enables a distributed implementation of neighbor
selection for enhancing the convergence rate of the network. 
\end{rem}
We shall provide illustrative examples for Theorem \ref{thm:relative-tempo-CLFN}
in \S\ref{sec:Simulations}. 
\begin{rem}
\label{rem:CLFN-FSN}After the neighbor selection process, each agent
has an updated \emph{reduced neighbor set.} The reduced neighbor set
for agent $i\text{\ensuremath{\in\mathcal{V}}}$ in order to construct
the FSN network for the CLFN \eqref{eq:unsigned-LF-overall} is, 
\begin{align}
\mathcal{N}_{i}^{\text{FSN}} & =\left\{ j\in\mathcal{N}_{i}\mid\boldsymbol{v}_{\text{1}}(L_{B})_{ij}>1\right\} \label{eq:FSN-CLFN-Ni-1}\\
 & =\left\{ j\in\mathcal{N}_{i}\mid\text{\ensuremath{\mathbb{L}}}^{c}(i,j)>1\right\} ,\label{eq:FSN-CLFN-Ni-2}
\end{align}
where \eqref{eq:FSN-CLFN-Ni-1} and \eqref{eq:FSN-CLFN-Ni-2} characterize
$\mathcal{N}_{i}^{\text{FSN}}$ from the perspectives of Laplacian
eigenvector and relative tempo, respectively. 
\end{rem}

\section{Discrete-time Leader-follower Networks \label{sec:discrete-time}}

In the following, we shall present the parallel results for discrete-time
case where the technical treatment relies on non-negative matrix analysis
\citet{seneta2006non}. In a discrete-time leader-follower network
(DLFN), the individual dynamics is represented as, 
\begin{equation}
x_{i}(k+1)=p_{ii}x_{i}(k)+\sum_{j=1,j\neq i}^{n}p_{ij}x_{j}(k)+\sum_{l=1}^{m}q_{il}u_{l},\ i\in\mathcal{V},\label{eq:consensus-protocol-influenced}
\end{equation}
where $p_{ij}=\frac{w_{ij}}{\delta_{i}+d_{i}}$ for $(i,j)\in\mathcal{E}$\footnote{For the discussion of DLFNs, we assume that $(i,i)\in\mathcal{E}$
for all $i\in\mathcal{V}$, implying that $w_{ii}>0$ for all $i\in\mathcal{V}$.} and $p_{ij}=0$ otherwise; $q_{il}=\frac{\delta_{i}}{\delta_{i}+d_{i}}$
if $i\in\mathcal{V}_{\text{L}}$ and $q_{il}=0$ otherwise. Here we
assume that $w_{ii}\neq0$ for each agent $i\in\mathcal{V}_{\text{}}$.
The DLFN \eqref{eq:consensus-protocol-influenced} can also be considered
as the consensus dynamics influenced by stubborn agents\ \citep{ghaderi2014opinion}.
One can also find the analog in absorbing Markov chains, where a constant
input can be modeled as the absorbing state \citep{seneta2006non}.

The overall dynamics of \eqref{eq:consensus-protocol-influenced}
admits the form, 
\begin{equation}
\boldsymbol{x}(k+1)=P\boldsymbol{x}(k)+Q\boldsymbol{u},\label{eq:consensus-network-influenced}
\end{equation}
where $\boldsymbol{u}=\left(u_{1},\ldots,u_{m}\right)^{\top}$, $P=\left(p_{ij}\right)\in\mathbb{R}^{n\times n}$,
and $Q=(q_{il})\in\mathbb{R}^{n\times m}$. Note that $P=(D+\text{diag}\left\{ \boldsymbol{\delta}\right\} )^{-1}W$
can be viewed as a diagonal matrix perturbation on the SIA matrix
$D^{-1}W$ via $\text{diag}\left\{ \boldsymbol{\delta}\right\} $
\citep{cao2008reaching,ren2005consensus}. Denote $\boldsymbol{y}=\left(\begin{array}{cc}
\boldsymbol{x}^{\top} & \boldsymbol{u}^{\top}\end{array}\right)^{\top}$; then the DLFN \eqref{eq:consensus-network-influenced} can be characterized
as, 
\begin{equation}
\boldsymbol{y}(k)=H\boldsymbol{y}(k-1),\ H=\left(\begin{array}{cc}
P & Q\\
0_{m\times n} & I_{m}
\end{array}\right).\label{eq:consensus-network-influenced-y}
\end{equation}

In the discrete-time setting, the leader-follower consensus can be
achieved amongst all the agents in $\mathcal{V}$ if and only if $\mathcal{G}$
is connected and $\boldsymbol{u}$ is homogeneous\ \citep{Jadbabaie2003};
for the heterogeneous case, cluster consensus can emerge, which is
extensively examined in containment control problem\ \citet{ghaderi2014opinion,ji2008containment,cao2012distributed}.
The convergence rate of DLFN \eqref{eq:consensus-network-influenced}
can be characterized by the second largest eigenvalue of $H$ or largest
eigenvalue of $P$.

Following the similar procedure as in the case of continues-time case,
we first present the definition of FSN network for DLFN \textbf{\eqref{eq:consensus-network-influenced}}
on strongly connected directed networks. 
\begin{defn}[FSN network for DLFN]
\label{def:fsn-network-DLFN} Consider the DLFN \textbf{\eqref{eq:consensus-network-influenced}}
on a strongly connected directed network $\mathcal{G}=(\mathcal{V},\mathcal{E},W)$.
The FSN network of $\mathcal{G}$, denoted by $\bar{\mathcal{G}}=(\mathcal{\bar{V}},\bar{\mathcal{E}},\bar{W})$,
is such that $\mathcal{\bar{V}}=\mathcal{V}$, $\mathcal{\bar{E}}\subset\mathcal{E}$
and $\bar{W}=(\bar{w}_{ij})\in\mathbb{R}^{n\times n}$, where $\bar{w}_{ij}=w_{ij}$
if $\boldsymbol{v}_{n}(P)_{ij}>1$ and $\bar{w}_{ij}=0$ if $\boldsymbol{v}_{n}(P)_{ij}\leq1$. 
\end{defn}
Similarly, we provide some preliminary properties of $(\lambda_{n}(P),\boldsymbol{v}_{n}(P))$. 
\begin{lem}
\label{lem:lambda1-v1-DLFN-1} Let $\lambda_{n}(P)$ and $\boldsymbol{v}_{n}(P)$
denote the largest eigenvalue and the corresponding normalized eigenvector
of $P$ in \eqref{eq:consensus-network-influenced}, respectively.
Then, $\lambda_{n}(P)<1$ is a simple real eigenvalue of $P$ and
$\boldsymbol{v}_{n}(P)$ can be chosen to be positive. 
\end{lem}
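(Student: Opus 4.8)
The plan is to mirror the structure of the proof of Lemma~\ref{lem:lambda1-v1-CLFN}, adapting the argument from the continuous-time (perturbed Laplacian) setting to the discrete-time (perturbed row-substochastic) setting. The key observation is that $P=(D+\mathbf{diag}\{\boldsymbol{\delta}\})^{-1}W$ is a nonnegative matrix, obtained by scaling the rows of $W$; because each agent is a follower or a leader, the row sums of $P$ are either $1$ (for followers, where $\delta_i=0$) or strictly less than $1$ (for leaders, where $\delta_i>0$ so $d_i/(\delta_i+d_i)<1$). Thus $P$ is row-substochastic with at least one deficient row, and by Assumption~1 the graph $\mathcal{G}$ is strongly connected, which makes $P$ irreducible (using the convention $w_{ii}>0$ stated in the footnote, $P$ also has positive diagonal and is therefore primitive/aperiodic).

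First I would invoke the Perron--Frobenius theorem for nonnegative irreducible matrices: $\rho(P)$ is a simple real eigenvalue with a positive eigenvector. Since $\lambda_n(P)$ denotes the largest eigenvalue (ordered by real part) and $\rho(P)$ equals the largest modulus, I need the largest eigenvalue to coincide with the spectral radius; this follows because the Perron eigenvalue $\rho(P)$ is real and dominates the real parts of all other eigenvalues. Hence $\lambda_n(P)=\rho(P)$ is simple and real, and the corresponding normalized eigenvector $\boldsymbol{v}_n(P)$ can be chosen positive. Second, to establish $\lambda_n(P)<1$, I would bound the spectral radius using the row sums: by Lemma~\ref{lem:eigenvalue-for-non-negative-matrix-1}, $\rho(P)\le\max_{i\in\underline{n}}\sum_{j=1}^n p_{ij}\le 1$. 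To rule out equality, I would exploit the fact that not all row sums equal $1$ (the leader rows are strictly substochastic) together with irreducibility: a strict inequality $\rho(P)<\max_i\sum_j p_{ij}$ or a direct argument via the positive left/right Perron eigenvector shows that the deficient rows force $\rho(P)<1$. Concretely, writing $P\boldsymbol{v}_n=\lambda_n\boldsymbol{v}_n$ and examining the row achieving $\min_i [\boldsymbol{v}_n]_i/\!\max$, or pairing with the positive left Perron vector against the substochasticity, yields $\lambda_n(P)<1$.

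An alternative route, paralleling the augmented-graph trick of Lemma~\ref{lem:lambda1-v1-CLFN}, is to embed $P$ into the stochastic matrix $H$ of \eqref{eq:consensus-network-influenced-y}, whose eigenvalues include those of $P$ (since $H$ is block upper-triangular with diagonal blocks $P$ and $I_m$). Because $\widehat{\mathcal{G}}$ with the external-input node has a spanning tree rooted at $u$, the matrix $H$ is SIA, so its unique eigenvalue on the unit circle is $1$ with multiplicity equal to the number of input anchors, and all eigenvalues of the sub-block $P$ lie strictly inside the unit disk; this gives $\lambda_n(P)<1$ directly.

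The main obstacle I anticipate is the strictness of the bound $\lambda_n(P)<1$: Lemma~\ref{lem:eigenvalue-for-non-negative-matrix-1} only gives $\rho(P)\le 1$ with equality when all row sums are equal, so I must argue carefully that the presence of at least one leader (a strictly substochastic row) combined with irreducibility is enough to exclude $\rho(P)=1$. The clean way is to use the positive Perron eigenvector: if $\rho(P)=1$, then $P\boldsymbol{v}_n=\boldsymbol{v}_n$ with $\boldsymbol{v}_n>0$; summing the leader row against its strictly substochastic coefficients contradicts the eigen-equation unless $\boldsymbol{v}_n$ is constant, which irreducibility and the deficient row together preclude. This mirrors the contradiction step ($\lambda_1\le 0$) used in Theorem~\ref{thm:reachability-FSN-SAN}, so I expect the same flavor of argument to close the gap.
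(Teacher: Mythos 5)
Your proposal is correct, and your primary argument takes a genuinely different route from the paper's proof; indeed, the paper's proof is precisely your ``alternative route'': it aggregates the homogeneous external inputs into a single extra node $u$, forms the block upper-triangular matrix $H=\left(\begin{smallmatrix}P & Q\mathds{1}_{m}\\
0_{1\times n} & 1
\end{smallmatrix}\right)$ so that every eigenvalue of $P$ is an eigenvalue of $H$, invokes the spanning-tree spectral result for stochastic matrices (the augmented graph is rooted at $u$) to place all eigenvalues of $P$ strictly inside the unit circle, and only then applies Perron--Frobenius to the irreducible nonnegative $P$ for simplicity, realness, and positivity of the eigenvector. Your main route never leaves $P$: follower rows sum to $1$, leader rows sum to $d_{i}/(\delta_{i}+d_{i})<1$, so $P$ is irreducible and row-substochastic with at least one deficient row; Perron--Frobenius gives $\lambda_{n}(P)=\rho(P)$ simple and real with positive right and left eigenvectors, and pairing the positive left Perron vector $\boldsymbol{w}^{\top}P=\rho(P)\boldsymbol{w}^{\top}$ against substochasticity, $\rho(P)\,\boldsymbol{w}^{\top}\mathds{1}_{n}=\boldsymbol{w}^{\top}P\mathds{1}_{n}<\boldsymbol{w}^{\top}\mathds{1}_{n}$, forces $\rho(P)<1$. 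What each buys: your argument is elementary and self-contained (no augmented graph, no spanning-tree theorem), and you correctly identify that Lemma \ref{lem:eigenvalue-for-non-negative-matrix-1} alone yields only $\rho(P)\le1$ with no converse to its equality clause --- the left-Perron pairing (or the max-entry propagation you sketch, which mirrors the contradiction in Theorem \ref{thm:reachability-FSN-SAN}) is exactly the right way to close that gap; the paper's argument buys structural parallelism with Lemma \ref{lem:lambda1-v1-CLFN}, reusing identical augmented-graph machinery in both time scales. One caveat on your alternative route: with $H$ as written in \eqref{eq:consensus-network-influenced-y}, the diagonal block $I_{m}$ makes $1$ an eigenvalue of multiplicity $m$, so that $H$ is not SIA in the strict rank-one-limit sense; you should aggregate the inputs into one node as the paper does, after which simplicity of the unit eigenvalue plus block-triangularity cleanly excludes $1$ from the spectrum of $P$.
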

\begin{proof}
Regarding the external inputs as an extra agent introduced into the
network $\mathcal{G}$, let us denote it as $u$. Furthermore, denote
by the edge set between external inputs and leaders as $\mathcal{E}^{'}$.
We consider the augmented directed\textcolor{blue}{{} }network $\widehat{\mathcal{G}}=(\widehat{\mathcal{V}},\widehat{\mathcal{E}},\widehat{W})$,
such that $\widehat{\mathcal{V}}=\mathcal{V}\cup\left\{ u\right\} $,
$\widehat{\mathcal{E}}=\mathcal{E}\cup\mathcal{E}^{'}$ and $\widehat{W}=\left(\begin{array}{cc}
W & \boldsymbol{\delta}\\
0_{1\times n} & 0
\end{array}\right)$. Then the coefficient matrix of $\widehat{\mathcal{G}}$ is $H=\left(\begin{array}{cc}
P & Q\mathds{1}_{m}\\
0_{1\times n} & 1
\end{array}\right)$. Therefore, all the eigenvalues of $P$ are eigenvalues of $H$.
Since $\widehat{\mathcal{G}}$ has a directed spanning tree with root
$u$, $H$ has only one eigenvalue equaling to $1$ with all other
eigenvalues $\mid\text{{\bf Re}}\left(\lambda_{i}(H)\right)\mid<1$.
Therefore, $\mid\text{{\bf Re}}\left(\lambda_{i}(P)\right)\mid<1$.

Note that the matrix $P$ is non-negative and the network $\mathcal{G}$
is strongly connected, implying that $P$ is a non-negative irreducible
matrix. Thus, according to Perron--Frobenius theorem, the eigenvalue
$\lambda_{n}(P)<1$ is simple and real, and the corresponding eigenvector
$\boldsymbol{v}_{n}(P)$ can be chosen to be positive. 
\end{proof}

\subsection{Reachability Analysis of FSN Networks}

The following result indicates the LF-reachability encoded in the
$\boldsymbol{v}_{n}(P)$ of a DLFN. 
\begin{thm}[LF-reachability]
\label{thm:reachability-FSN-SAN-discrete} Let $\bar{\mathcal{G}}$
be the FSN network of the DLFN \eqref{eq:consensus-network-influenced}
on a strongly connected directed network $\mathcal{G}$. Then all
agents are reachable from external inputs in $\bar{\mathcal{G}}$. 
\end{thm}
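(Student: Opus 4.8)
The plan is to mirror the proof of the continuous-time counterpart, Theorem~\ref{thm:reachability-FSN-SAN}, replacing the pair $(\lambda_1(L_B),\boldsymbol{v}_1(L_B))$ by the largest eigenvalue--eigenvector pair $(\lambda_n(P),\boldsymbol{v}_n(P))$ and invoking Lemma~\ref{lem:lambda1-v1-DLFN-1} in place of Lemma~\ref{lem:lambda1-v1-CLFN}. As before, it suffices to show that every follower is reachable from some leader in $\bar{\mathcal{G}}$, so I would argue by contradiction: assume there is a nonempty set $R=\{i_1,\dots,i_s\}\subseteq\mathcal{V}_{\text{F}}$ of followers, none of which is reachable from any leader in $\bar{\mathcal{G}}$. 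Writing $\boldsymbol{v}_n$ for $\boldsymbol{v}_n(P)$, which is positive by Lemma~\ref{lem:lambda1-v1-DLFN-1}, I pick $i^0\in R$ attaining $[\boldsymbol{v}_n]_{i^0}=\min_{k\in R}[\boldsymbol{v}_n]_k$.

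The crux is a monotonicity claim: $[\boldsymbol{v}_n]_j\ge[\boldsymbol{v}_n]_{i^0}$ for every original in-neighbor $j\in\mathcal{N}_{i^0}$. I would establish this by a nested contradiction. If some $j\in\mathcal{N}_{i^0}$ satisfied $[\boldsymbol{v}_n]_j<[\boldsymbol{v}_n]_{i^0}$, i.e.\ $\boldsymbol{v}_n(P)_{i^0 j}>1$, then by Definition~\ref{def:fsn-network-DLFN} the edge $(i^0,j)$ survives in $\bar{\mathcal{G}}$, so $i^0$ is reachable from $j$ there. Consequently $j$ cannot be reachable from any leader either --- otherwise concatenating a leader-to-$j$ path with the edge $(i^0,j)$ would make $i^0$ reachable from that leader --- which forces $j\in R$ and contradicts the extremal choice of $i^0$. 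This step, where the FSN edge-selection rule, the strong-connectivity hypothesis, and the minimality of $i^0$ interlock, is the one I expect to be the main obstacle to state cleanly.

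With the monotonicity in hand, I would examine the $i^0$th row of $P\boldsymbol{v}_n=\lambda_n(P)\boldsymbol{v}_n$. Since $i^0$ is a follower, $\delta_{i^0}=0$ and the corresponding row of $P$ sums to $\sum_{j\in\mathcal{N}_{i^0}}p_{i^0 j}=d_{i^0}/(\delta_{i^0}+d_{i^0})=1$; this unit row sum is the discrete-time substitute for the in-degree cancellation used in the continuous-time argument. Hence
\[
\lambda_n(P)[\boldsymbol{v}_n]_{i^0}=\sum_{j\in\mathcal{N}_{i^0}}p_{i^0 j}[\boldsymbol{v}_n]_j\ge\sum_{j\in\mathcal{N}_{i^0}}p_{i^0 j}[\boldsymbol{v}_n]_{i^0}=[\boldsymbol{v}_n]_{i^0}.
\]
Cancelling $[\boldsymbol{v}_n]_{i^0}>0$ yields $\lambda_n(P)\ge1$, contradicting $\lambda_n(P)<1$ from Lemma~\ref{lem:lambda1-v1-DLFN-1}. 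Therefore no such unreachable follower set $R$ exists, and every agent is reachable from the external inputs in $\bar{\mathcal{G}}$. Finally, I note that the self-loops assumed in the DLFN (so that $i^0\in\mathcal{N}_{i^0}$) are harmless: the $j=i^0$ term satisfies the monotonicity trivially, and self-loops do not affect reachability between distinct nodes.
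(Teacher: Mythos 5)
Your proposal is correct and follows essentially the same route as the paper's own proof: contradiction via the minimal entry of $\boldsymbol{v}_n(P)$ over the unreachable followers, monotonicity of eigenvector entries over surviving in-neighbors, and the unit row sum of $P$ at a follower to force $\lambda_n(P)\ge 1$, contradicting Lemma~\ref{lem:lambda1-v1-DLFN-1}. In fact, you spell out the nested-contradiction justification of the monotonicity step (and the need to take the unreachable set maximal so that $j\in R$), which the paper asserts without proof.
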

\begin{proof}
Similar to the continuous-time case, we shall prove the result by
contradiction. Assume that there exists a subset of agents $\left\{ i_{1},i_{2},\cdots,i_{s}\right\} \subset\mathcal{V}_{\text{F}}$
in the FSN network $\bar{\mathcal{G}}$ such that $i_{k}$ is not
reachable from any $l\in\mathcal{V}_{\text{L}}$, where $k\in\underline{s}$
and $s\in\mathbb{Z}_{+}$. Let $\lambda_{n}$ be the largest eigenvalue
of $P$ with the corresponding eigenvector $\boldsymbol{v}_{n}$,
i.e., 
\begin{equation}
P\boldsymbol{v}_{n}=\lambda_{n}\boldsymbol{v}_{n}.\label{eq:eigen-equation-1}
\end{equation}
According to Lemma \ref{lem:lambda1-v1-CLFN}, one has $\lambda_{n}<1$
and the corresponding eigenvector $\boldsymbol{v}_{n}$ is positive.
Now let 
\begin{equation}
[\boldsymbol{v}_{n}]_{i^{0}}=\underset{k\in\left\{ i_{1},i_{2},\cdots,i_{s}\right\} }{\min}\left\{ [\boldsymbol{v}_{n}]_{k}\right\} .\label{eq:eigenvector-smallest-DLFN}
\end{equation}
Then, one has $[\boldsymbol{v}_{n}]_{j}\ge[\boldsymbol{v}_{n}]_{i^{0}}$
for all $j\in\mathcal{N}_{i^{0}}$. Examining the $i^{0}$th row in
\eqref{eq:eigen-equation-1} yields, 
\begin{align}
\left(\frac{w_{i^{0}i^{0}}}{{\displaystyle \sum_{j\in\mathcal{N}_{i^{0}}}}w_{i^{0}j}}\right)[\boldsymbol{v}_{n}]_{i^{0}}+\frac{1}{{\displaystyle \sum_{j\in\mathcal{N}_{i^{0}}}}w_{i^{0}j}}\left({\displaystyle \sum_{j\in\mathcal{N}_{i^{0}},j\neq i^{0}}}w_{i^{0}j}[\boldsymbol{v}_{n}]_{j}\right)\nonumber \\
=\lambda_{n}[\boldsymbol{v}_{n}]_{i^{0}}.\label{eq:eigenvector equality-1}
\end{align}
Combining \eqref{eq:eigenvector-smallest-DLFN} and \eqref{eq:eigenvector equality-1},
now yields the following inequality, 
\begin{align}
\left(\frac{w_{i^{0}i^{0}}}{{\displaystyle \sum_{j\in\mathcal{N}_{i^{0}}}}w_{i^{0}j}}\right)[\boldsymbol{v}_{n}]_{i^{0}}+\frac{1}{{\displaystyle \sum_{j\in\mathcal{N}_{i^{0}}}}w_{i^{0}j}}\left({\displaystyle \sum_{j\in\mathcal{N}_{i^{0}},j\neq i^{0}}}w_{i^{0}j}[\boldsymbol{v}_{n}]_{i^{0}}\right)\nonumber \\
\leq\lambda_{n}[\boldsymbol{v}_{n}]_{i^{0}}.
\end{align}
By canceling $[\boldsymbol{v}_{n}]_{i^{0}}>0$ from both sides of
the above inequality, one obtains $\lambda_{n}\geq1$, establishing
a contradiction. 
\end{proof}

\subsection{Convergence Rate Analysis of FSN Networks}

We proceed to examine the convergence rate improvement of DLFN \eqref{eq:consensus-network-influenced}
on its FSN network. We need the following supporting lemmas.
\begin{lem}
\label{lem:edge inequality-discrete}Consider the DLFN \eqref{eq:consensus-network-influenced}
on a strongly connected directed network $\mathcal{G}=(\mathcal{V},\mathcal{E},W)$.
If $\mathcal{V}_{\text{F}}\neq\emptyset$, then there exists at least
one edge $(i,j)\in\mathcal{E}$ such that $[\boldsymbol{v}_{n}(P)]_{j}>[\boldsymbol{v}_{n}(P)]_{i}$. 
\end{lem}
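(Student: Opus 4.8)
The plan is to mirror the structure of the continuous-time argument in Lemma~\ref{lem:edge inequality}, adapting it to the discrete-time eigenvector $\boldsymbol{v}_n(P)$. Throughout I write $\boldsymbol{v}_n$ for $\boldsymbol{v}_n(P)$ for brevity. I argue by contradiction: suppose that every edge $(i,j)\in\mathcal{E}$ satisfies $[\boldsymbol{v}_n]_j\le[\boldsymbol{v}_n]_i$. The goal is to show that this forces $\boldsymbol{v}_n$ to be a constant multiple of $\mathds{1}_n$, which cannot happen when $\mathcal{V}_{\text{F}}\ne\emptyset$.

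First I would establish that a nonempty follower set rules out $\boldsymbol{v}_n=\alpha\mathds{1}_n$. Indeed, if $\boldsymbol{v}_n=\alpha\mathds{1}_n$ were an eigenvector, then examining any row of $P\boldsymbol{v}_n=\lambda_n\boldsymbol{v}_n$ and using that the rows of $P$ corresponding to followers sum to $1$ while rows corresponding to leaders sum to $d_i/(\delta_i+d_i)<1$ would force $\lambda_n=1$, contradicting Lemma~\ref{lem:lambda1-v1-DLFN-1}, which gives $\lambda_n(P)<1$. Hence under the standing assumption there must exist at least one edge $(i',j')\in\mathcal{E}$ with the \emph{strict} inequality $[\boldsymbol{v}_n]_{j'}<[\boldsymbol{v}_n]_{i'}$.

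Next I would propagate this strict inequality around the graph using strong connectivity (Assumption~1). Since $\mathcal{G}$ is strongly connected, there is a directed path from $i'$ back to $j'$ (equivalently from $j'$ to $i'$, depending on edge orientation conventions). Along any such path, the blanket assumption $[\boldsymbol{v}_n]_{\text{head}}\le[\boldsymbol{v}_n]_{\text{tail}}$ on every traversed edge chains together to give a weak inequality between the path's endpoints, while the single strict edge $(i',j')$ makes the chain strict; composing these yields $[\boldsymbol{v}_n]_{i'}<[\boldsymbol{v}_n]_{i'}$ (equivalently $[\boldsymbol{v}_n]_{j'}<[\boldsymbol{v}_n]_{j'}$), an outright contradiction. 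Therefore the blanket assumption fails and there exists an edge $(i,j)\in\mathcal{E}$ with $[\boldsymbol{v}_n]_j>[\boldsymbol{v}_n]_i$, as claimed.

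The step I expect to be the main obstacle is the first one: cleanly verifying that a nonempty follower set precludes $\boldsymbol{v}_n$ being proportional to $\mathds{1}_n$. In the continuous-time case this was immediate because $\mathds{1}_n\in\ker L$ and the diagonal perturbation $\text{diag}\{\boldsymbol{\delta}\}$ acts nontrivially precisely on leaders; here the analogous fact must be read off from the row-stochasticity structure of $P$, where follower rows sum to $1$ and leader rows sum to strictly less than $1$ because of the $\delta_i$ appearing in the denominator $\delta_i+d_i$ and the diverted weight $q_{il}$. Care is needed to confirm that $\mathds{1}_n$ is an eigenvector of $P$ only when $\mathcal{V}_{\text{L}}=\emptyset$, i.e. only when there are no followers to speak of, so that the strict-inequality edge $(i',j')$ is guaranteed to exist. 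Once that is pinned down, the strong-connectivity propagation argument is routine and parallels Lemma~\ref{lem:edge inequality} verbatim.
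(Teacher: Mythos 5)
Your proposal is correct and follows essentially the same route as the paper's own proof: assume all edges satisfy the weak reverse inequality, use $\mathcal{V}_{\text{F}}\neq\emptyset$ to rule out $\boldsymbol{v}_n(P)$ being proportional to $\mathds{1}_n$ and hence extract an edge with strict inequality, then chain inequalities along a directed path supplied by strong connectivity to reach a strict self-inequality. The only difference is that you spell out the row-sum argument (follower rows of $P$ sum to $1$, leader rows to $d_i/(\delta_i+d_i)<1$, contradicting $\lambda_n(P)<1$ from Lemma \ref{lem:lambda1-v1-DLFN-1}) that the paper leaves implicit, which is a welcome clarification rather than a different method.
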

\begin{proof}
In the proof, we shall use $\boldsymbol{v}_{n}$ instead of $\boldsymbol{v}_{n}(P)$
for brevity. If $\mathcal{V}_{\text{F}}\neq\emptyset$, assume that
all the edges $(i,j)\in\mathcal{E}$ satisfying $[\boldsymbol{v}_{n}]_{j}\leq[\boldsymbol{v}_{n}]_{i}$;
since $\alpha\mathds{1}_{n}$ is not an eigenvector of $P$ for any
non-zero scalar $\alpha\in\mathbb{R}$, then there at least exists
one edge $(i^{\prime},j^{\prime})$ satisfy $[\boldsymbol{v}_{n}]_{j^{\prime}}<[\boldsymbol{v}_{n}]_{i^{\prime}}$.
Furthermore, since $\mathcal{G}$ is strongly connected, there exists
a path from $i^{\prime}$ to $j^{\prime}$, and thus one has $[\boldsymbol{v}_{n}]_{i^{\prime}}\leq[\boldsymbol{v}_{n}]_{j^{\prime}}<[\boldsymbol{v}_{n}]_{i^{\prime}}$,
establishing a contradiction. Consequently, there exists at least
one edge $(i,j)\in\mathcal{E}$ such that $[\boldsymbol{v}_{n}]_{j}>[\boldsymbol{v}_{n}]_{i}$. 
\end{proof}
\begin{lem}
\label{lem:edge inequality-discrete-all-leader-case}Consider the
DLFN \eqref{eq:consensus-network-influenced} on a strongly connected
directed network $\mathcal{G}=(\mathcal{V},\mathcal{E},W)$. If $\mathcal{V}_{\text{F}}=\emptyset$
and $\alpha\mathds{1}_{n}$ is not an eigenvector of $P$ for any
$\alpha\in\mathbb{R}$, then there exists at least one edge $(i,j)\in\mathcal{E}$
such that $[\boldsymbol{v}_{n}(P)]_{j}>[\boldsymbol{v}_{n}(P)]_{i}$. 
\end{lem}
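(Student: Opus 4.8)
The plan is to mirror, almost verbatim, the contradiction argument already used for Lemma~\ref{lem:edge inequality-discrete}; the only structural difference is that the non-constancy of the Perron eigenvector, which there was deduced from $\mathcal{V}_{\text{F}}\neq\emptyset$, is now handed to us directly as a hypothesis. This lemma is precisely what is needed to cover the all-leader case of the discrete-time analogue of Theorem~\ref{thm:convergence-rate-CLFN}: unlike the continuous-time setting, where $\boldsymbol{v}_{1}(L_{B})=\alpha\mathds{1}_{n}$ whenever $\mathcal{V}_{\text{F}}=\emptyset$, the eigenvector $\boldsymbol{v}_{n}(P)$ can remain non-constant even when every agent is a leader, so a strict improvement in convergence rate can still occur. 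Throughout I would write $\boldsymbol{v}_{n}$ for $\boldsymbol{v}_{n}(P)$, and I would invoke Lemma~\ref{lem:lambda1-v1-DLFN-1} to take $\boldsymbol{v}_{n}$ strictly positive, so that all entrywise comparisons are well posed.

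First I would suppose, for contradiction, that every edge $(i,j)\in\mathcal{E}$ satisfies $[\boldsymbol{v}_{n}]_{j}\le[\boldsymbol{v}_{n}]_{i}$. The next step is to upgrade this to a strict inequality on at least one edge, and this is exactly where the hypothesis that $\alpha\mathds{1}_{n}$ is not an eigenvector of $P$ for any $\alpha\in\mathbb{R}$ is used: if instead every edge satisfied the equality $[\boldsymbol{v}_{n}]_{j}=[\boldsymbol{v}_{n}]_{i}$, then strong connectivity would propagate equality of entries along a path joining any two nodes, forcing $\boldsymbol{v}_{n}=\alpha\mathds{1}_{n}$ and contradicting the hypothesis. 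Hence there exists an edge $(i',j')\in\mathcal{E}$ with $[\boldsymbol{v}_{n}]_{j'}<[\boldsymbol{v}_{n}]_{i'}$.

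Finally I would close the loop by strong connectivity. Since $\mathcal{G}$ is strongly connected, there is a directed path from $i'$ to $j'$; applying the standing assumption $[\boldsymbol{v}_{n}]_{b}\le[\boldsymbol{v}_{n}]_{a}$ to each edge $(a,b)$ traversed along this path and chaining the inequalities yields $[\boldsymbol{v}_{n}]_{i'}\le[\boldsymbol{v}_{n}]_{j'}$. Combined with the strict inequality already obtained, this gives $[\boldsymbol{v}_{n}]_{i'}\le[\boldsymbol{v}_{n}]_{j'}<[\boldsymbol{v}_{n}]_{i'}$, a contradiction. Consequently the standing assumption fails, so there must exist an edge $(i,j)\in\mathcal{E}$ with $[\boldsymbol{v}_{n}]_{j}>[\boldsymbol{v}_{n}]_{i}$, which is the claim. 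I do not anticipate any genuine obstacle, as the argument is structurally identical to that of Lemma~\ref{lem:edge inequality-discrete}; the only point demanding care is consistent bookkeeping of the edge-orientation convention (so that the chain of inequalities runs in the correct direction) and the use of positivity of $\boldsymbol{v}_{n}$ to guarantee the comparisons are meaningful.
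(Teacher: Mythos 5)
Your proposal is correct and takes essentially the same approach as the paper: the paper's own proof is just the remark ``similar to the proof of Lemma~\ref{lem:edge inequality-discrete} and thus omitted,'' and your argument reproduces exactly that contradiction argument, with the non-constancy of $\boldsymbol{v}_{n}(P)$ supplied directly by the hypothesis that $\alpha\mathds{1}_{n}$ is not an eigenvector of $P$ rather than deduced from $\mathcal{V}_{\text{F}}\neq\emptyset$. Your handling of the edge orientation (chaining $[\boldsymbol{v}_{n}]_{b}\le[\boldsymbol{v}_{n}]_{a}$ along the directed path from $i'$ to $j'$ to get $[\boldsymbol{v}_{n}]_{i'}\le[\boldsymbol{v}_{n}]_{j'}$) is consistent with the paper's conventions, so there is nothing to correct.
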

\begin{proof}
Similar to the proof of Lemma \ref{lem:edge inequality-discrete}
and thus omitted. 
\end{proof}
\begin{rem}
Lemma \ref{lem:edge inequality-discrete} and Lemma \ref{lem:edge inequality-discrete-all-leader-case}
imply that for the DLFN \eqref{eq:consensus-network-influenced} on
a strongly connected directed network $\mathcal{G}=(\mathcal{V},\mathcal{E},W)$,
there exists at least one edge that will be removed in the construction
of the corresponding FSN network.
\end{rem}
\begin{rem}
\label{rem:discrete-all-leader-case-2}Apart from the case in Lemma
\ref{lem:edge inequality-discrete-all-leader-case}, when one has
$\alpha\mathds{1}_{n}$ as an eigenvector of $P$ for any nonzero
scalar $\alpha\in\mathbb{R}$ and $\mathcal{V}_{\text{F}}=\emptyset$,
then consistent with the Definition \ref{def:fsn-network-CLFN}, all
edges in the network $\mathcal{G}$ will be removed. 
\end{rem}
\begin{thm}
\label{thm:convergence-rate-FSN-SAN-discrete} Let $\bar{\mathcal{G}}=(\mathcal{V},\mathcal{\bar{E}},\bar{W})$
denote the FSN network of the DLFN \eqref{eq:consensus-network-influenced}.
Then 
\[
\lambda_{n}(P(\bar{\mathcal{G}}))<\lambda_{n}(P(\mathcal{G})).
\]
\end{thm}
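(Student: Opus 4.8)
The plan is to mirror the spectral-radius comparison used in the proof of Theorem~\ref{thm:convergence-rate-CLFN}, transposed to the non-negative-matrix setting appropriate for the SIA-type matrix $P$. Write $\boldsymbol{v}=\boldsymbol{v}_n(P(\mathcal{G}))$ and $\lambda=\lambda_n(P(\mathcal{G}))$; by Lemma~\ref{lem:lambda1-v1-DLFN-1}, $\boldsymbol{v}$ is positive and $0<\lambda<1$, and since $P(\bar{\mathcal{G}})$ is non-negative its largest eigenvalue equals its spectral radius, so it suffices to show $\rho(P(\bar{\mathcal{G}}))<\lambda$. First I would set $S=\text{diag}\{\boldsymbol{v}\}$, which is invertible because $\boldsymbol{v}>0$, and pass to the similar (hence cospectral) non-negative matrix $S^{-1}P(\bar{\mathcal{G}})S$, whose $(i,j)$ entry is $[P(\bar{\mathcal{G}})]_{ij}[\boldsymbol{v}]_j/[\boldsymbol{v}]_i$. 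Applying the row-sum bound of Lemma~\ref{lem:eigenvalue-for-non-negative-matrix-1} gives $\rho(P(\bar{\mathcal{G}}))\le\max_{i\in\underline{n}} r_i$, where $r_i=\frac{1}{[\boldsymbol{v}]_i}\sum_{j}[P(\bar{\mathcal{G}})]_{ij}[\boldsymbol{v}]_j$. The whole argument then reduces to proving $r_i<\lambda$ for every $i$.

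The key step is to evaluate $r_i$ by tracking how the edge deletion of Definition~\ref{def:fsn-network-DLFN} changes the row-normalization of $P$. Partition the original in-neighbors into the kept set $K_i=\{j:[\boldsymbol{v}]_i>[\boldsymbol{v}]_j\}$ and the removed set $R_i=\{j:[\boldsymbol{v}]_i\le[\boldsymbol{v}]_j\}$; crucially $R_i$ always contains the self-loop $j=i$, since $w_{ii}>0$ by the standing assumption on DLFNs and $\boldsymbol{v}_n(P)_{ii}=1$ is not $>1$. Writing $\bar d_i=\sum_{j\in K_i}w_{ij}$ and using the eigen-relation $\sum_j w_{ij}[\boldsymbol{v}]_j=\lambda(d_i+\delta_i)[\boldsymbol{v}]_i$ to substitute for $\sum_{j\in K_i}w_{ij}[\boldsymbol{v}]_j$, I would reduce the desired inequality $r_i<\lambda$ (after clearing the positive denominator $[\boldsymbol{v}]_i(\bar d_i+\delta_i)$) to the clean condition $\sum_{j\in R_i}w_{ij}\big(\lambda[\boldsymbol{v}]_i-[\boldsymbol{v}]_j\big)<0$.

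This last inequality is then immediate: for every $j\in R_i$ one has $[\boldsymbol{v}]_j\ge[\boldsymbol{v}]_i>\lambda[\boldsymbol{v}]_i$ because $0<\lambda<1$ and $[\boldsymbol{v}]_i>0$, so each summand is non-positive, and the self-loop term $w_{ii}(\lambda-1)[\boldsymbol{v}]_i$ is strictly negative; hence the sum is strictly negative and $r_i<\lambda$ for all $i$. (If $\bar d_i+\delta_i=0$ the $i$th row of $P(\bar{\mathcal{G}})$ vanishes and $r_i=0<\lambda$ trivially, although Theorem~\ref{thm:reachability-FSN-SAN-discrete} already guarantees each follower retains an in-neighbor so this degenerate case does not arise.) Taking the maximum over the finitely many indices preserves the strict inequality, giving $\rho(P(\bar{\mathcal{G}}))\le\max_i r_i<\lambda=\lambda_n(P(\mathcal{G}))$.

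The main obstacle, and the point where this proof genuinely departs from the continuous-time argument, is the denominator: unlike $L_B$, the matrix $P$ is row-renormalized, so deleting edges changes $\bar d_i+\delta_i$ and $P(\bar{\mathcal{G}})$ is not simply $P(\mathcal{G})$ with entries zeroed. Handling this renormalization correctly is exactly what the substitution above is designed to do. A pleasant payoff is that the ever-present self-loops force strict improvement in every row, so that---unlike Theorem~\ref{thm:convergence-rate-CLFN}, which needed a case split and Lemmas~\ref{lem:edge inequality-discrete}/\ref{lem:edge inequality-discrete-all-leader-case} together with irreducibility to rule out equality---the strict inequality here holds unconditionally, with no need to separate the ``all agents are leaders'' case.
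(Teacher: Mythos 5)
Your overall strategy is the same as the paper's: conjugate $P(\bar{\mathcal{G}})$ by $S=\text{diag}\{\boldsymbol{v}_n(P)\}$, bound the spectral radius by the maximal row sum via Lemma \ref{lem:eigenvalue-for-non-negative-matrix-1}, and compare each row sum against $\lambda_n(P(\mathcal{G}))$ using the eigen-relation, with the row-renormalization of $P$ handled by the same bookkeeping. The divergence --- and the gap --- is your treatment of self-loops. Your strict inequality $r_i<\lambda$ in \emph{every} row rests entirely on the claim that the self-loop $(i,i)$ is removed by neighbor selection, so that $i\in R_i$ contributes the strictly negative term $w_{ii}(\lambda-1)[\boldsymbol{v}]_i$. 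But the paper's setting keeps self-loops: the standing assumption for DLFNs (the footnote to \eqref{eq:consensus-protocol-influenced}) is that $w_{ii}>0$ for every agent, which $P(\bar{\mathcal{G}})$ must also satisfy to be a DLFN matrix, and the paper's own proof explicitly writes $\bar{p}_{ii}=p_{ii}+\varDelta_{ii}$ with $\varDelta_{ii}\geq 0$ --- the diagonal entry weakly \emph{increases} after renormalization; it is not zeroed out. Under that reading your key claim is false: for any agent $i$ all of whose in-neighbors $j\neq i$ satisfy $[\boldsymbol{v}]_j<[\boldsymbol{v}]_i$ (so no edge is removed from row $i$), the $i$th rows of $P(\bar{\mathcal{G}})$ and $P(\mathcal{G})$ coincide, and the eigen-equation forces $r_i=\lambda$ exactly. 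This is not a rare degeneracy: agent $4$ in the paper's own example $\mathcal{G}_7$ is such a node ($\mathcal{N}_4=\left\{3,4\right\}$ with $[\boldsymbol{v}_n]_4>[\boldsymbol{v}_n]_3$). Your argument then yields only $\rho(P(\bar{\mathcal{G}}))\leq\lambda_n(P(\mathcal{G}))$, and the strictness --- the entire content of the theorem --- is unproven.

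This is precisely why the paper's proof is longer: after the non-strict per-row bound, it invokes Lemma \ref{lem:edge inequality-discrete} (or Lemma \ref{lem:edge inequality-discrete-all-leader-case} when $\mathcal{V}_{\text{F}}=\emptyset$) to produce at least one row $i^{\prime}$ containing a removed edge, hence a strictly smaller row sum, and then uses the equality clause of Lemma \ref{lem:eigenvalue-for-non-negative-matrix-1} (equality only when all row sums coincide) to convert one strictly smaller row into strictness of the spectral radius; your proposal uses neither lemma and has no substitute for this step. To be fair, your reading is the literal one of Definition \ref{def:fsn-network-DLFN} (since $\boldsymbol{v}_n(P)_{ii}=1\leq 1$ would set $\bar{w}_{ii}=0$), and under that literal reading your proof is correct and genuinely cleaner --- every row improves strictly and no case analysis is needed. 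But a blind proof of this theorem must work in the setting the paper's proof actually operates in; at minimum you need to flag the self-loop convention as the hinge of your argument and supply the missing strictness step for the self-loops-retained case.
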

\begin{proof}
Let $P(\bar{\mathcal{G}})=\left(p_{ij}\right)\in\mathbb{R}^{n\times n}$
and $P(\mathcal{G})=\left(\bar{p}_{ij}\right)\in\mathbb{R}^{n\times n}$.
Let $\varDelta=\left(\varDelta_{ij}\right)\in\mathbb{R}^{n\times n}=P(\bar{\mathcal{G}})-P(\mathcal{G})$.
It can be inferred that if $\varDelta_{ij}>0$, then $\left[\boldsymbol{v}_{n}\right]_{j}<\left[\boldsymbol{v}_{n}\right]_{i}$;
if $\varDelta_{ij}<0$, then $\left[\boldsymbol{v}_{n}\right]_{j}\geq\left[\boldsymbol{v}_{n}\right]_{i}$.
Note that 
\begin{equation}
\bar{p}_{ii}=p_{ii}+{\displaystyle \varDelta_{ii}},
\end{equation}
and 
\begin{equation}
\bar{p}_{ij}=p_{ij}+{\displaystyle \varDelta_{ij}},
\end{equation}
where $\varDelta_{ii}\geq0$. Let $S=\mathbf{diag}\left\{ \boldsymbol{v}_{n}(P)\right\} $;
then one has, 
\begin{equation}
S^{-1}\bar{P}S=\left(\frac{\bar{p}_{ij}[\boldsymbol{v}_{n}(P)]_{j}}{[\boldsymbol{v}_{n}(P)]_{i}}\right)\in\mathbb{R}^{n\times n}
\end{equation}
and $\lambda_{n}\left(S^{-1}\bar{P}S\right)=\lambda_{n}\left(\bar{P}\right)$.
According to Lemma \ref{lem:eigenvalue-for-non-negative-matrix-1},

\begin{equation}
\lambda_{n}(S^{-1}\bar{P}S)\leq\underset{i\in\underline{n}}{\mathbf{max}}\frac{1}{\left[\boldsymbol{v}_{n}(P)\right]_{i}}\sum_{j=1}^{n}\bar{p}_{ij}\left[\boldsymbol{v}_{n}(P)\right]_{j}.
\end{equation}
We need to discuss the following two cases.

Case 1: Agent $i$ is a leader; denote by $|\mathcal{N}_{i}|$ and
$|\mathcal{\bar{N}}_{i}|$ as the in-degree of agent $i$ in $\mathcal{G}$
and $\bar{\mathcal{G}}$, respectively. Note that 
\begin{eqnarray}
 &  & {\displaystyle \text{\ensuremath{\sum_{j=1}^{n}}}}\varDelta_{ij}=\frac{{\displaystyle \sum_{j\in\bar{\mathcal{N}}_{i}}}w_{ij}}{\delta_{i}+{\displaystyle \sum_{j\in\bar{\mathcal{N}}_{i}}}w_{ij}}-\frac{{\displaystyle \sum_{j\in\mathcal{N}_{i}}}w_{ij}}{\delta_{i}+{\displaystyle \sum_{j\in\mathcal{N}_{i}}}w_{ij}}\nonumber \\
 & = & \frac{\left({\displaystyle \sum_{j\in\bar{\mathcal{N}}_{i}}}w_{ij}-{\displaystyle \sum_{j\in\mathcal{N}_{i}}}w_{ij}\right)\delta_{i}}{\left(\delta_{i}+{\displaystyle \sum_{j\in\bar{\mathcal{N}}_{i}}}w_{ij}\right)\left(\delta_{i}+{\displaystyle \sum_{j\in\mathcal{N}_{i}}}w_{ij}\right)}\leq0.\label{eq:all-leader-case}
\end{eqnarray}
Thus $\sum_{j=1,\varDelta_{ij}<0}^{n}|\varDelta_{ij}|\frac{\left[\boldsymbol{v}_{n}\right]_{j}}{\left[\boldsymbol{v}_{n}\right]_{i}}\geq\sum_{j=1,\varDelta_{ij}>0}^{n}\varDelta_{ij}\frac{\left[\boldsymbol{v}_{n}\right]_{j}}{\left[\boldsymbol{v}_{n}\right]_{i}}$.
Then one has

\begin{eqnarray}
 &  & \frac{1}{\left[\boldsymbol{v}_{n}\right]_{i}}\sum_{j=1}^{n}\bar{p}_{ij}\left[\boldsymbol{v}_{n}\right]_{j}\nonumber \\
 & = & \frac{1}{\left[\boldsymbol{v}_{n}\right]_{i}}\sum_{j=1}^{n}p_{ij}\left[\boldsymbol{v}_{n}\right]_{j}+\frac{1}{\left[\boldsymbol{v}_{n}\right]_{i}}\sum_{j=1}^{n}\triangle_{ij}\left[\boldsymbol{v}_{n}\right]_{j}\nonumber \\
 & \leq & \frac{1}{\left[\boldsymbol{v}_{n}\right]_{i}}\sum_{j=1}^{n}p_{ij}\left[\boldsymbol{v}_{n}\right]_{j}.
\end{eqnarray}

Case 2: Agent $i$ is a follower. Then one has ${\displaystyle \sum_{j=1}^{n}}\varDelta_{ij}=0$
and 
\begin{equation}
\sum_{j=1,\varDelta_{ij}<0}^{n}|\varDelta_{ij}|\frac{\left[\boldsymbol{v}_{n}\right]_{j}}{\left[\boldsymbol{v}_{n}\right]_{i}}\geq\sum_{j=1,\varDelta_{ij}>0}^{n}\varDelta_{ij}\frac{\left[\boldsymbol{v}_{n}\right]_{j}}{\left[\boldsymbol{v}_{n}\right]_{i}}.
\end{equation}
Thus, 
\begin{equation}
\frac{1}{\left[\boldsymbol{v}_{n}\right]_{i}}\sum_{j=1}^{n}\bar{p}_{ij}\left[\boldsymbol{v}_{n}\right]_{j}\leq\frac{1}{\left[\boldsymbol{v}_{n}\right]_{i}}\sum_{j=1}^{n}p_{ij}\left[\boldsymbol{v}_{n}\right]_{j}.
\end{equation}

For the case $\mathcal{V}_{\text{F}}\neq\emptyset$, according to
Lemma \ref{lem:edge inequality-discrete}, there exists at least one
edge $(i^{\prime},j^{\prime})$ such that $[\boldsymbol{v}_{n}(P)]_{j^{\prime}}>[\boldsymbol{v}_{n}(P)]_{i^{\prime}}$.
Therefore one has, 
\begin{equation}
\sum_{j=1,\varDelta_{i^{\prime}j}<0}^{n}|\varDelta_{i^{\prime}j}|\frac{\left[\boldsymbol{v}_{n}\right]_{j}}{\left[\boldsymbol{v}_{n}\right]_{i^{\prime}}}>\sum_{j=1,\varDelta_{i^{\prime}j}>0}^{n}\varDelta_{i^{\prime}j}\frac{\left[\boldsymbol{v}_{n}\right]_{j}}{\left[\boldsymbol{v}_{n}\right]_{i^{\prime}}},
\end{equation}
and 
\begin{equation}
\frac{1}{\left[\boldsymbol{v}_{n}\right]_{i^{\prime}}}\sum_{j=1}^{n}\bar{p}_{i^{\prime}j}\left[\boldsymbol{v}_{n}\right]_{j}<\frac{1}{\left[\boldsymbol{v}_{n}\right]_{i^{\prime}}}\sum_{j=1}^{n}p_{i^{\prime}j}\left[\boldsymbol{v}_{n}\right]_{j}.
\end{equation}

For the case $\mathcal{V}_{\text{F}}=\emptyset$ and $\alpha\mathds{1}_{n}$
is not an eigenvector of $P$ for any nonzero scalar $\alpha\in\mathbb{R}$,
according to Lemma \ref{lem:edge inequality-discrete-all-leader-case},
there exists at least one edge $(i^{\prime},j^{\prime})$ such that
$[\boldsymbol{v}_{n}(P)]_{j^{\prime}}>[\boldsymbol{v}_{n}(P)]_{i^{\prime}}$.
Therefore one has, 
\begin{equation}
\frac{1}{\left[\boldsymbol{v}_{n}\right]_{i^{\prime}}}\sum_{j=1}^{n}\bar{p}_{i^{\prime}j}\left[\boldsymbol{v}_{n}\right]_{j}<\frac{1}{\left[\boldsymbol{v}_{n}\right]_{i^{\prime}}}\sum_{j=1}^{n}p_{i^{\prime}j}\left[\boldsymbol{v}_{n}\right]_{j}.
\end{equation}

For the case $\mathcal{V}_{\text{F}}=\emptyset$ and $\alpha\mathds{1}_{n}$
is an eigenvector of $P$ for any nonzero scalar $\alpha\in\mathbb{R}$,
according to Remark \ref{rem:discrete-all-leader-case-2}, all edges
will be removed and one has, ${\displaystyle \sum_{j\in\bar{\mathcal{N}}_{i}}}w_{ij}-{\displaystyle \sum_{j\in\mathcal{N}_{i}}}w_{ij}<0$
in the equality \eqref{eq:all-leader-case} for any $i\in\underline{n}$.
Therefore one has, 
\begin{equation}
\frac{1}{\left[\boldsymbol{v}_{n}\right]_{i}}\sum_{j=1}^{n}\bar{p}_{ij}\left[\boldsymbol{v}_{n}\right]_{j}<\frac{1}{\left[\boldsymbol{v}_{n}\right]_{i}}\sum_{j=1}^{n}p_{ij}\left[\boldsymbol{v}_{n}\right]_{j}
\end{equation}
for any $i\in\underline{n}$.

It is claimed in Lemma \ref{lem:eigenvalue-for-non-negative-matrix-1}
that only when all row sums of $S^{-1}P(\bar{\mathcal{G}})S$ are
equal, 
\begin{equation}
\lambda_{n}\left(S^{-1}P(\bar{\mathcal{G}})S\right)=\underset{i\in\underline{n}}{\mathbf{max}}\frac{1}{\left[\boldsymbol{v}_{n}\right]_{i}}\sum_{j=1}^{n}\bar{p}_{ij}\left[\boldsymbol{v}_{n}\right]_{j}.
\end{equation}
Therefore in this case 
one has, 
\begin{eqnarray}
 &  & \lambda_{n}(P(\bar{\mathcal{G}}))=\frac{1}{\left[\boldsymbol{v}_{n}\right]_{i}}\sum_{j=1}^{n}\bar{p}_{ij}\left[\boldsymbol{v}_{n}\right]_{j}\nonumber \\
 & < & \frac{1}{\left[\boldsymbol{v}_{n}\right]_{i}}\sum_{j=1}^{n}p_{ij}\left[\boldsymbol{v}_{n}\right]_{j}=\lambda_{n}(P(\mathcal{G}));
\end{eqnarray}
otherwise, 
\begin{equation}
\lambda_{n}\left(P(\bar{\mathcal{G}})\right)<\underset{i\in\underline{n}}{\mathbf{max}}\frac{1}{\left[\boldsymbol{v}_{n}\right]_{i}}\sum_{j=1}^{n}\bar{p}_{ij}\left[\boldsymbol{v}_{n}\right]_{j}\leq\lambda_{n}(P(\mathcal{G})).
\end{equation}
The proof is then finished.
\end{proof}

\subsection{Distributed Implementation of Neighbor Selection }

In this section, we establish quantitative connections between the
eigenvector $\boldsymbol{v}_{n}(P)$ and relative tempo in the discrete-time
setting in order to examine the distributed implementation of the
neighbor selection algorithm for DLFNs. 
\begin{defn}
\label{def:RT}For the DLFN \eqref{eq:consensus-network-influenced},
the relative rate of change in state of two nonempty subgroups of
agents $\mathcal{\mathcal{V}}_{1}\subset\mathcal{\mathcal{V}}$ and
$\mathcal{\mathcal{V}}_{2}\subset\mathcal{\mathcal{V}}$, denoted
by $\mathbb{L}^{d}\left(\mathcal{\mathcal{V}}_{1},\mathcal{\mathcal{V}}_{2}\right)$,
is 
\begin{equation}
\mathbb{L}^{d}\left(\mathcal{\mathcal{V}}_{1},\mathcal{\mathcal{V}}_{2}\right)\triangleq\mathbf{\lim}_{t\rightarrow\infty}\frac{\|\phi(\mathcal{\mathcal{V}}_{1})(\boldsymbol{x}(k)-\boldsymbol{x}(k-1))\|}{\|\phi(\mathcal{\mathcal{V}}_{2})(\boldsymbol{x}(k)-\boldsymbol{x}(k-1))\|}.\label{eq:def-relative-tempo}
\end{equation}
\end{defn}
\begin{thm}
\label{thm:relative-tempo-DLFN} Let $\mathcal{V}_{1}\subset\mathcal{V}$
and $\mathcal{V}_{2}\subset\mathcal{V}$ be two subsets of agents
in the DLFN \eqref{eq:unsigned-LF-overall}. Then 
\[
\mathbb{L}^{d}(\mathcal{V}_{1},\mathcal{V}_{2})=\frac{\|\phi(\mathcal{V}_{1})\boldsymbol{v}_{n}(P)\|}{\|\phi(\mathcal{V}_{2})\boldsymbol{v}_{n}(P)\|}.
\]
\end{thm}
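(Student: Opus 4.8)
The plan is to mirror the proof of Theorem \ref{thm:relative-tempo-CLFN}, replacing the matrix exponential and its Jordan analysis by the corresponding discrete-time power iteration. Regarding the external inputs as a single extra node $u$ exactly as in the proof of Lemma \ref{lem:lambda1-v1-DLFN-1}, I would work in the augmented coordinates $\boldsymbol{y}=(\boldsymbol{x}^{\top},u)^{\top}\in\mathbb{R}^{n+1}$ of \eqref{eq:consensus-network-influenced-y}, so that $\boldsymbol{y}(k)=H^{k}\boldsymbol{y}(0)$ and $\boldsymbol{x}(k)=\Pi\boldsymbol{y}(k)$ with $\Pi=\left(\begin{array}{cc}I_{n} & 0_{n\times1}\end{array}\right)$. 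The central object is the state increment, which reads
\[
\boldsymbol{x}(k)-\boldsymbol{x}(k-1)=\Pi(H-I)H^{k-1}\boldsymbol{y}(0).
\]
Writing $H=SJS^{-1}$ in Jordan canonical form, this becomes $\Pi S(J-I)J^{k-1}S^{-1}\boldsymbol{y}(0)$, and the whole argument reduces to identifying the dominant surviving mode of $(J-I)J^{k-1}$ as $k\rightarrow\infty$.

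Next I would pin down the spectrum of $H$ via Lemma \ref{lem:lambda1-v1-DLFN-1}. Since $\widehat{\mathcal{G}}$ has a spanning tree rooted at $u$, the eigenvalue $1$ of $H$ is simple; because the DLFN carries self-loops ($w_{ii}>0$), the matrix $P$ is primitive, so $\lambda_{n}(P)=\rho(P)$ is real, simple, and strictly dominant in magnitude among all remaining eigenvalues of $H$. The factor $(J-I)$ is precisely what distinguishes the increment from the trajectory: on the simple eigenvalue-$1$ block it contributes $(1-1)=0$ and thus annihilates the steady-state mode \emph{exactly}, while on the $\lambda_{n}(P)$-block it contributes $(\lambda_{n}(P)-1)\lambda_{n}(P)^{k-1}\neq0$. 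Every other block carries terms of the form $(\lambda-1)\binom{k-1}{j}\lambda^{k-1-j}$ with $|\lambda|<\lambda_{n}(P)$, which vanish relative to $\lambda_{n}(P)^{k-1}$ as $k\rightarrow\infty$ even in the presence of the binomial polynomial weights.

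I would then identify the spatial profile of this dominant mode. Solving $H\boldsymbol{\psi}=\lambda_{n}(P)\boldsymbol{\psi}$ with $\boldsymbol{\psi}=(\boldsymbol{\psi}^{x\top},\psi^{u})^{\top}$, the last row forces $(\lambda_{n}(P)-1)\psi^{u}=0$, hence $\psi^{u}=0$ since $\lambda_{n}(P)<1$, and the top block gives $P\boldsymbol{\psi}^{x}=\lambda_{n}(P)\boldsymbol{\psi}^{x}$; thus $\boldsymbol{\psi}^{x}$ is proportional to $\boldsymbol{v}_{n}(P)$ and $\Pi\boldsymbol{\psi}$ is a scalar multiple of $\boldsymbol{v}_{n}(P)$. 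Consequently, for generic initial data $\boldsymbol{y}(0)$ (whose modal coefficient $c$ along this eigenvector is nonzero) one obtains the asymptotic equivalence $\boldsymbol{x}(k)-\boldsymbol{x}(k-1)=c\,(\lambda_{n}(P)-1)\lambda_{n}(P)^{k-1}\boldsymbol{v}_{n}(P)+o(\lambda_{n}(P)^{k-1})$. Substituting into \eqref{eq:def-relative-tempo} and letting $k\rightarrow\infty$, the common scalar prefactor $c(\lambda_{n}(P)-1)\lambda_{n}(P)^{k-1}$ cancels between numerator and denominator, leaving exactly $\|\phi(\mathcal{V}_{1})\boldsymbol{v}_{n}(P)\|/\|\phi(\mathcal{V}_{2})\boldsymbol{v}_{n}(P)\|$.

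The main obstacle I anticipate is the same bookkeeping that underlies the continuous-time proof: rigorously controlling the non-dominant Jordan blocks so that their binomial-weighted contributions are genuinely $o(\lambda_{n}(P)^{k-1})$, and justifying that the modal coefficient of $\boldsymbol{y}(0)$ along the $\lambda_{n}(P)$-eigenvector is nonzero (a genericity/observability condition that the earlier proof leaves implicit). The clean, exact cancellation of the eigenvalue-$1$ mode hinges on its simplicity, which is precisely what the spanning-tree hypothesis invoked in Lemma \ref{lem:lambda1-v1-DLFN-1} secures.
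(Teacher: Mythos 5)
Your proposal follows essentially the same route as the paper's own proof: both pass to the augmented matrix $H$, take its Jordan form, write the increment as $\left(\begin{array}{cc}I_{n} & 0_{n\times1}\end{array}\right)(H-I)H^{k-1}\boldsymbol{y}(0)$, use the fact that $(H-I)$ annihilates the simple eigenvalue-$1$ mode, and let the $\lambda_{n}(P)$-mode dominate, whose $x$-part is exactly $\boldsymbol{v}_{n}(P)$. If anything, you are slightly more careful than the paper on two points it leaves implicit: the strict modulus-dominance of $\lambda_{n}(P)$ over all other modes (which you correctly secure via primitivity of $P$ from the self-loop assumption, not merely irreducibility) and the genericity requirement that $\boldsymbol{y}(0)$ have a nonzero coefficient along the dominant eigenvector.
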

\begin{proof}
Denote by $\phi(\mathcal{V}_{1})=\phi_{1}$ and $\phi(\mathcal{V}_{2})=\phi_{2}$
for brevity. Without loss of generality, let us regard the external
inputs as one extra input.\textcolor{red}{{} }Let $\tilde{\lambda}_{1}\le\tilde{\lambda}_{2}\le\cdots\le\tilde{\lambda}_{p+1}$
denote the distinct ordered eigenvalues of $H$, where $p\le n$.
Let the Jordan canonical form of $H$ be $H=\tilde{S}\tilde{J}\tilde{S}^{-1}$,
where $\tilde{S}=(\tilde{\boldsymbol{v}}_{1},\tilde{\boldsymbol{v}}_{2},\ldots,\tilde{\boldsymbol{v}}_{n+1})\in\mathbb{R}^{(n+1)\times(n+1)}$,
$\tilde{J}=\text{{\bf diag}}\left\{ \tilde{J}(\tilde{\lambda}_{1}),\tilde{J}(\tilde{\lambda}_{2}),\ldots,\tilde{J}(\tilde{\lambda}_{p+1})\right\} \in\mathbb{R}^{(n+1)\times(n+1)}$
and $\tilde{J}(\tilde{\lambda}_{i})$ is the Jordan canonical block
associated with $\tilde{\lambda}_{i}$ with the size $n_{i}$, where
$i\in\underline{p+1}$. Let $\lambda_{1}\le\lambda_{2}\le\cdots\le\lambda_{p}$
be the ordered eigenvalues of $P$. Denote the Jordan canonical form
of $P$ as $P=SJS^{-1}$, where $S=\left(\boldsymbol{v}_{1},\boldsymbol{v}_{2},\ldots,\boldsymbol{v}_{n}\right)\in\mathbb{R}^{n\times n}$,
$J=\text{{\bf diag}}\left\{ J(\lambda_{1}),J(\lambda_{2}),\ldots,J(\lambda_{p})\right\} \in\mathbb{R}^{n\times n}$.
Note that the spectra of $P$ and $H$ satisfy $\tilde{\lambda}_{i}=\lambda_{i}$
for $i\in\underline{p}$ and $\tilde{\boldsymbol{v}}_{i}=\left(\boldsymbol{v}_{i}^{\top},~0\right)^{\top}$
for $i\in\underline{n}$; $\tilde{\lambda}_{p+1}=1$ and $\tilde{\boldsymbol{v}}_{n+1}(H)=\frac{1}{\sqrt{n+1}}\left(\text{{\bf 1}}^{\top},\,0\right)^{\top}$.
Denote 
\begin{equation}
\boldsymbol{\beta}=\left(\beta_{1},\beta_{2},\ldots,\beta_{n+1}\right)^{\top}=\tilde{S}^{-1}\boldsymbol{y}(0)\in\mathbb{R}^{n+1}.
\end{equation}
Thus $\boldsymbol{y}(k)$ can be represented as 
\begin{equation}
\boldsymbol{y}(k)=H^{k}\boldsymbol{y}(0)=\tilde{S}\tilde{J}^{k}\tilde{S}^{-1}\boldsymbol{y}(0).
\end{equation}

Denote by $\Delta\boldsymbol{x}(k)=\boldsymbol{x}(k)-\boldsymbol{x}(k-1)$.
Note that 
\begin{align}
\Delta\boldsymbol{x}(k) & =\left(\begin{array}{cc}
I_{n} & 0_{n\times1}\end{array}\right)(H-I)\boldsymbol{y}(k-1);
\end{align}
therefore for any $q\in\underline{2}$, one has, 
\begin{eqnarray}
 &  & \|\phi_{q}\Delta\boldsymbol{x}(k)\|^{2}\nonumber \\
 & = & \boldsymbol{y}^{\top}(k-1)\tilde{Q}(\phi_{q})\boldsymbol{y}(k-1)\nonumber \\
 & = & \boldsymbol{y}^{\top}(0)(\tilde{S}^{-1})^{\top}(\tilde{J}^{k-1})^{\top}\tilde{S}^{\top}\tilde{Q}(\phi_{q})\tilde{S}\tilde{J}^{k-1}\tilde{S}^{-1}\boldsymbol{y}(0)\nonumber \\
 & = & \boldsymbol{\beta}^{\top}(\tilde{J}^{k-1})^{\top}\tilde{S}^{\top}\tilde{Q}(\phi_{q})\tilde{S}\tilde{J}^{k-1}\boldsymbol{\beta},
\end{eqnarray}
where 
\begin{equation}
\tilde{Q}(\phi_{q})=(H-I)^{\top}\left(\begin{array}{c}
I_{n}\\
0_{1\times n}
\end{array}\right)\phi_{q}^{\top}\phi_{q}\left(\begin{array}{cc}
I_{n} & 0_{n\times1}\end{array}\right)(H-I),
\end{equation}
and 
\begin{eqnarray*}
 &  & \tilde{S}\tilde{J}^{k-1}\boldsymbol{\beta}\\
 & = & \left(\tilde{\boldsymbol{v}}_{1},\ldots,\tilde{\boldsymbol{v}}_{n+1}\right)\text{{\bf diag}}\left\{ \tilde{J}^{k-1}(\tilde{\lambda}_{i})\right\} \left(\begin{array}{c}
\beta_{1}\\
\vdots\\
\beta_{n+1}
\end{array}\right)\\
 & = & {\displaystyle \sum_{s=1}^{p+1}}\sum_{i=q_{s-1}+1}^{q_{s}}\left(\sum_{j=0}^{q_{s}-i}\left(\begin{array}{c}
k-1\\
j
\end{array}\right)\beta_{i+j}\tilde{\lambda}_{s}^{k-j-1}\right)\tilde{\boldsymbol{v}}_{i},
\end{eqnarray*}
where $q_{s}={\displaystyle \sum_{h=0}^{s}}n_{h}$ and $n_{0}=0$.
Therefore, denoting by 
\begin{equation}
f_{p}={\displaystyle \sum_{s=1}^{p}}\sum_{i=q_{s-1}+1}^{q_{s}}\left(\sum_{j=0}^{q_{s}-i}\left(\begin{array}{c}
k-1\\
j
\end{array}\right)\beta_{i+j}\tilde{\lambda}_{s}^{k-j-1}\right)\tilde{\boldsymbol{v}}_{i},
\end{equation}
one has, 
\begin{align*}
 & \|\phi_{q}\Delta x(k)\|^{2}\\
= & f_{p+1}^{\top}\tilde{Q}(\phi_{q})f_{p+1}\\
= & f_{p}^{\top}\tilde{Q}(\phi_{q})f_{p}\\
= & \tilde{\lambda}_{p}^{k-1}\tilde{\lambda}_{p}^{k-1}\tilde{\boldsymbol{v}}_{n}^{\top}\tilde{Q}(\phi_{q})\tilde{\boldsymbol{v}}_{n}\beta_{n}\beta_{n}+2\tilde{\lambda}_{p}^{k-1}\beta_{n}\tilde{\boldsymbol{v}}_{n}^{\top}\tilde{Q}(\phi_{q})f_{p-1}\\
 & +f_{p-1}^{\top}\tilde{Q}(\phi_{q})f_{p-1}.
\end{align*}
Note that ${\displaystyle \lim_{k\rightarrow\infty}}\frac{f_{1,p-1}}{\lambda_{p}^{k-1}}=0$,
thereby, 
\begin{equation}
\lim_{t\rightarrow\infty}\frac{\|\phi_{1}\dot{\boldsymbol{x}}(k)\|}{\|\phi_{2}\dot{\boldsymbol{x}}(k)\|}=\left(\frac{\boldsymbol{v}_{n}^{\top}\phi_{1}^{\top}\phi_{1}\boldsymbol{v}_{n}}{\boldsymbol{v}_{n}^{\top}\phi_{2}^{\top}\phi_{2}\boldsymbol{v}_{n}}\right)^{\frac{1}{2}}=\frac{\|\phi(\mathcal{V}_{1})\boldsymbol{v}_{n}(P)\|}{\|\phi(\mathcal{V}_{2})\boldsymbol{v}_{n}(P)\|}.
\end{equation}
\end{proof}
\begin{algorithm}[h]
\begin{algorithmic}[1] 
\Require{} 
\State{set $k=1$}
\For {each agent $i\in\mathcal{V}$}
\State{choose the termination threshold $\varepsilon_i>0$}
\State{computes $g_{ij}(k), j\in\mathcal{N}_i$ using \eqref{eq:DLFN-gij}}
\EndFor
\Ensure{}
\Repeat{}
\State{set $k=k+1$}
\For {each agent $i\in\mathcal{V}$}
\State{computes $g_{ij}(k)$ using \eqref{eq:DLFN-gij}}
\EndFor
\Until{$\|g_{ij}(k)-g_{ij}(k-1)\|<\varepsilon_i, \forall j\in\mathcal{N}_i$}
\State{$\ensuremath{\bar{w}_{ij}=\begin{cases} w_{ij}, & \ensuremath{g_{ij}(k)>1,}\\ 0, & \ensuremath{g_{ij}(k)\le 1}. \end{cases}}$} 
\end{algorithmic}

\caption{Distributed Neighbor Selection.}
\label{NS-algorithm}
\end{algorithm}

\begin{rem}
\label{rem:DLFN-FSN}In parallel, the reduced neighbor set for agent
$i\text{\ensuremath{\in\mathcal{V}}}$ to construct the FSN network
for DLFN \eqref{eq:consensus-network-influenced} admits, 
\begin{align}
\mathcal{N}_{i}^{\text{FSN}} & =\left\{ j\in\mathcal{N}_{i}\mid\boldsymbol{v}_{n}(P)_{ij}>1\right\} \label{eq:FSN-DLFN-Ni-1}\\
 & =\left\{ j\in\mathcal{N}_{i}\mid\text{\ensuremath{\mathbb{L}}}^{d}(i,j)>1\right\} ,\label{eq:FSN-DLFN-Ni-2}
\end{align}
where \eqref{eq:FSN-DLFN-Ni-1} and \eqref{eq:FSN-DLFN-Ni-2} characterize
$\mathcal{N}_{i}^{\text{FSN}}$ from perspectives of Laplacian eigenvector
and relative tempo, respectively. 
\end{rem}
\begin{figure*}
\begin{centering}
\begin{tikzpicture}[scale=1, >=stealth',   pos=.8,  photon/.style={decorate,decoration={snake,post length=1mm}} ]

    \node (a) at (0,0)   {$\mathcal{G}$};
    \node (b) at (11,0)  {$\mathcal{G}^{\text{FSN}}$};
    \node (st) at (13,0) {$\mathcal{ST}(\mathcal{G})$};

    \node (thm7) at (12,-0.4)  {Theorem \ref{thm:spanning tree construction}};

    \node (Algorithm1) at (5.5,-1.8) {\underline{Algorithm 1}};

    \node (Algorithm2) at (12,-1.8) {\underline{Algorithm 2}};

	\node (c) at (3,-1)   {$\mathbb{L}^{c}$($\mathbb{L}^{d}$)};

	\node (d) at (3,1)   {$\boldsymbol{v}_{1}(L_B)$($\boldsymbol{v}_{n}(P_B)$)};

	\node (e) at (5,0.05)   {Theorem \ref{thm:relative-tempo-CLFN} (Theorem \ref{thm:relative-tempo-DLFN})};

	\node (f) at (7,-1)   {$\mathcal{N}_{i}^{\text{FSN}}$ via \eqref{eq:FSN-CLFN-Ni-2} (via \eqref{eq:FSN-DLFN-Ni-2})};

    \node (g) at (7,1)   {$\mathcal{N}_{i}^{\text{FSN}}$ via \eqref{eq:FSN-CLFN-Ni-1} (via \eqref{eq:FSN-DLFN-Ni-1})};

    \path[]
	(a) [->,thick, dashed, bend left=8] edge node[below] {} (d);

    \path[]
	(d) [->,thick, dashed, bend left=0] edge node[below] {} (g);

    \path[]
	(d) [<->,thick, dashed, bend left=0] edge node[below] {} (c);

    \path[]
	(g) [->,thick, dashed, bend left=8] edge node[below] {} (b);

    \path[]
	(a) [->,thick, bend right=8] edge node[below] {} (c);

    \path[]
	(c) [->,thick, bend right=0] edge node[below] {} (f);

    \path[]
	(f) [->,thick, bend right=8] edge node[below] {} (b);

    \path[]
	(b) [->,thick] edge node[below] {} (st);

\end{tikzpicture}
\par\end{centering}
\caption{An illustrative diagram for the proposed distributed neighbor selection
algorithm rendering an original strongly connected directed network
to its FSN network and directed spanning trees. The solid lines indicate
the procedures that can be implemented in a fully distributed manner.}
\label{fig:flow-chart}
\end{figure*}
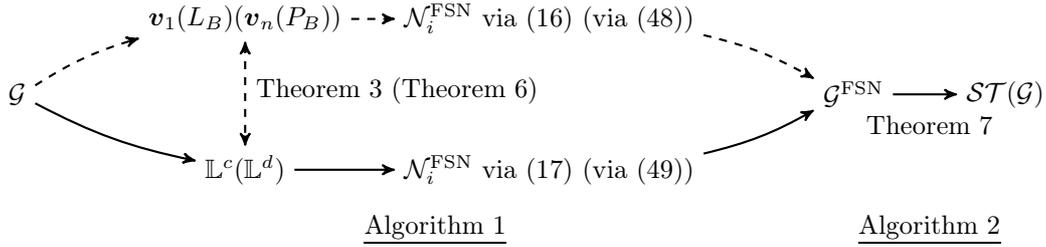

We shall present an algorithmic flowchart of the distributed neighbor
selection process in Algorithm \ref{NS-algorithm}. We shall present
the algorithm in the language of DLFNs. The case of CLFNs is similar
in spirit, except for an extra discretization procedure. In this flowchart,
\begin{equation}
g_{ij}(k)=\frac{\|x_{i}(k)-x_{i}(k-1)\|}{\|x_{j}(k)-x_{j}(k-1)\|},\thinspace i\in\mathcal{V},j\in\mathcal{N}_{i}.\label{eq:DLFN-gij}
\end{equation}
Note that the steps in Algorithm \ref{NS-algorithm} involve basic
local computations for each agent.

\section{An Extension to Distributed Construction of Directed Spanning Tree\label{sec:spanning-tree}}

As one may have noticed, the aforementioned Algorithm 1 does not generally
reduce the network to a weakly connected network with the least number
of directed edges, i.e., directed spanning tree. However, such trees
turn out to be the most efficient structure for information propagation
from a root agent \citep{ren2005consensus}. Specifically, if a network
is a directed spanning tree (the simplest structure that captures
leader-follower reachability of the network), then all agents are
reachable from the root agent through directed paths \citet{cao2008reaching,ren2005consensus}.

In this section, we shall further examine an extension of Algorithm
\ref{NS-algorithm} in adaptively constructing directed spanning trees
of a directed graph in a distributed manner. We first present the
algorithm for distributed construction of a directed spanning tree
under Algorithm \ref{algorightm-spanning-tree}. 
\begin{algorithm}[h]
\begin{algorithmic}[1] 
\Require{CLFN \eqref{eq:unsigned-LF-overall} (respectively, DLFN \eqref{eq:consensus-network-influenced}) with one leader on a strongly connected directed network $\mathcal{G}=(\mathcal{V},\mathcal{E},W)$} 
\State{construct the FSN network $\bar{\mathcal{G}}=(\mathcal{V},\mathcal{\bar{E}},\bar{W})$ of CLFN \eqref{eq:unsigned-LF-overall} (respectively, DLFN \eqref{eq:consensus-network-influenced}) by Remark \ref{rem:CLFN-FSN} (respectively, \ref{rem:DLFN-FSN})}
\For{each agent $i\in\mathcal{V}$}
\If{$\left|\mathcal{N}_{i}\right|>1$}
\State{choose $j^{*}\in\mathcal{N}_{i}$ arbitrarily}
\State{remove all edges $(i,j)$ where $j\in\mathcal{N}_{i}\setminus\left\{ j^{*}\right\} $}
\EndIf
\EndFor
\Ensure{distributed directed spanning tree $\mathcal{ST}(\mathcal G)$}
\end{algorithmic}

\caption{Distributed Directed Spanning Tree Construction.}
\label{algorightm-spanning-tree}
\end{algorithm}

\begin{thm}
\label{thm:spanning tree construction} For the CLFN \eqref{eq:unsigned-LF-overall}
(respectively, DLFN \eqref{eq:consensus-network-influenced}) on a
strongly connected directed network $\mathcal{G}=(\mathcal{V},\mathcal{E},W)$
with one leader, a directed spanning tree--with the leader as the
only root--can be constructed in a distributed manner--by using
Algorithm \ref{algorightm-spanning-tree}. 
\end{thm}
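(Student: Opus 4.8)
The plan is to combine the leader–follower reachability already established for FSN networks (Theorem~\ref{thm:reachability-FSN-SAN} for the CLFN and Theorem~\ref{thm:reachability-FSN-SAN-discrete} for the DLFN) with a structural observation about the eigenvector that defines the FSN network: its value strictly decreases along every retained edge, and the single leader sits uniquely at the bottom of this order. Concretely, I would first show that the FSN network $\bar{\mathcal{G}}$ is a directed acyclic graph whose only in-degree-zero vertex is the leader $\ell$, and then show that the local pruning performed by Algorithm~\ref{algorightm-spanning-tree}—keeping exactly one in-neighbor at each agent with more than one—cannot destroy reachability from $\ell$, so the output is a spanning arborescence rooted at $\ell$.

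For the acyclicity and the source property I would use the defining inequality of the FSN network. By Definition~\ref{def:fsn-network-CLFN} (resp. Definition~\ref{def:fsn-network-DLFN}), an edge $(i,j)$ survives in $\bar{\mathcal{G}}$ only when $\boldsymbol{v}_1(L_B)_{ij}>1$ (resp. $\boldsymbol{v}_n(P)_{ij}>1$), i.e. $[\boldsymbol{v}_1(L_B)]_i>[\boldsymbol{v}_1(L_B)]_j$ (resp. the analogous strict inequality for $\boldsymbol{v}_n(P)$). Hence the positive eigenvector strictly decreases along every edge of $\bar{\mathcal{G}}$, which rules out directed cycles and makes $\bar{\mathcal{G}}$ a DAG. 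To locate the leader in this order, I would examine the row of the eigenvalue equation indexed by a global minimizer $i^0$ of the eigenvector, exactly as in the proof of Theorem~\ref{thm:reachability-FSN-SAN}: bounding $\sum_{j\in\mathcal{N}_{i^0}}w_{i^0 j}[\boldsymbol{v}_1]_j\ge d_{i^0}[\boldsymbol{v}_1]_{i^0}$ forces $\lambda_1(L_B)\le\delta_{i^0}$ (resp. $\lambda_n(P)\ge d_{i^0}/(\delta_{i^0}+d_{i^0})$ in the discrete case), and since $\lambda_1(L_B)>0$ (resp. $\lambda_n(P)<1$) this is possible only if $\delta_{i^0}>0$, i.e. $i^0$ is a leader. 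With a single leader this minimizer is unique, so $\ell$ strictly minimizes the eigenvector and admits no $j$ with $[\boldsymbol{v}_1]_\ell>[\boldsymbol{v}_1]_j$; that is, $\mathcal{N}_\ell^{\text{FSN}}=\emptyset$ and $\ell$ is the unique in-degree-zero vertex of $\bar{\mathcal{G}}$.

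I would then analyze the pruning step. By the reachability results every follower is reachable from $\ell$ in $\bar{\mathcal{G}}$, and the first edge of a reaching path $\mathcal{P}_{i,\ell}$ is an in-edge of $i$, so each follower has in-degree at least one; after Algorithm~\ref{algorightm-spanning-tree} keeps exactly one, each follower has in-degree exactly one while $\ell$ retains in-degree zero. The pruned graph is a subgraph of the DAG $\bar{\mathcal{G}}$, hence still acyclic. Starting from any follower and repeatedly following its unique incoming edge produces a walk that cannot repeat a vertex (acyclicity) and therefore terminates at the unique in-degree-zero vertex, namely $\ell$; reversing this walk exhibits a directed path $\mathcal{P}_{i,\ell}$, so every agent remains reachable from $\ell$. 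Counting edges, the leader contributes none and each of the $n-1$ followers contributes one, giving $n-1$ edges on $n$ vertices together with connectivity to $\ell$—precisely a directed spanning tree rooted at $\ell$. The construction is distributed because $\bar{\mathcal{G}}$ is obtained through the locally computable relative-tempo rule of Remark~\ref{rem:CLFN-FSN} (resp. Remark~\ref{rem:DLFN-FSN}), and the pruning is a purely local choice made by each agent among its own in-neighbors.

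The step I expect to be the crux is proving that the leader is the unique source of $\bar{\mathcal{G}}$: the acyclicity, the in-degree bookkeeping, and the edge count are routine once that fact is in hand, but it is precisely the identification of the single leader with the unique eigenvector minimizer that guarantees the terminal vertex of every backward chain is the root rather than some spurious follower sink. This is also the only place where the single-leader hypothesis is genuinely needed; with multiple leaders the pruned graph would in general be a spanning forest rather than a single tree.
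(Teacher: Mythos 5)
Your proof is correct, and it takes a genuinely different (and more complete) route than the paper's. The paper argues by contradiction: if some followers were unreachable from the leader in the pruned graph, a weakly connected component of unreachable agents would have every retained in-edge inside itself, so $s_0$ vertices would carry $s_0$ edges, forcing a directed cycle --- impossible, because by Definition \ref{def:fsn-network-CLFN} (resp.\ Definition \ref{def:fsn-network-DLFN}) the positive eigenvector strictly decreases along every retained edge. Two things in that argument are left implicit or unaddressed: the acyclicity of $\bar{\mathcal{G}}$ (the paper never says what the directed cycle contradicts), and the fact that the leader itself retains no incoming edge, which is needed to conclude the output has $n-1$ edges and the leader is the unique root rather than merely that everything is reachable from it. Your direct argument supplies both: you state the DAG property explicitly, and your identification of the single leader as the unique global minimizer of the eigenvector --- adapting the row-of-the-eigenvalue-equation argument from the proof of Theorem \ref{thm:reachability-FSN-SAN} to a \emph{global} minimizer, which yields $\lambda_1(L_B)\le\delta_{i^0}$ (resp.\ $\lambda_n(P)\ge d_{i^0}/(\delta_{i^0}+d_{i^0})$) and hence $\delta_{i^0}>0$ --- is genuinely new relative to the paper's proof; it is precisely what guarantees $\mathcal{N}_{\ell}^{\text{FSN}}=\emptyset$ and localizes where the single-leader hypothesis enters. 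What each approach buys: the paper's contradiction is shorter, folding the in-degree bookkeeping and the cycle argument into one step; your DAG-plus-unique-source argument is longer but self-contained, proves the tree structure (edge count, unique root) rather than only reachability, and makes transparent why multiple leaders would produce a forest instead of a tree.
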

\begin{proof}
By contradiction, assume that there exists a subset of agents $\left\{ i_{1},i_{2},\cdots,i_{s}\right\} \subset\mathcal{V}_{\text{F}}$
in the network $\mathcal{ST}$ such that $i_{k}$ is not reachable
from the leader $l$, where $k\in\underline{s}$ and $s\in\mathbb{Z}_{+}$.
Without loss of generality, assume that there exists a weak connected
component $\mathcal{ST}(\left\{ i_{1},i_{2},\cdots,i_{s_{0}}\right\} )$
in $\left\{ i_{1},i_{2},\cdots,i_{s}\right\} $ such that any agent
in this weak connected component is not reachable from the leader
$l$. According to Algorithm \ref{algorightm-spanning-tree}, each
agent in $\mathcal{ST}(\left\{ i_{1},i_{2},\cdots,i_{s_{0}}\right\} )$
has one in-degree neighbor, then there exist $s_{0}$ edges in $\mathcal{ST}(\left\{ i_{1},i_{2},\cdots,i_{s_{0}}\right\} )$,
which implies that there exists at least one directed circle in $\mathcal{ST}(\left\{ i_{1},i_{2},\cdots,i_{s_{0}}\right\} )$,
establishing a contradiction. Therefore, $\mathcal{ST}$ is a directed
spanning tree of the network $\mathcal{G}=(\mathcal{V},\mathcal{E},W)$
and that the leader is the only root. 
\end{proof}
An illustrative diagram for distributed neighbor selection algorithm
proposed in this paper is summarized in Figure \ref{fig:flow-chart}.
The solid lines indicate the procedures that can be implemented in
a fully distributed manner and the dashed lines indicate another alternative
path (in a centralized manner) to construct the FSN network and directed
spanner tree of a strongly connected directed network $\mathcal{G}$. 
\begin{rem}
Considering the CLFN \eqref{eq:unsigned-LF-overall} or DLFN \eqref{eq:consensus-network-influenced}
on a strongly connected directed network $\mathcal{G}$, one should
notice that the convergence rate of CLFN \eqref{eq:unsigned-LF-overall}
or DLFN \eqref{eq:consensus-network-influenced} on a directed spanning
tree of $\mathcal{G}$ does not have to outperform that on the FSN
network of $\mathcal{G}$. 
\end{rem}

\section{Simulations\label{sec:Simulations}}

In this section, we present simulation results to illustrate the theoretical
results presented in this paper. 
\begin{figure}
\begin{centering}
\begin{tikzpicture}[scale=0.5, >=stealth',   pos=.8,  photon/.style={decorate,decoration={snake,post length=1mm}} ]
	\node (n1) at (0,4) [circle,inner sep= 1pt, fill=black!10, draw] {1};
	\node (n2) at (3,5) [circle,inner sep= 1pt,draw] {2};
    \node (n3) at (5.5,3.5) [circle,inner sep= 1pt,draw] {3};
	\node (n4) at (2.5,2.5) [circle,inner sep= 1pt,draw] {4};
    \node (n5) at (5,1) [circle,inner sep= 1pt, fill=black!10,draw] {5};
	\node (n6) at (3,0) [circle,inner sep= 1pt,draw] {6};
    \node (n7) at (0,1) [circle,inner sep= 1pt,draw] {7};




	\path[]
	(n1) [->,thick, bend left=12] edge node[midway, above] {} (n2); 

	\path[]
	(n2) [->,thick, bend left=12] edge node[midway, above] {} (n1); 

	\path[]
	(n1) [->,thick, bend right=12] edge node[midway, above] {} (n7); 

	\path[]
	(n3) [->,thick, bend right=12] edge node[midway, above] {} (n4); 

	\path[]
	(n4) [->,thick, bend right=12] edge node[midway, above] {} (n6); 

	\path[]
	(n4) [->,thick, bend right=12] edge node[midway, above] {} (n1); 

	\path[]
	(n6) [->,thick, bend left=12] edge node[midway, above] {} (n3); 

	\path[]
	(n6) [->,thick, bend right=12] edge node[midway, above] {} (n5); 

	\path[]
	(n6) [->,thick, bend left=12] edge node[midway, above] {} (n1); 

	\path[]
	(n3) [->,thick, bend right=12] edge node[midway, above] {} (n5); 

	\path[]
	(n5) [->,thick, bend right=12] edge node[midway, above] {} (n3); 

	\path[]
	(n2) [->,thick, bend left=12] edge node[midway, above] {} (n3); 

    \path[]
	(n7) [->,thick, bend right=12] edge node[midway, above] {} (n6); 	



\end{tikzpicture}
\par\end{centering}
\caption{A strongly connected directed network $\mathcal{G}_{7}$ with leader
set $\mathcal{V}_{\text{L}}=\left\{ 1,5\right\} $. The leaders are
highlighted by gray.}

\label{fig:network}
\end{figure}
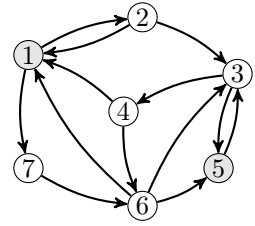

\subsection{Continuous-time Case}

Consider the CLFN \eqref{eq:unsigned-LF-overall} on $\mathcal{G}_{7}$
in Figure \ref{fig:network} with the leader set $\mathcal{V}_{\text{L}}=\left\{ 1,5\right\} $,
where 
\[
L_{B}=\left(\begin{array}{ccccccc}
4 & -1 & 0 & -1 & 0 & -1 & 0\\
-1 & 1 & 0 & 0 & 0 & 0 & 0\\
0 & -1 & 3 & 0 & -1 & -1 & 0\\
0 & 0 & -1 & 1 & 0 & 0 & 0\\
0 & 0 & -1 & 0 & 3 & -1 & 0\\
0 & 0 & 0 & -1 & 0 & 2 & -1\\
-1 & 0 & 0 & 0 & 0 & 0 & 1
\end{array}\right);
\]
then $\boldsymbol{v}_{1}(L_{B})$ can be computed as,

\begin{eqnarray*}
\boldsymbol{v}_{1}(L_{B}) & = & (0.3242,\thinspace0.3741,\thinspace0.3829,\thinspace0.4419,\\
 &  & 0.2861,\thinspace0.4372,\thinspace0.3741)^{\top}.
\end{eqnarray*}

We choose a homogeneous external input such that $u_{1}=u_{2}=u_{0}=0.1$.
The agents' trajectories of CLFN \eqref{eq:unsigned-LF-overall} on
the network $\mathcal{G}_{7}$ in Figure \ref{fig:network} and its
FSN network $\bar{\mathcal{G}}_{7}^{\text{CLFN}}$ in Figure \ref{fig:network-FSN}
(left) are shown in Figure \ref{fig:trajectory-FSN-CLFN}. One notes
that the convergence rate on the corresponding FSN network has been
dramatically enhanced. 
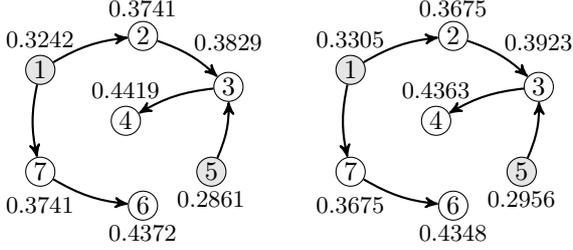
\begin{figure}
\begin{centering}
\begin{tikzpicture}[scale=0.45, >=stealth',   pos=.8,  photon/.style={decorate,decoration={snake,post length=1mm}} ]
	\node (n1) at (0,4) [circle,inner sep= 1pt, fill=black!10,draw] {1};
	\node (n2) at (3,5) [circle,inner sep= 1pt,draw] {2};
    \node (n3) at (5.5,3.5) [circle,inner sep= 1pt,draw] {3};
	\node (n4) at (2.5,2.5) [circle,inner sep= 1pt,draw] {4};
    \node (n5) at (5,1) [circle,inner sep= 1pt, fill=black!10,draw] {5};
	\node (n6) at (3,0) [circle,inner sep= 1pt,draw] {6};
    \node (n7) at (0,1) [circle,inner sep= 1pt,draw] {7};

	\node (TV1) at (0,4+0.9)   {{\small $0.3242$}};
	\node (TV2) at (3,5+0.8)   {{\small $0.3741$}};
	\node (TV3) at (5.5,4.8)   {{\small $0.3829$}};
	\node (TV4) at (2.5,2.5+0.9)   {{\small $0.4419$}};
	\node (TV5) at (5,1-0.8)   {{\small $0.2861$}};
	\node (TV6) at (3,0-0.8)   {{\small $0.4372$}};
	\node (TV7) at (0,1-1)   {{\small $0.3741$}};



	\path[]
	(n1) [->,thick, bend left=12] edge node[midway, above] {} (n2); 

	\path[]
	(n1) [->,thick, bend right=12] edge node[midway, above] {} (n7); 

	\path[]
	(n3) [->,thick, bend right=12] edge node[midway, above] {} (n4); 

	\path[]
	(n5) [->,thick, bend right=12] edge node[midway, above] {} (n3); 

	\path[]
	(n2) [->,thick, bend left=12] edge node[midway, above] {} (n3); 

    \path[]
	(n7) [->,thick, bend right=12] edge node[midway, above] {} (n6); 	



\end{tikzpicture}\,\,\,\,\,\,\,\,\begin{tikzpicture}[scale=0.45, >=stealth',   pos=.8,  photon/.style={decorate,decoration={snake,post length=1mm}} ]
	\node (n1) at (0,4) [circle,inner sep= 1pt, fill=black!10,draw] {1};
	\node (n2) at (3,5) [circle,inner sep= 1pt,draw] {2};
    \node (n3) at (5.5,3.5) [circle,inner sep= 1pt,draw] {3};
	\node (n4) at (2.5,2.5) [circle,inner sep= 1pt,draw] {4};
    \node (n5) at (5,1) [circle,inner sep= 1pt, fill=black!10,draw] {5};
	\node (n6) at (3,0) [circle,inner sep= 1pt,draw] {6};
    \node (n7) at (0,1) [circle,inner sep= 1pt,draw] {7};

	\node (TV1) at (0,4+0.9)   {{\small $0.3305$}};
	\node (TV2) at (3,5+0.8)   {{\small $0.3675$}};
	\node (TV3) at (5.5,4.8)   {{\small $0.3923$}};
	\node (TV4) at (2.5,2.5+0.9)   {{\small $0.4363$}};
	\node (TV5) at (5,1-0.8)   {{\small $0.2956$}};
	\node (TV6) at (3,0-0.8)   {{\small $0.4348$}};
	\node (TV7) at (0,1-1)   {{\small $0.3675$}};



	\path[]
	(n1) [->,thick, bend left=12] edge node[midway, above] {} (n2); 

	\path[]
	(n1) [->,thick, bend right=12] edge node[midway, above] {} (n7); 

	\path[]
	(n3) [->,thick, bend right=12] edge node[midway, above] {} (n4); 

	\path[]
	(n5) [->,thick, bend right=12] edge node[midway, above] {} (n3); 

	\path[]
	(n2) [->,thick, bend left=12] edge node[midway, above] {} (n3); 

    \path[]
	(n7) [->,thick, bend right=12] edge node[midway, above] {} (n6); 	



\end{tikzpicture}
\par\end{centering}
\caption{The FSN network $\bar{\mathcal{G}}_{7}^{\text{CLFN}}$ of the $\mathcal{G}_{7}$
in Figure \ref{fig:network} constructed using $\boldsymbol{v}_{1}(L_{B})$
of CLFN \eqref{eq:unsigned-LF-overall} on the network $\mathcal{G}_{7}$
, where the entries of $\boldsymbol{v}_{1}(L_{B})$ corresponding
to each agent are shown close to each node (left). The FSN network
$\bar{\mathcal{G}}_{7}^{\text{DLFN}}$ of the $\mathcal{G}_{7}$ in
Figure \ref{fig:network} constructed using $\boldsymbol{v}_{n}(P)$
of DLFN \eqref{eq:consensus-network-influenced} on the network $\mathcal{G}_{7}$,
where the entries of $\boldsymbol{v}_{n}(P)$ corresponding to each
agent are shown close to each node (right).}

\label{fig:network-FSN}
\end{figure}

\begin{figure}
\begin{centering}
\includegraphics[width=9cm]{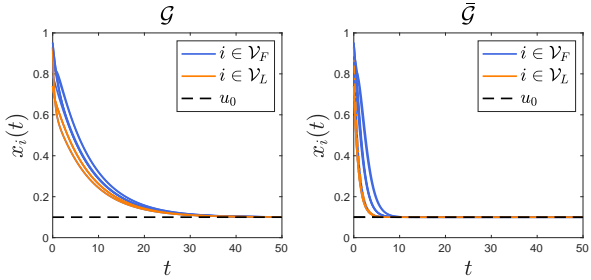}
\par\end{centering}
\caption{The agents trajectories of CLFN \eqref{eq:unsigned-LF-overall} on
the network $\mathcal{G}_{7}$ in Figure \ref{fig:network} and its
FSN network $\bar{\mathcal{G}}_{7}^{\text{CLFN}}$ in Figure \ref{fig:network-FSN}
(left), respectively. The trajectories of leaders and external input
$u_{0}$ are highlighted by yellow solid lines and black dashed lines,
respectively.}
\label{fig:trajectory-FSN-CLFN}
\end{figure}

Let 
\begin{equation}
g_{ij}(t)=\frac{\|\dot{x}_{i}(t)\|}{\|\dot{x}_{j}(t)\|},i,j\in\mathcal{V}.\label{eq:CLFN-gij}
\end{equation}
Figure \ref{fig:gij-CLFN} shows that the trajectory of $g_{ij}(t)$
in \eqref{eq:CLFN-gij} for CLFN \eqref{eq:unsigned-LF-overall} where
$i=3$ and $j\in\mathcal{N}_{3}=\left\{ 2,5,6\right\} $, asymptotically
converges to $\boldsymbol{v}_{1}(L_{B})_{ij}$, as predicted by Theorem
\ref{thm:relative-tempo-CLFN}.

\begin{figure}
\begin{centering}
\includegraphics[width=9.5cm]{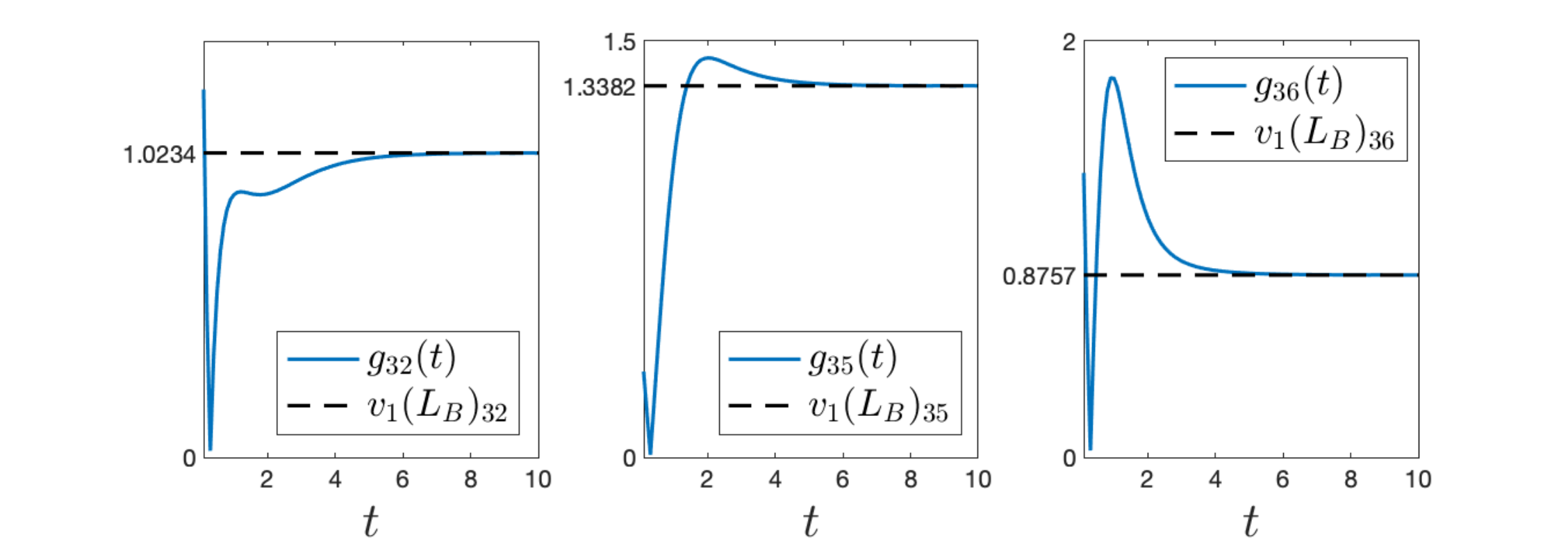}
\par\end{centering}
\caption{The $g_{ij}(t)$ in \eqref{eq:CLFN-gij} for CLFN \eqref{eq:unsigned-LF-overall}
where $i=3$ and $j\in\mathcal{N}_{i}=\left\{ 2,5,6\right\} $.}
\label{fig:gij-CLFN}
\end{figure}

\subsection{Discrete-time Case}

Consider the DLFN \eqref{eq:consensus-network-influenced} on $\mathcal{G}_{7}$
in Figure \ref{fig:network} with the leader set $\mathcal{V}_{\text{L}}=\left\{ 1,5\right\} $,
where

\[
P=\left(\begin{array}{ccccccc}
\frac{1}{5} & \frac{1}{5} & 0 & \frac{1}{5} & 0 & \frac{1}{5} & 0\\
\frac{1}{2} & \frac{1}{2} & 0 & 0 & 0 & 0 & 0\\
0 & \frac{1}{4} & \frac{1}{4} & 0 & \frac{1}{4} & \frac{1}{4} & 0\\
0 & 0 & \frac{1}{2} & \frac{1}{2} & 0 & 0 & 0\\
0 & 0 & \frac{1}{4} & 0 & \frac{1}{4} & \frac{1}{4} & 0\\
0 & 0 & 0 & \frac{1}{3} & 0 & \frac{1}{3} & \frac{1}{3}\\
\frac{1}{2} & 0 & 0 & 0 & 0 & 0 & \frac{1}{2}
\end{array}\right);
\]
computing $\boldsymbol{v}_{n}(P)$ yields, 
\begin{eqnarray*}
\boldsymbol{v}_{n}(P) & = & (0.3305,\thinspace0.3675,\thinspace0.3923,\thinspace0.4363,\\
 &  & 0.2956,\thinspace0.4348,\thinspace0.3675)^{\top}.
\end{eqnarray*}

Similar to the continuous-time case, we choose $u_{1}=u_{2}=u_{0}=0.1$.
The agents' trajectories of DLFN \eqref{eq:consensus-network-influenced}
on the network $\mathcal{G}_{7}$ in Figure \ref{fig:network} and
its FSN network $\bar{\mathcal{G}}_{7}^{\text{DLFN}}$ in Figure \ref{fig:network-FSN}
(right) are shown in Figure \ref{fig:trajectory-FSN-DLFN}. One notes
that the convergence rate on the corresponding FSN network has been
dramatically enhanced. Figure \ref{fig:gij-DLFN} shows that the trajectory
of $g_{ij}(k)$ in \eqref{eq:DLFN-gij} for DLFN \eqref{eq:consensus-network-influenced}
where $i=3$ and $j\in\mathcal{N}_{3}=\left\{ 2,5,6\right\} $, asymptotically
converges to $\boldsymbol{v}_{n}(P)_{ij}$, as predicted by Theorem
\ref{thm:relative-tempo-DLFN}.

\begin{figure}
\begin{centering}
\includegraphics[width=9cm]{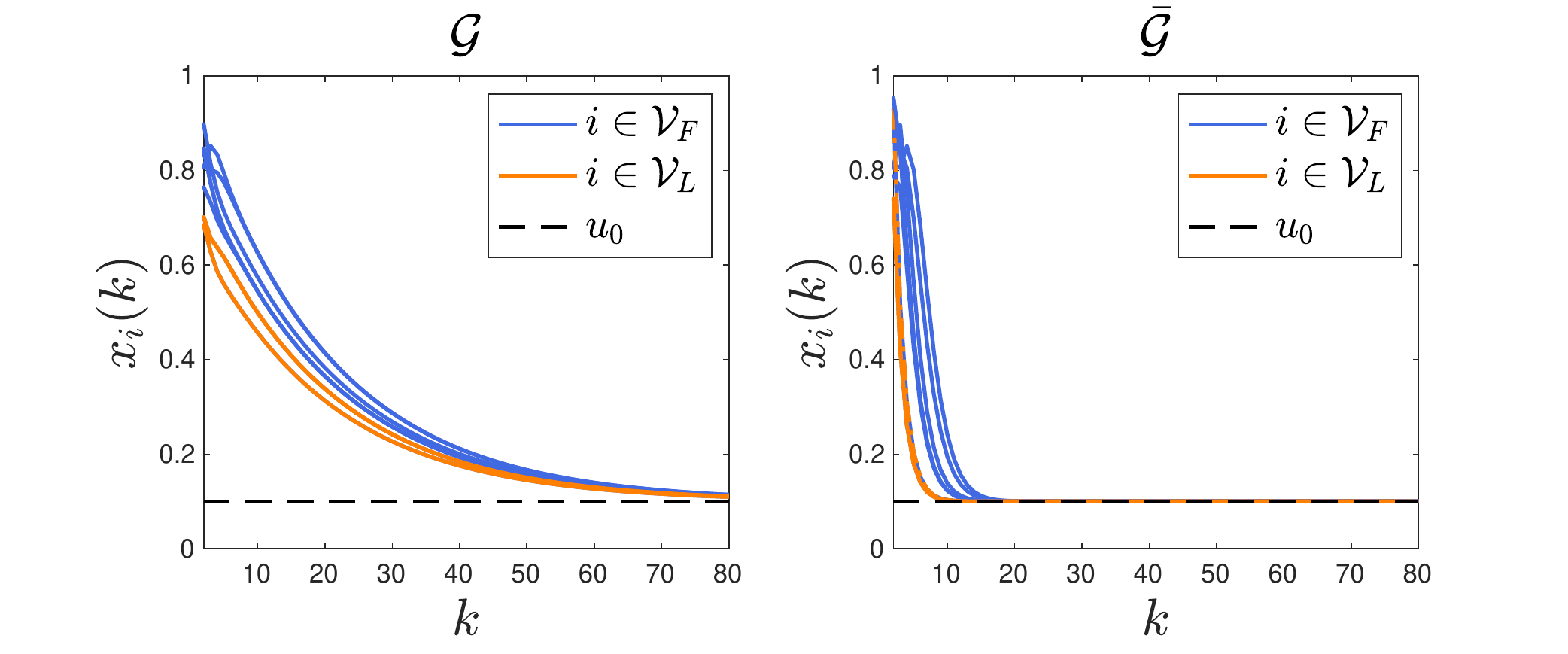}
\par\end{centering}
\caption{The agents trajectories of DLFN \eqref{eq:consensus-network-influenced}
on the network $\mathcal{G}_{7}$ in Figure \ref{fig:network} and
its FSN network $\bar{\mathcal{G}}_{7}^{\text{DLFN}}$ in Figure \ref{fig:network-FSN}
(right), respectively. The trajectories of leaders and external input
$u_{0}$ are highlighted by yellow solid lines and black dashed lines,
respectively.}
\label{fig:trajectory-FSN-DLFN}
\end{figure}
\begin{figure}
\begin{raggedright}
\includegraphics[width=9.5cm]{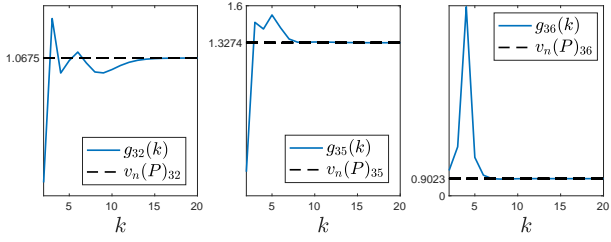}
\par\end{raggedright}
\caption{The $g_{ij}(k)$ in \eqref{eq:DLFN-gij} for DLFN \eqref{eq:consensus-network-influenced}
where $i=3$ and $j\in\mathcal{N}_{i}=\left\{ 2,5,6\right\} $.}
\label{fig:gij-DLFN}
\end{figure}

An intriguing observation is that the FSN networks of the strongly
connected directed network $\mathcal{G}_{7}$ in Figure \ref{fig:network}
constructed by using $\boldsymbol{v}_{1}(L_{B})$ and $\boldsymbol{v}_{n}(P)$,
respectively, are identifcal; see Figures \ref{fig:network-FSN}.

\subsection{Adaptivity to Leader Switching \label{subsec:Adaptivity-to-Leader}}

We now consider the scenario where the leader set switches over the
time interval $t=[0,150]$, that is, 
\[
\mathcal{V}_{\text{L}}=\begin{cases}
\left\{ 1\right\} , & t=[0,50],\\
\left\{ 6\right\} , & t=[50,100],\\
\left\{ 2,7\right\} , & t=[100,150].
\end{cases}
\]

Consider the CLFN \eqref{eq:unsigned-LF-overall} on $\mathcal{G}_{7}$
in Figure \ref{fig:network} with leader set for each time interval
defined above. The external input is such that $u_{1}=0.1$ for $t=[0,50]$,
$u_{1}=0.9$ for $t=[50,100]$ and $u_{1}=u_{2}=0.1$ for $t=[100,150]$.
According to Figure \ref{fig:switching-leader-trajectory}, the convergence
rate has been dramatically enhanced on each time interval where the
leader agent is fixed. The FSN networks of $\mathcal{G}_{7}$ in Figure
\ref{fig:network} with $\mathcal{V}_{\text{F}}=\left\{ 1\right\} $,
$\mathcal{V}_{\text{F}}=\left\{ 6\right\} $ and $\mathcal{V}_{\text{F}}=\left\{ 2,7\right\} $
are shown in Figure \ref{fig:FSN-VF-1-6} and Figure \ref{fig:FSN-VF-2-7},
respectively. One can see that the eigenvector $\boldsymbol{v}_{1}(L_{B})$
is shaped by the selection of leaders, explaining the origin of structural
adaptivity of diffusively coupled, directed networks. Parallel properties
also hold for DLFNs, omitted here for brevity.

\begin{figure}
\begin{centering}
\includegraphics[width=9cm]{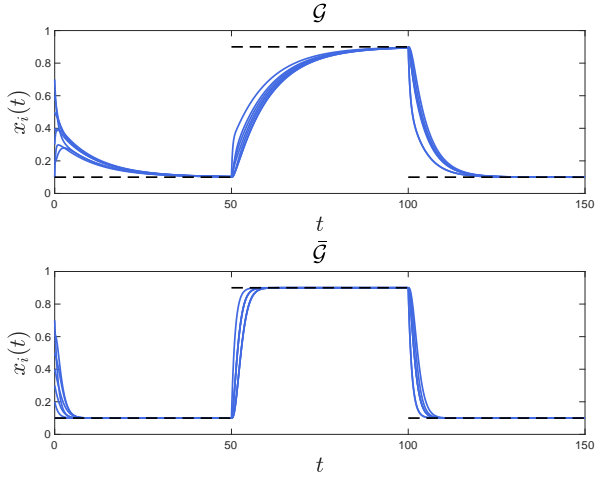}
\par\end{centering}
\caption{The agents' trajectories along with the leader switching of original
network $\mathcal{G}$ and the corresponding FSN network $\bar{\mathcal{G}}$.
The black dashed lines represent pair-wise constant external inputs.}
\label{fig:switching-leader-trajectory}
\end{figure}
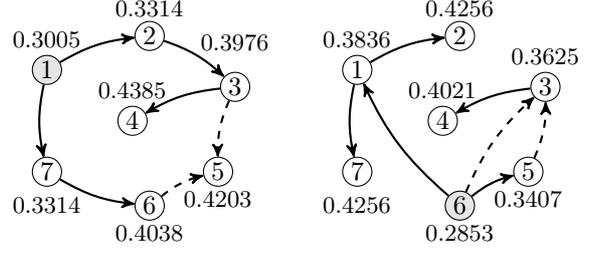
\begin{figure}
\begin{centering}
\begin{tikzpicture}[scale=0.45, >=stealth',   pos=.8,  photon/.style={decorate,decoration={snake,post length=1mm}} ]
	\node (n1) at (0,4) [circle,inner sep= 1pt, fill=black!10, draw] {1};
	\node (n2) at (3,5) [circle,inner sep= 1pt,draw] {2};
    \node (n3) at (5.5,3.5) [circle,inner sep= 1pt,draw] {3};
	\node (n4) at (2.5,2.5) [circle,inner sep= 1pt,draw] {4};
    \node (n5) at (5,1) [circle,inner sep= 1pt,draw] {5};
	\node (n6) at (3,0) [circle,inner sep= 1pt,draw] {6};
    \node (n7) at (0,1) [circle,inner sep= 1pt,draw] {7};

    \node (TV1) at (0,4+0.9)   {{\small $0.3005$}};
	\node (TV2) at (3,5+0.8)   {{\small $0.3314$}};
	\node (TV3) at (5.5,4.8)   {{\small $0.3976$}};
	\node (TV4) at (2.5,2.5+0.9)   {{\small $0.4385$}};
	\node (TV5) at (5,1-0.8)   {{\small $0.4203$}};
	\node (TV6) at (3,0-0.8)   {{\small $0.4038$}};
	\node (TV7) at (0,1-1)   {{\small $0.3314$}};

	\path[]
	(n1) [->,thick, bend left=12] edge node[midway, above] {} (n2); 


	\path[]
	(n1) [->,thick, bend right=12] edge node[midway, above] {} (n7); 

	\path[]
	(n3) [->,thick, bend right=12] edge node[midway, above] {} (n4); 




	\path[]
	(n6) [->,thick, dashed, bend left=12] edge node[midway, above] {} (n5); 


	\path[]
	(n3) [->,thick, dashed, bend right=12] edge node[midway, above] {} (n5); 


	\path[]
	(n2) [->,thick, bend left=12] edge node[midway, above] {} (n3); 

    \path[]
	(n7) [->,thick, bend right=12] edge node[midway, above] {} (n6); 	



\end{tikzpicture}\,\,\,\,\,\,\,\,\begin{tikzpicture}[scale=0.45, >=stealth',   pos=.8,  photon/.style={decorate,decoration={snake,post length=1mm}} ]
	\node (n1) at (0,4) [circle,inner sep= 1pt,draw] {1};
	\node (n2) at (3,5) [circle,inner sep= 1pt,draw] {2};
    \node (n3) at (5.5,3.5) [circle,inner sep= 1pt,draw] {3};
	\node (n4) at (2.5,2.5) [circle,inner sep= 1pt,draw] {4};
    \node (n5) at (5,1) [circle,inner sep= 1pt,draw] {5};
	\node (n6) at (3,0) [circle,inner sep= 1pt,fill=black!10,draw] {6};
    \node (n7) at (0,1) [circle,inner sep= 1pt,draw] {7};

    \node (TV1) at (0,4+0.9)   {{\small $0.3836$}};
	\node (TV2) at (3,5+0.8)   {{\small $0.4256$}};
	\node (TV3) at (5.5,4.4)   {{\small $0.3625$}};
	\node (TV4) at (2.5,2.5+0.9)   {{\small $0.4021$}};
	\node (TV5) at (5,1-0.8)   {{\small $0.3407$}};
	\node (TV6) at (3,0-0.8)   {{\small $0.2853$}};
	\node (TV7) at (0,1-1)   {{\small $0.4256$}};

	\path[]
	(n1) [->,thick, bend left=12] edge node[midway, above] {} (n2); 


	\path[]
	(n1) [->,thick, bend right=12] edge node[midway, above] {} (n7); 

	\path[]
	(n3) [->,thick, bend right=12] edge node[midway, above] {} (n4); 



	\path[]
	(n6) [->,thick, dashed, bend left=12] edge node[midway, above] {} (n3); 

	\path[]
	(n6) [->,thick, bend left=12] edge node[midway, above] {} (n5); 

	\path[]
	(n6) [->,thick, bend left=12] edge node[midway, above] {} (n1); 


	\path[]
	(n5) [->,thick, dashed, bend right=12] edge node[midway, above] {} (n3); 





\end{tikzpicture}
\par\end{centering}
\caption{The FSN network of $\mathcal{G}_{7}$ in Figure \ref{fig:network}
with $\mathcal{V}_{\text{F}}=\left\{ 1\right\} $. The incoming edges
of nodes with in-degree more than $2$ are highlighted by dashed lines
(left). The FSN network of $\mathcal{G}_{7}$ in Figure \ref{fig:network}
with $\mathcal{V}_{\text{F}}=\left\{ 6\right\} $. The incoming edges
of nodes with in-degree more than $2$ are highlighted by dashed lines
(right).}
\label{fig:FSN-VF-1-6}
\end{figure}
\begin{figure}
\begin{centering}
\begin{tikzpicture}[scale=0.5, >=stealth',   pos=.8,  photon/.style={decorate,decoration={snake,post length=1mm}} ]
	\node (n1) at (0,4) [circle,inner sep= 1pt,draw] {1};
	\node (n2) at (3,5) [circle,inner sep= 1pt,fill=black!10,draw] {2};
    \node (n3) at (5.5,3.5) [circle,inner sep= 1pt,draw] {3};
	\node (n4) at (2.5,2.5) [circle,inner sep= 1pt,draw] {4};
    \node (n5) at (5,1) [circle,inner sep= 1pt,draw] {5};
	\node (n6) at (3,0) [circle,inner sep= 1pt,draw] {6};
    \node (n7) at (0,1) [circle,inner sep= 1pt,fill=black!10,draw] {7};

    \node (TV1) at (0,4+0.8)   {{\small $0.4004$}};
	\node (TV2) at (3,5+0.8)   {{\small $0.2247$}};
	\node (TV3) at (5.5+1.5,3.5)   {{\small $0.3825$}};
	\node (TV4) at (2.5,2.5+0.8)   {{\small $0.4890$}};
	\node (TV5) at (5,1-0.8)   {{\small $0.4393$}};
	\node (TV6) at (3,0-0.7)   {{\small $0.4004$}};
	\node (TV7) at (0-1.5,1)   {{\small $0.2247$}};


	\path[]
	(n2) [->,thick, dashed, bend left=12] edge node[midway, above] {} (n1); 


	\path[]
	(n3) [->,thick, bend right=12] edge node[midway, above] {} (n4); 




	\path[]
	(n6) [->,thick, dashed, bend left=12] edge node[midway, above] {} (n5); 

	\path[]
	(n6) [->,thick, dashed, bend left=12] edge node[midway, above] {} (n1); 

	\path[]
	(n3) [->,thick, dashed, bend right=12] edge node[midway, above] {} (n5); 


	\path[]
	(n2) [->,thick, bend left=12] edge node[midway, above] {} (n3); 

    \path[]
	(n7) [->,thick, bend right=12] edge node[midway, above] {} (n6); 	



\end{tikzpicture}
\par\end{centering}
\caption{The FSN network of $\mathcal{G}_{7}$ in Figure \ref{fig:network}
with $\mathcal{V}_{\text{F}}=\left\{ 2,7\right\} $. The incoming
edges of nodes with in-degree more than $2$ are highlighted by dashed
lines.}
\label{fig:FSN-VF-2-7}
\end{figure}
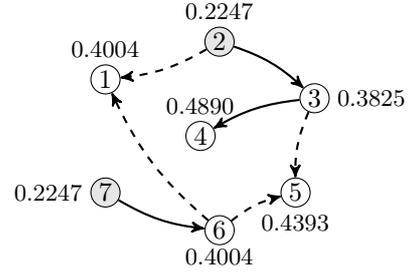

\subsection{Distributed Construction of Directed Spanning Tree}

We continue to consider distributed construction of the directed spanning
tree by examining the nodes with in-degree more than $2$ in FSN networks
in \S7.3. According to Theorem \ref{thm:spanning tree construction},
these nodes can remain incident with only one incoming edge in the
directed spanning tree construction. 
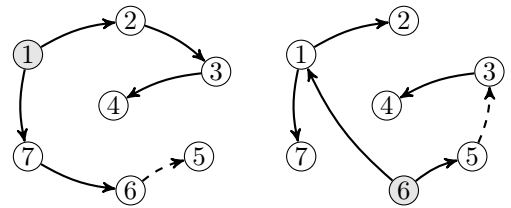
\begin{figure}
\begin{centering}
\begin{tikzpicture}[scale=0.45, >=stealth',   pos=.8,  photon/.style={decorate,decoration={snake,post length=1mm}} ]
	\node (n1) at (0,4) [circle,inner sep= 1pt, fill=black!10, draw] {1};
	\node (n2) at (3,5) [circle,inner sep= 1pt,draw] {2};
    \node (n3) at (5.5,3.5) [circle,inner sep= 1pt,draw] {3};
	\node (n4) at (2.5,2.5) [circle,inner sep= 1pt,draw] {4};
    \node (n5) at (5,1) [circle,inner sep= 1pt,draw] {5};
	\node (n6) at (3,0) [circle,inner sep= 1pt,draw] {6};
    \node (n7) at (0,1) [circle,inner sep= 1pt,draw] {7};

	\path[]
	(n1) [->,thick, bend left=12] edge node[midway, above] {} (n2); 


	\path[]
	(n1) [->,thick, bend right=12] edge node[midway, above] {} (n7); 

	\path[]
	(n3) [->,thick, bend right=12] edge node[midway, above] {} (n4); 




	\path[]
	(n6) [->,thick, dashed, bend left=12] edge node[midway, above] {} (n5); 




	\path[]
	(n2) [->,thick, bend left=12] edge node[midway, above] {} (n3); 

    \path[]
	(n7) [->,thick, bend right=12] edge node[midway, above] {} (n6); 	



\end{tikzpicture}\,\,\,\,\,\,\,\,\,\,\,\,\begin{tikzpicture}[scale=0.45, >=stealth',   pos=.8,  photon/.style={decorate,decoration={snake,post length=1mm}} ]
	\node (n1) at (0,4) [circle,inner sep= 1pt,draw] {1};
	\node (n2) at (3,5) [circle,inner sep= 1pt,draw] {2};
    \node (n3) at (5.5,3.5) [circle,inner sep= 1pt,draw] {3};
	\node (n4) at (2.5,2.5) [circle,inner sep= 1pt,draw] {4};
    \node (n5) at (5,1) [circle,inner sep= 1pt,draw] {5};
	\node (n6) at (3,0) [circle,inner sep= 1pt,fill=black!10,draw] {6};
    \node (n7) at (0,1) [circle,inner sep= 1pt,draw] {7};

	\path[]
	(n1) [->,thick, bend left=12] edge node[midway, above] {} (n2); 


	\path[]
	(n1) [->,thick, bend right=12] edge node[midway, above] {} (n7); 

	\path[]
	(n3) [->,thick, bend right=12] edge node[midway, above] {} (n4); 




	\path[]
	(n6) [->,thick, bend left=12] edge node[midway, above] {} (n5); 

	\path[]
	(n6) [->,thick, bend left=12] edge node[midway, above] {} (n1); 


	\path[]
	(n5) [->,thick, dashed, bend right=12] edge node[midway, above] {} (n3); 





\end{tikzpicture}
\par\end{centering}
\caption{A directed spanning tree of $\mathcal{G}_{7}$ in Figure \ref{fig:network}
with $\mathcal{V}_{\text{F}}=\left\{ 1\right\} $, where edge $(5,3)$
removed from the FSN network in Figure \ref{fig:FSN-VF-1-6} (left)
and a directed spanning tree of $\mathcal{G}_{7}$ in Figure \ref{fig:network}
with $\mathcal{V}_{\text{F}}=\left\{ 6\right\} $, where edge $(3,6)$
is removed from the FSN network in Figure \ref{fig:FSN-VF-1-6} (right).}
\label{fig:ST-VF}
\end{figure}

Specifically, for the case that $\mathcal{V}_{\text{F}}=\left\{ 1\right\} $,
agent $5$ has in-degree $2$. As such, one can remove one of the
edges in $\left\{ (5,3),(5,6)\right\} $; see Figure \ref{fig:ST-VF}
(left); for the case that $\mathcal{V}_{\text{F}}=\left\{ 6\right\} $,
agent $3$ has in-degree $2$. Then one can remove one of the edges
in $\left\{ (3,5),(3,6)\right\} $; see Figure \ref{fig:ST-VF} (right).

\section{Concluding Remarks \label{sec:Conclusion-Remarks}}

This paper addressed structural adaptivity problem of directed multi-agent
networks that are subject to diffusion performance and exogenous influence.
A distributed data-driven neighbor selection framework was developed
to adjust the network connectivity adaptively to enhance the propagation
of exogenous influence over the network. Both continuous-time and
discrete-time directed networks were discussed. In this direction,
reachability properties encoded in the eigenvectors of perturbed variants
of the graph Laplacian and the SIA matrix of the underlying directed
networks were extensively used. An eigenvector-based rule for neighbor
selection was proposed to derive a reduced network with better convergence
performance. Quantitative connections between eigenvectors of the
perturbed graph Laplacian and SIA matrix and relative rate of change
in agent state were then established. This connection was then utilized
to develop a local data-driven inference protocol to reduce the number
neighbors for each agent. This neighbor selection framework was further
extended for distributed construction of directed spanning trees in
directed networks.

The main results in this paper provide novel insights into the data-driven
control of multi-agent networks. Although this paper mainly discussed
the leader-follower consensus problem where the external input is
homogeneous, analogous results can be obtained for the case of heterogeneous
external input which has been extensively examined in the context
of containment control of multi-agent systems \citep{cao2012distributed,ji2008containment,liu2012necessary}.
Future works include examining networked systems with general agent
dynamics and time-varying network structures.

\bibliographystyle{plainnat}
\phantomsection\addcontentsline{toc}{section}{\refname}\bibliography{mybib}

\end{document}